\documentclass[preprint,12pt]{elsarticle}
\usepackage[utf8]{inputenc}
\usepackage{todonotes}
\usepackage{graphicx,color}
\usepackage{tikz}
\usepackage{amsmath,mathrsfs,amsfonts,latexsym,amssymb,verbatim}
\usepackage{algpseudocode}
\usepackage{cclicenses}
\usepackage{hyperref}
\usepackage{cleveref}
\usepackage{microtype}
\usepackage{relsize}
\bibliographystyle{plainurl}
%

\newcommand{\VL}[1]{}
\newcommand{\VC}[1]{#1}

\definecolor{brightcerulean}{RGB}{0,123,167}
\definecolor{brightother}{RGB}{167,123,0}
\definecolor{brightorange}{RGB}{204,82,0}
\definecolor{darkgreen}{RGB}{0,128,43}
\newcommand{\anonymous}[1]{{{#1}^\circ}}
\newcommand{\acom}[1]{{\color{brightcerulean}}}
\newcommand{\changes}[1]{{#1}}
\newcommand{\changesII}[1]{#1}

\newcommand{\setG}{{\cal X}}
\newcommand{\setN}{\mathbb{N}}

\newcommand{\assymext}[1]{#1^{\square}}

\newcommand{\reversepath}[1]{#1^{-1}}
\newcommand{\bint}{\{l,r\}^*}
\newcommand{\intersectn}{\wedge}
\newcommand{\dom}{\textrm{dom}}
\newcommand{\projection}[1]{#1^\lrcorner}

\def\X{\mathcal{X}}
\def\Y{\mathcal{Y}}
\def\IM{\{lm,rm\}^*}

\newcommand{\invstate}{\mathrm{invisible}}

\newtheorem{conjecture}{Conjecture}
\newdefinition{remark}{Remark}
\newdefinition{example}{Example}
\newdefinition{definition}{Definition}
\newtheorem{theorem}{Theorem}
\newtheorem{proposition}{Proposition}
\newtheorem{lemma}{Lemma}
\newtheorem{corollary}{Corollary}
\newproof{proof}{Proof}

\def\N{\mathbb{N}}

\newcommand{\bdot}{\mathsmaller{\mathsmaller{\mathsmaller{\bullet}}}}
\setlength{\textfloatsep}{5pt}
\setlength{\abovecaptionskip}{0pt}

\begin{document}
\begin{frontmatter}
\title{Size-varying reversible causal graph dynamics}
%
%
\author[inst1]{Pablo Arrighi}

\author[inst2]{Amélia Durbec\corref{cor1}}

\author[inst3]{Aurélien Emmanuel}

\affiliation[inst1]{organization={Université Paris-Saclay, Inria, CNRS, LMF}, city={Gif-sur-Yvette}, postcode={91190}, country={France}}

\affiliation[inst2]{organization={Université Paris-Saclay, CEA, List},city={Palaiseau}, postcode={91120}, country={France}}

\affiliation[inst3]{organization={Université d’Orléans, LIFO EA 4022}, city={Orléans}, postcode={45067},country={France}}
%


\begin{abstract}
Consider a network that evolves according to a reversible, nearest neighbours dynamics. {\em Is the dynamics allowed to vary the size of the network?} On the one hand it seems that, being the principal carriers of information, nodes cannot be destroyed without jeopardising bijectivity. On the other hand, there are plenty of bijective functions from the set of graphs to the set of graphs that are non-vertex-preserving. The question has been settled negatively---for three different reasons. Yet, in this paper we do obtain reversible local node creation/destruction---in three relaxed settings, whose equivalence we prove for robustness. We motivate our work both by theoretical computer science considerations (reversible computing, cellular automata extensions) and theoretical physics concerns (basic formalisms towards discrete quantum gravity).
\end{abstract}
\begin{keyword}
Reversible Causal Graph Dynamics \sep Reversible Cellular Automata \sep Network Growth Dynamics \sep Reversibility \sep Invertible \sep One-to-one
\end{keyword}

\end{frontmatter}

\section{Introduction}\label{sec:motivations}

Cellular Automata (CA) consist in a $\mathbb{Z}^n$ grid of identical cells, each of which may take a state in $\Sigma$. Thus the configurations are in $\Sigma^{\mathbb{Z}^n}$. The next state of a cell is given by applying a fixed local rule $f$ to the cell and its neighbours, synchronously and homogeneously across space. CA thus have a number of physics-like symmetries: shift-invariance (the dynamics acts everywhere and everywhen the same) and causality (information has a bounded speed of propagation). They constitute one of the most established models of computation that accounts for Euclidean space: they are widely used to model spatially-dependent computational problems (self-replicating machines, synchronization\ldots), and multi-agents phenomena (traffic jams, demographics\ldots). But their origin lies in Physics, where they are constantly used to model waves or particles (e.g. as numerical schemes for Partial Differential Equations). 

Since both quantum and classical mechanics are reversible, it was natural to endow CA with this other, physics-like symmetry. The study of Reversible CA (RCA) was further motivated by the promise of lower energy consumption in reversible computation. RCA have turned out to have an elegant mathematical theory, which relies on a topological characterization in order to prove for instance that the inverse of a CA is a CA \cite{Hedlund}---which clearly is non-trivial due to \cite{KariRevUndec}.
Another fundamental property of RCA is that they can be expressed as a finite-depth circuits of local reversible permutations or `blocks' \cite{KariBlock,KariCircuit,Durand-LoseBlock}. 

Causal Graph Dynamics (CGD) \cite{ArrighiCGD,ArrighiIC,ArrighiCayleyNesme,MartielMartin,Maignan} are a twofold extension of CA. First, the underlying grid is extended to arbitrary bounded-degree graphs.  Informally, this means that each vertex of a graph $G$ may take a state among a set $\Sigma$, so that configurations are in $\Sigma^{V(G)}$, whereas edges dictate the locality of the evolution: the next state of a vertex $v$ depends only upon the subgraph $G_u^r$ induced by the vertices lying at graph distance at most $r$ of $u$. Second, the graph itself is allowed to evolve over time. Informally, this means that configurations are in the union of $\Sigma^{V(G)}$ for every possible bounded-degree graph $G$, i.e. $\bigcup_G\Sigma^{V(G)}$. This leads to a model where the local rule $f$ is applied synchronously and homogeneously on every possible sub-disk of the input graph, thereby producing small patches of the output graphs, whose union constitutes the output graph. Figure \ref{fig:CGDIdea} illustrates the concept.
\begin{figure}[h]
\begin{center}
\includegraphics[scale=.25]{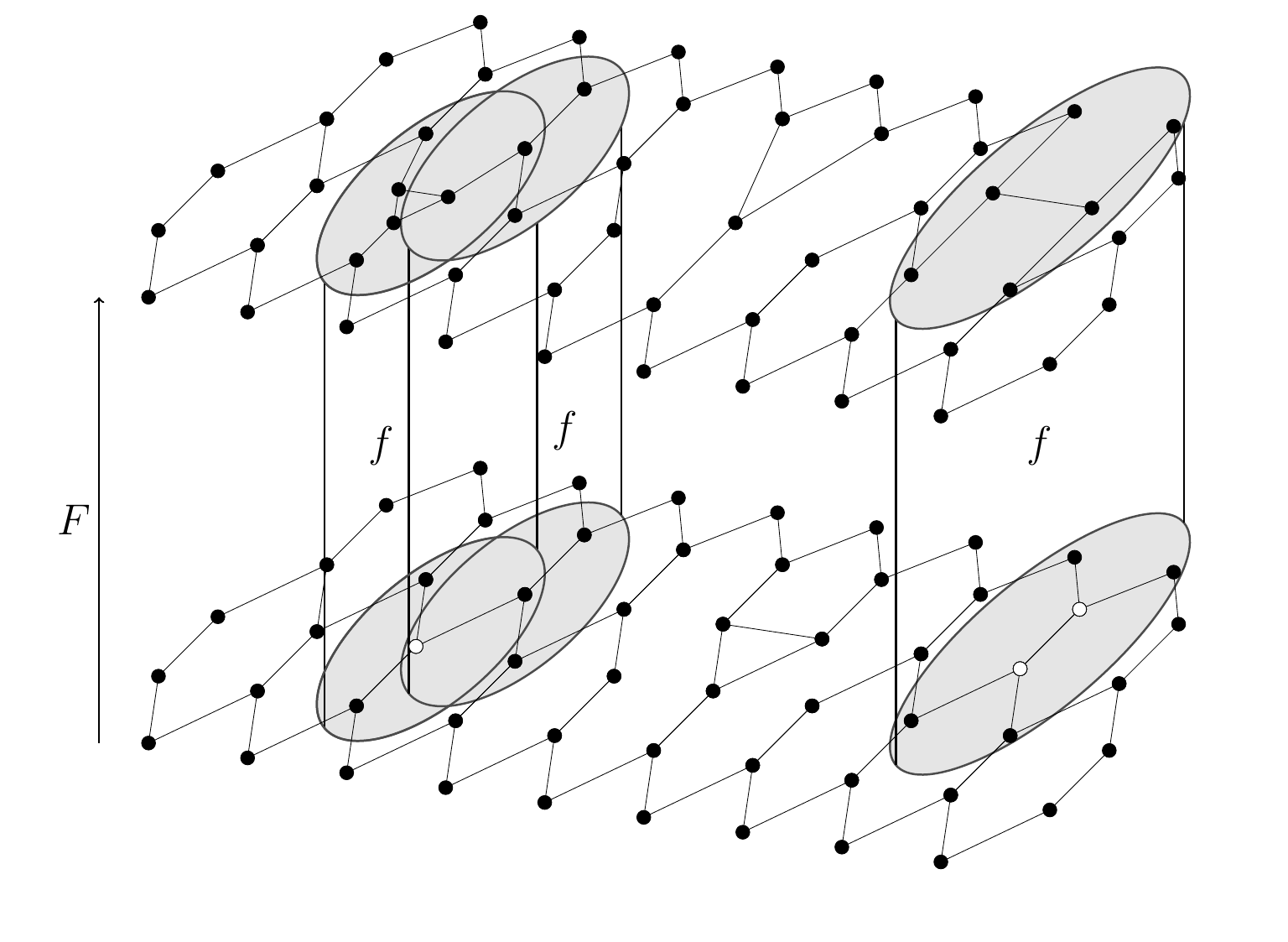}
\end{center}
\caption{\label{fig:CGDIdea} {\em Informal illustration of Causal Graph Dynamics.}
The entire graph evolves into another according to a global function $F$. But this evolution is causal (information propagates at a bounded speed) and homogeneous (same causes lead to same effects). This has been shown to be equivalent to applying a local function $f$ to every sub-disk of the input graphs, producing small output graphs whose union make up the output graph.}
\end{figure}
CGD were motivated by the countless situations featuring nearest-neighbours interactions with time-varying neighbourhood (e.g. agents exchange contacts, move around\ldots). Many existing models (of complex systems, computer processes, biochemical agents, economic agents, social networks\ldots) fall into this category, thereby generalizing CA for their specific sake (e.g. self-reproduction as in \cite{TomitaSelfReproduction}, discrete general relativity \`a la Regge calculus \cite{Sorkin}, etc.). CGD are a theoretical framework, for these models. Some graph rewriting models, such as Amalgamated Graph Transformations \cite{BFHAmalgamation} and Parallel Graph Transformations \cite{EhrigLowe,TaentzerHL}, also work out rigorous  ways of applying a local rewriting rule synchronously throughout a graph, albeit with a different, category-theory-based perspective, of which the latest and closest instance is \cite{Maignan}. In \cite{ArrighiRCGD,ArrighiBRCGD} one of the authors studied CGD in the reversible regime, i.e. Reversible CGD. Specific examples of Reversible CGD had been described in \cite{MeyerLGA,MeyerLove}. 

From a theoretical Computer Science perspective, the point was to generalize RCA theory to arbitrary, bounded-degree, time-varying graphs. Indeed the two main results in \cite{ArrighiRCGD,ArrighiBRCGD} were the generalizations of the two above-mentioned fundamental properties of RCA. However, the results were limited to (almost--) vertex--preserving CGD. We show that this limitation can be lifted.

From a mathematical perspective, questions related to the bijectivity of CA over certain classes of graphs (more specifically, whether pre-injectivity implies surjectivity for Cayley graphs generated by certain groups \cite{Bartholdi,CeccheriniEden,Gromov}) have received quite some attention. The present paper on the other hand provides a context in which to study ``bijectivity of CA over time-varying graphs''. We answer the question: {\em Is it the case that bijectivity necessarily rigidifies space (i.e. forces the conservation of each vertex)?} Our analysis pinpoints the assumptions that lead to this rigidification---and how to circumvent them.

From a theoretical physics perspective, the question whether the reversibility of small scale physics (quantum mechanics, Newtonian mechanics), can be reconciled with the time-varying topology of large scale physics (relativity), is a major challenge. This paper provides a rigorous discrete, toy model where reversibility and time-varying topology coexist and interact---in a way which does allow for space expansion. In fact these results pave the way for Quantum Causal Graph Dynamics \cite{ArrighiQCGD,ArrighiQNT} allowing for vertex creation/destruction---which in turn could provide a rigorous basic formalism to use in Quantum Gravity \cite{QuantumGraphity1,QuantumGraphity2}.

\section{The conflict between reversibility and node creation/destruction}
\noindent {\bf The question.} Consider a network that evolves reversibly, according to nearest neighbours interactions. {\em Can its dynamics create/destroy nodes?}

\noindent {\bf Issue 1.} Because the network evolves according to nearest neighbours interactions only, the same local causes must produce the same local effects. In other words, if the neighbourhood of a node $u$ looks the same as that of a node $v$, then the same must happen at $u$ and $v$. Therefore the names of the nodes must be irrelevant to the dynamics. Surely the most natural way to formalize this invariance under isomorphisms is as follows. Let $F$ be the function from graphs to graphs that captures the time evolution; we require that for any renaming $R$, $F\circ R=R\circ F$. But it turns out that this commutation condition forbids node creation, even in the absence of any reversibility condition---as proven in \cite{ArrighiCGD}. Intuitively, say that a node $u\in V(G)$ creates a node $u'\in V(G')$ through $F$, and consider an $R$ that just interchanges the $u'$ name for some fresh name $v'$. Then $F(RG)=F(G)$, which has no $v'$, differs from $RF(G)$, which has a $v'$. 

\noindent {\bf Issue 2.} The above issue can be fixed by making it explicit that new names are constructed from the locally available ones (e.g. $u'$ from $u$ in the above example), so that renaming the new names (e.g. $u'$ into $v'$ through some $R'$) necessarily implies having renamed the available ones (e.g. $u$ into $v$ through $R$). Then invariance under isomorphisms is formalized by requiring that for any renaming $R$, there exists $R'$, such that $F\circ R=R'\circ F$. But it turns out that this conjugation condition, taken together with reversibility, still forbids node creation, as proven in \cite{ArrighiIC}. To get a taste of the difficulty, say that a node $u$ creates two nodes $u.l$ and $u.r$. Then $F^{-1}$ should merge these back into a single node $u$. However, we expect $F^{-1}$ to have the same conjugation property that for any renaming $S$, there exists $S'$, such that $F^{-1}\circ S=S'\circ F^{-1}$. Consider an $S$ that leaves $u.l$ unchanged, but renames $u.r$ into some fresh $v'$. What will be the name of the merger between $u.l$ and $v'$ through $F^{-1}$, now? What should $S'$ do upon $u$ in order to obtain that name? Generally speaking, node creation between $G$ and $F(G)$ increases the naming space and endangers the bijectivity that should hold between $\{RG\}$ the set of renamings of $G$ and $\{RF(G)\}$ the set of renamings of $F(G)$. 

\noindent {\bf Issue 3.}  Both the above no-go theorems rely on naming issues. In order to bypass them, one may drop names altogether, and work with graphs modulo isomorphisms. Doing this, however, is quite inconvenient. Basic statements such as ``the neighbourhood of $u$ determines what will happen at $u$''---needed to formalize the fact the network evolves according to nearest-neighbours interactions---are no longer possible if we cannot speak of $u$.\\
Still, having chosen networks that are not mere graphs (edges are between the ports of the nodes) we can designate a node relative to another by giving a path from one to the other (the successive ports that lead to it). It then suffices to have one privileged pointed vertex acting as `the origin', to be able to designate any vertex relative to it\footnote{Having an origin is also mandatory for defining the Gromov-Hausdorff-Cantor metric, which allows for a topological characterization of these dynamics in the style of Curtys-Hedlund for Cellular Automata.}. Then, the invariance under isomorphisms is almost trivial, as nodes have no name. The one thing that remains to enforce is invariance under shifting the origin. Namely, if $X_u$ stands for $X$ with its origin shifted along the path $u$, then there must exist some successor function $R_X:V(X)\longrightarrow V(F(X))$ such that $F(X_u)=F(X)_{R_X(u)}$. But it turns out that this seemingly mild condition, when taken together with reversibility, again forbids node creation but for a finite number of graphs---as was proven in \cite{ArrighiRCGD}.\\
Intuitively, node creation between $X$ and $F(X)$ augments the number of ways in which the graph can be pointed at, i.e. the number of possible origins. This, except in a handful of cases\footnote{Sometimes it can happen that $F(X)$ has a symmetry $F(X)_{u'}=F(X)_{v'}$ with $u'\neq v'$, in which case both $u'$ and $v'$ can be created by a single node $u$. These symmetries, however, are global properties which the local rule cannot see. In order to warranty reversibility, the dynamics must therefore be vertex-preserving except in a finite number of cases.}, again endangers the bijectivity that should hold between the sets of shifts $\{X_u\}_{u\in V(X)}$ and $\{F(X)_{u'}\}_{u'\in V(F(X))}$. 

\noindent {\bf Three solutions and a plan.} In \cite{MeyerLGA}, Hasslacher and Meyer (HM) \ref{ex:HM} describe a great example of a nearest-neighbours driven size-varying dynamics. The HM example consists of particles moving around a circle, with collisions causing the circle to shrink or grow, according to the way in which particles cross. It is the long-term behaviour of the HM example which makes it so interesting. Consider a circular graph picked at random: one would expect that the circle length will erratically grow and shrink, behaving mostly like a random walk, albeit ultimately periodic. This would be in line with the intuition of there being a 50\% chance for the circle to grow or shrink at each particle encounter, as well as the fact that the number of shrinking graphs must counterbalance the number of growing graphs. Yet, most of the time, the circles end up steadily growing toward infinity, thereby inducing an `arrow of time without past hypothesis', as we recently established in \cite{TimeArrow}.\\
The point is that the HM example is clearly non-vertex-preserving, but it nevertheless seems reversible, in some sense which was left informal in \cite{MeyerLGA} and demands formalization.\\
A first approach in order to simulate the HM example (or any other size-varying dynamics) by means of a Reversible CGD, would be to embed it within a strictly reversible, vertex-preserving dynamics---where each `visible' node of the network is equipped with its own reservoir of `invisible' nodes---in which it can tap in order to create a visible node. For this scheme to iterate, and for the created nodes to be able to create nodes themselves, it is convenient to shape the reservoirs as infinite binary trees. Since the reservoirs are and stay the same everywhere, they can then be abstracted away from the model. The obtained relaxed setting thus circumvents the above three issues. Section \ref{sec:IMCGD} presents this solution.

A second, perhaps more direct approach to formalizing size-varying dynamics such as the HM example, is to work with pointed graphs modulo just when they are useful, e.g. for stating causality, and to drop the pointer everywhen else, e.g. for stating reversibility. This relaxed setting reconciles reversibility and local creation/destruction---it can be thought of as a direct response to Issue 3. Section \ref{sec:ACGD} presents this solution. We also prove that this second solution is as powerful as the first, in the sense that they simulate each other.

A third approach is to work with standard, named graphs.  Remarkably, it turns out that naming our nodes within the algebra of variables over everywhere-infinite binary trees directly resolves Issue 2. Section \ref{section:ncgd} presents this solution. 
Again we prove that this third solution is as powerful as the first, in the sense of reciprocal simulation.

It therefore appears that the question whether reversibility allows for local node creation/destruction, is formalism-dependent. But then, how sensitive to formalism can it be? \\ 
Fortunately, we were able to prove that the three proposed relaxed settings are essentially equivalent (except for small graphs behaviours), as shown by means of simulation theorems. Thus, we have reached a robust formalism allowing for both the features. Section \ref{sec:RCGD} recalls the definitions and results that constitute our point of departure. Section \ref{sec:conclusion} summarizes the contributions and perspectives. {\em This paper is the journal version of an extended abstract \cite{ArrighiCreation}, providing some corrections, as well as full-blown proofs and details.}

\begin{figure}[h]
\begin{center}
\includegraphics[scale=2]{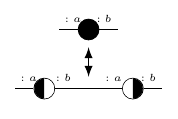}
\end{center}
\caption{\label{fig:HMScattering} {\em Hasslacher-Meyer collision step}}
\end{figure}
\begin{example}[Hasslacher-Meyer]\label{ex:HM}
The Hasslacher-Meyer dynamics consists of alternating two steps: 1) An advection step which translates particles according to their orientation 2) A collision step as shown in figure \ref{fig:HMScattering}, creating or destroying vertices according to the distance between the two particles.
\end{example}

\section{In a nutshell\thinspace: Reversible Causal Graph Dynamics }\label{sec:RCGD}

{\em The following provides an intuitive introduction to Reversible CGD. A thorough formalization was given in \cite{ArrighiCayleyNesme}, and is reproduced in \ref{app:formalism} for convenience}.

\noindent {\bf Networks.} Whether for CA over graphs \cite{PapazianRemila}, multi-agent modeling \cite{Danos200469} or agent-based distributed algorithms \cite{Chalopin}, it is common to work with graphs whose nodes have numbered neighbours. Thus our 'graphs' or networks are the usual, connected, undirected, possibly infinite, bounded-degree graphs, but with a few additional twists:
\begin{itemize}
\item[$\bullet$] The set $\pi$ of available ports to each vertex is finite.
\item[$\bullet$] The vertices are connected through their ports: an edge is an unordered pair $\{u : a, v : b\}$, where $u,v$ are vertices and $a,b\in \pi$ are ports. Each port is used at most once per node: if both $\{u: a,v: b\}$ and $\{u: a, w: c\}$ are edges, then $v=w$ and $b=c$. As a consequence, the degree of the graph is bounded by $|\pi|$.
\item[$\bullet$] The vertices are given labels taken in the finite set $\Sigma$, so that they may carry an internal state just like the cells of a CA. 
\item[$\bullet$] This labelling is partial, so that we may express our partial knowledge about part of a graph. 
\end{itemize} 
The set of all graphs (see Figure \ref{fig:graphs}$(a)$) having ports $\pi$, vertex labels $\Sigma$ is denoted $\mathcal{G}_{\Sigma,\pi}$. \vspace{6pt}\\
\noindent {\bf Compactness.}  There are two main approaches to CA. The one with a local rule is the most constructive, but CA can also be defined in a more topological way as being exactly the shift-invariant continuous functions from $\Sigma^{\mathbb{Z}^n}$ to itself, with respect to the Gromov-Hausdorff-Cantor metric. Through a compactness argument, the two approaches are equivalent. This topological approach carries through to CA over graphs, and simplifies the proofs a great deal. The compact metric space of graphs is obtained by dropping names right after having introduced a pointer \cite{ArrighiCayleyNesme}:
\begin{itemize}
\item[$\bullet$] The graphs have a privileged pointed vertex playing the role of an origin, so that any vertex can be referred to relative to the origin, via a sequence of ports that lead to it. 
\item[$\bullet$] The pointed graphs are considered modulo isomorphism, so that only the relative position of the vertices can matter.
\end{itemize} 
The set of all pointed graphs modulo (see Figure \ref{fig:graphs}$(c)$) is denoted $\mathcal{X}_{\Sigma,\pi}$.\\
If, instead, we drop the pointers but still take equivalence classes modulo isomorphism, we obtain just graphs modulo, aka `anonymous graphs'. The set of all anonymous graphs (see Figure \ref{fig:graphs}$(d)$) is denoted $\mathcal{\anonymous{X}}_{\Sigma,\pi}$.\vspace{6pt}\\
Again a thorough formalization of these graphs was given in \cite{ArrighiCayleyNesme}, and is reproduced in \ref{app:graphs} for the sake of mathematical rigor. For the sake of this paper, however, Figure \ref{fig:graphs} summarizes what there is to know about the definition of pointed graphs modulo.\vspace{6pt}\\
\begin{figure}
\begin{center}
\includegraphics[scale=.9,clip=true,trim=0cm 0.22cm 0cm 0.1cm]{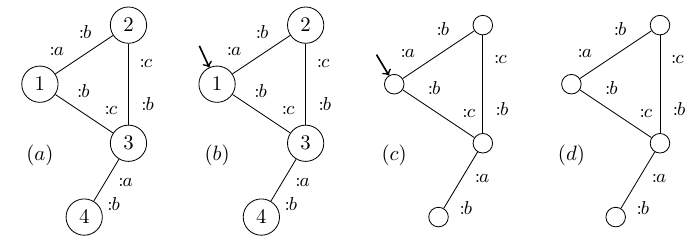}
\end{center}
\caption{\label{fig:graphs} {\em The different types of graphs.} (a) A graph $G$. (b) A pointed graph $(G,1)$. (c) A pointed graph modulo $X$. (d) An anonymous graph $\anonymous{X}$.}
\end{figure}

\noindent {\bf Paths and vertices.} Over pointed graphs modulo isomorphism, vertices no longer have a unique identifier, which may seem impractical when it comes to designating a vertex. Fortunately, any vertex of the graph can be designated by a sequence of ports in $\Pi^*$ (where $\Pi = \pi^2$)
that leads from the origin to this vertex. 
For instance, say two vertices are designated by paths $u$ and $w$ and say that there is an edge $e=\{u: a,w: b\}$. Then, $w$ can be designated by the path $u.ab$, where ``$.$'' stands for the word concatenation. 
The origin is designated by $\varepsilon$.\vspace{6pt}\\

\noindent {\bf Operations over graphs.} Given a pointed graph modulo $X$, $X^r$ denotes the sub-disk of radius $r$ around the pointer. 
The pointer of $X$ can be moved along a path $u$, yielding $X'=X_u$. The pointer can be moved back where it was before, yielding $X=X'_{\reversepath{u}}$, where $\reversepath{u}=b_{n-1}a_{n-1}\ldots b_0a_0$ denotes the reverse of the path $u=a_0b_0\ldots a_{n-1}b_{n-1}$. We use the notation $X_u^r$ for $(X_u)^r$ i.e., first the pointer is moved along $u$, then the sub-disk of radius $r$ is taken.
For the sake of this paper, however, Figure \ref{fig:operations} illustrates the operations.
\VC{\begin{figure}[h]
\begin{center}
 \includegraphics[scale=0.5]{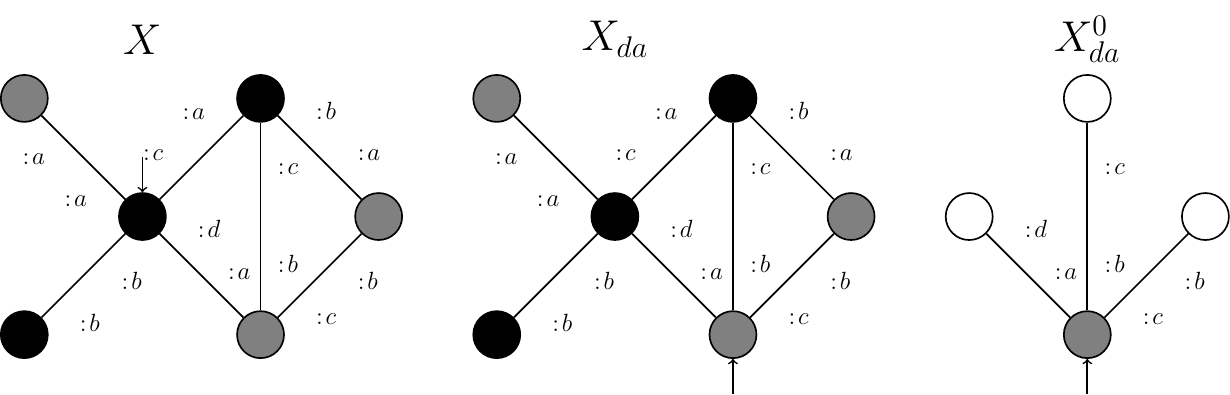}
 \end{center}
  \caption{\label{fig:operations} {\em Operations over pointed graphs modulo.} The pointer of $X$ is shifted along edge $da$, yielding $X_{da}$, and then the disk of radius $0$ around the pointer, yielding $X^0_{da}$.}
\end{figure}}

\noindent {\bf Causal Graph Dynamics.} We will now recall their topological definition. It is important to provide a correspondence between the vertices of the input pointed graph modulo $X$, and those of its image $F(X)$, which is the role of $R_X$:
\begin{definition}[Dynamics]\label{def:dynamicsmodulo}
A dynamics $(F,R_{\bullet})$ is given by
\begin{itemize}
\item[$\bullet$] a function $F\colon{\cal X}_{\Sigma,\pi}\to{\cal X}_{\Sigma,\pi}$;
\item[$\bullet$] a set of functions $R_{\bullet}=\{R_X: V(X) \to V(F(X))\ | \ X\in {\mathcal{X}}_{\Sigma,\pi}\}$.
\end{itemize}
\end{definition}
Next, continuity is the topological way of expressing causality:

\begin{definition}[Gromov-Hausdorff-Cantor metrics]\label{def:metric}
Consider the function
\begin{align*}
d:{\cal X}_{\Sigma, \pi}\times{\cal X}_{\Sigma, \pi} &\longrightarrow {\mathbb R}^+\\
(X,Y)&\mapsto d(X,Y)=0\quad \textrm{if }X=Y\\	
(X,Y)&\mapsto d(X,Y)=1/2^r\quad \textrm{otherwise}
\end{align*}
where $r$ is the minimal radius such that $X^r \neq Y^r$.\\
The function $d(.,.)$ is such that for $\epsilon>0$ we have (with $r=\lfloor - \log_2(\epsilon)\rfloor$):
$$d(X,Y)<\epsilon \Leftrightarrow X^r = Y^r.$$
It defines an ultrametric distance \cite{ArrighiCayleyNesme}.
\end{definition}

\begin{definition}[Continuity]\label{def:continuitymodulo}
A dynamics $(F,R_{\bullet})$ is said to be {\em continuous} if and only if for any $X$ and $m$, there exists $n$, such that 
$$\bullet\ (F(X))^m=(F(X^n))^m\qquad \bullet\ \dom\,R_{X}^m\subseteq V(X^n)\textrm{ and }R_{X}^m=R_{X^n}^m$$
where $R_{X}^m$ denotes the partial map obtained as the restriction of $R_X$ to the co-domain $(F(X))^m$, using the natural inclusion of $(F(X))^m$ into $F(X)$.
\end{definition}
Notice that the second condition states the continuity of $R_{X}$, for all $X$ in $\X$. A key point is that by compactness, continuity entails uniform continuity, meaning that $n$ does not depend upon $X$---so that the above really expresses that information has a bounded speed of propagation.\\
We now express that the same causes lead to the same effects:
\begin{definition}[Shift-invariance]
A dynamics $(F,R_{\bullet})$ is said to be {\em shift-invariant} if for every $X$,  $u\in V(X)$, and $v\in V(X_u)$, 
$$\bullet\ F(X_u)=F(X)_{R_X(u)}\qquad \bullet\ R_X(u.v)=R_X(u).R_{X_u}(v)$$
\end{definition}

Finally, we demand that graphs do not expand in an unbounded manner:
\begin{definition}[Boundedness]\label{def:boundednessmodulo}
A dynamics $(F,R_{\bullet})$ 
is said to be {\em bounded} if there exists a bound $b$ such that for any $X$ and any $w\in V(F(X))$, there exists $u\in \textrm{Im}(R_X)$ and $v\in V(F(X)_{u}^b)$ 
such that $w=u.v$.
\end{definition}
Putting these conditions together yields the topological definition of CGD:
\begin{definition}[Causal Graph Dynamics]\label{def:causal}
A CGD is a shift-invariant, continuous, bounded dynamics.
\end{definition}

As a simple example we provide an original, general scheme for propagating particles on an arbitrary network in a reversible manner:
\begin{example}[General reversible advection]\label{ex:genadv} Consider $\pi=\{a,b,\ldots\}$ a finite set of ports, and let $\Sigma={\cal P}(\pi)$ be the set of internal states, where: $\varnothing$ means  `no particle is on that node'; $\{a\}$ means `one particle is set to propagate along port $a$'; $\{a,b\}$ means `one particle is set to propagate along port $a$ and another along port $b$'\ldots. Let $s$ be a bijection over the set of ports, standing for the successor direction. Fig. \ref{fig:genadv} specifies how individual particles propagate. Basically, when reaching its destination, the particle set to propagate along the successor of the port it came from. Missing edges behave like self-loops. Applying this to all particles synchronously specifies the graph dynamics. Because each port is used at most once, this dynamics is indeed well-defined. Moreover, it is indeed reversible: choosing $s'=s^{-1}$ lets the particles travel in the opposite direction.

\end{example}

\begin{figure}[h]
\begin{center}
\includegraphics[scale=2]{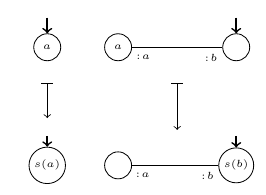}
\end{center}
\caption{\label{fig:genadv} {\em Generalized advection (Example \ref{ex:genadv})}. Diagram representing the two possible evolutions of a particle $a$, either the vertex contains no port $a$, in which case the particle remains motionless, or it moves along the port $a$. In both cases, the particle changes state. }
\end{figure}

Notice that advection is not only reversible, but time-symmetric. Indeed let $T$ be the dynamics which replaces, cell-wise, each particle $p$ by $s^{-1}(p)$.  Then $A^{-1} = TAT$. The physical analogous of $T$ is to multiply the momentum of each particle by $-1$.

\noindent {\bf Reversibility.} Invertibility is imposed in the most general and natural fashion.
\begin{definition}[Invertible dynamics]
A dynamics $(F,R_\bullet)$ is said to be invertible if $F$ is a bijection.
\end{definition}
\changes{
Unfortunately, this condition turns out to be very limiting. Indeed, except in a handful of cases, vertex creation/destruction is prohibited. The dynamics becomes essentially vertex preserving. 

\begin{example}[Hasslacher-Meyer: pointers break injectivity]
In Ex. \ref{ex:HM} the Hasslacher-Meyer dynamics was described informally, without keeping track of pointers, that is without specifying the $R_\bullet$ function. We do so in grey in Fig. \ref{fig:AHM}. Notice that case $(b)$ fails to be injective.    
\end{example}

Clearly, the root of this failure to be injective is the fact that if two vertices $u$ and $v$ merge into a vertex $w$ containing the pointer, there is ambiguity about the origin of the pointer. The only case where this ambiguity is not present is when $u$ and $v$ are symmetrical, as can be seen in Fig. \ref{fig:symetry}.

\begin{figure}
\begin{center}
\includegraphics[scale=1,clip=true]{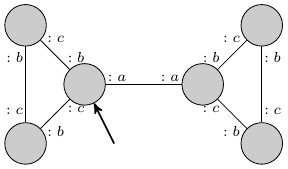}
\end{center}
\caption{\label{fig:symetry} {\em Example of a symmetrical pointed graph}. This graph is symmetrical because the $\varepsilon$ vertices are different vertices and $X=X_a$.}
\end{figure}

This limitation of reversible CGD has been captured in the following theorem---which the present paper seeks to circumvent: 
}
\begin{theorem}[Invertible implies almost-vertex-preserving]\cite{ArrighiRCGD}\label{th:preserv}\ \\
Let $(F,R_\bullet)$ be an invertible CGD. Then there exists a bound $p$, such that for any graph $X$, if $|V(X)| > p$ then $R_X$ is bijective.
\end{theorem}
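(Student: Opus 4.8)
The plan is to use that $F^{-1}$ is again a CGD, analyse how its successor map interacts with $R_\bullet$, and then bound the exceptional graphs by a compactness argument. Throughout, fix boundedness radii $b$ for $F$ and $b'$ for $F^{-1}$ (the latter relying on the fact that $F^{-1}$ carries a successor map $R'_\bullet$ making $(F^{-1},R'_\bullet)$ a CGD; this is the CGD counterpart of the fact that the inverse of a reversible CA is a CA, and is established in \cite{ArrighiRC}), and let $\beta(\rho)$ be the largest possible number of vertices in a ball of radius $\rho$ in a graph of $\mathcal{X}_{\Sigma,\Delta,\pi}$ (finite, since degrees are $\le|\pi|$).

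\emph{Orbits are transported bijectively.} Write $u\sim_X v$ when $X_u=X_v$. Combining shift-invariance of $F$ and of $F^{-1}$ with $F^{-1}\circ F=\mathrm{id}$ gives $X_u=X_{R'_{F(X)}(R_X(u))}$ for every $X$ and $u\in V(X)$: thus $R'_{F(X)}\circ R_X$ always moves a vertex within its own $\sim_X$-class, and symmetrically $R_X\circ R'_{F(X)}$ within its $\sim_{F(X)}$-class. Since $F$ is a bijection, it restricts to a bijection between $\{X_u:u\in V(X)\}$ and $\{F(X)_{u'}:u'\in V(F(X))\}$ --- i.e. between the two finite sets of $\sim$-classes --- and this bijection is precisely the one induced by $R_X$. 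Hence $R_X$ is bijective \emph{iff} it maps each $\sim_X$-class $O$ bijectively onto its image class $O'$. In particular, whenever $X$ and $F(X)$ are both rigid, every class is a singleton and $R_X$ is already bijective; so non-bijectivity requires a non-trivial symmetry and is confined ``inside a class''.

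\emph{The defect is localised.} Boundedness of $F$ and $F^{-1}$, together with the fact that a shift-invariant continuous map transports paths ``in parallel'', shows that $R'_{F(X)}\circ R_X$ and $R_X\circ R'_{F(X)}$ displace every vertex by at most a fixed distance $L$. So a collision $R_X(u)=R_X(v)$, $u\ne v$, can be chosen with $d_X(u,v)\le 2L$; writing $v=u.\delta$ with $|\delta|\le 2L$ and using $X_u=X_v$, shift-invariance forces $R_X$ to be constant along the cyclic set $u,\,u.\delta,\,u.\delta.\delta,\dots$, which by the displacement bound has $\le\beta(L)$ elements. The non-surjective case is symmetric via $F^{-1}$.

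\emph{Bounding the exceptions (the crux).} Suppose for contradiction that $R_X$ is non-bijective for graphs of unbounded size. Take $X_n$ with $|V(X_n)|\to\infty$, each carrying such a localised defect; recentre at the defect and, by compactness of $\mathcal{X}_{\Sigma,\Delta,\pi}$, extract a limit $Z$, necessarily infinite, on which --- by continuity of $(F,R_\bullet)$ and of $(F^{-1},R'_\bullet)$ --- the defect persists, so $Z$ carries a non-trivial symmetry $\delta$ along which $R_Z$ collapses. Propagating this collapse by shift-invariance makes $\textrm{Im}(R_Z)$ too sparse in $F(Z)$: via boundedness of $F$ this forces $F(Z)$ to have strictly slower ball-growth than $Z$ (in the extreme, $F(Z)$ finite while $Z$ is not). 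But $F^{-1}$, being a bounded CGD, cannot inflate ball-growth by more than a fixed multiplicative factor, whereas $F^{-1}(F(Z))=Z$ --- a contradiction --- and this yields the desired bound $p$. I expect this last step to be where the work lies: controlling the location of the defect after recentring, excluding degenerate limits (a symmetry $\delta$ of small finite order that fails to ``spread'' the collapse), and converting ``the collapse shrinks $F(Z)$'' into a clean quantitative statement that boundedness of $F^{-1}$ refutes, is the technical heart of the theorem; the earlier steps are essentially bookkeeping with the CGD axioms.
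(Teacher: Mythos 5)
There are two genuine gaps here. The first is a circularity: your argument leans from the outset on $F^{-1}$ being a CGD with its own successor map $R'_\bullet$ and its own boundedness radius $b'$. But in the development this theorem belongs to, ``invertible implies reversible'' (Theorem~\ref{th:rev}) is derived \emph{from} almost-vertex-preservation, precisely because the inverse's successor map is \emph{defined} as $S_\bullet=R_{F^{-1}(\bullet)}^{-1}$ (see the proof of Theorem~\ref{thm:inv->rev} in the appendix, where bijectivity of $S_Y$ is explicitly inherited from bijectivity of $R_Y$) --- one cannot even write $R'_\bullet$ down without already knowing $R_X$ is bijective. Without $R'_\bullet$, your identity $X_u=X_{R'_{F(X)}(R_X(u))}$, the claimed bijection between shift-orbits, and the ``symmetric via $F^{-1}$'' treatment of surjectivity all collapse. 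The correct opening move uses only injectivity of $F$: $R_X(u)=R_X(v)$ gives $F(X_u)=F(X)_{R_X(u)}=F(X)_{R_X(v)}=F(X_v)$, hence $X_u=X_v$, so collisions occur only between symmetric points (this is the setting of Lemma~\ref{lem:fusion_wave}); no property of $F^{-1}$ is needed for that.

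The second gap is that your ``crux'' step does not close. A shift-invariant $k$-to-one collapse that is Lipschitz in both directions is perfectly compatible with $Z$ and $F(Z)$ having identical ball-growth (think of $n\mapsto\lfloor n/2\rfloor$ between two copies of a bi-infinite line: both grow linearly), so ``$\mathrm{Im}(R_Z)$ is too sparse, hence $F(Z)$ grows strictly slower, hence boundedness of $F^{-1}$ is violated'' is not a contradiction that boundedness alone can deliver; for infinite limits $Z$ there is no counting argument to fall back on either. The mechanism that actually bounds the exceptional graphs --- and which your sketch never touches --- is the \emph{asymmetric extension} (Definition~\ref{def:asymext}, Lemmas~\ref{lem:borned_symmetrical_fusion} and~\ref{lem:local_asymmetry2}): given a large $X$ with $X_u=X$ and $R_X(u)=R_X(\varepsilon)$, perturb $X$ beyond the continuity radius so as to break the symmetry while keeping the relevant disk around the origin intact; continuity preserves $R_{{}^\Box X}(u)=R_{{}^\Box X}(\varepsilon)$, injectivity of $F$ then forces ${}^\Box X_u={}^\Box X$, contradicting the asymmetry just installed. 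Together with the surgery argument of Lemma~\ref{lem:borned_symmetrical_fusion} for the genuinely symmetric case, this is what yields the bound $p$; your compactness-plus-growth-rate scheme is not a workable substitute, and your preliminary ``uniform displacement bound $L$'' is itself unjustified (bounded scattering gives $|R_X(u)|\leq s|u|$, a multiplicative bound from the origin, not an additive bound on how far a vertex moves).
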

\changes{Still, one should notice that although this theorem holds for every CGD over ${\cal X}_{\Sigma,\pi}$, it may sometimes fail to hold for partially-defined CGD, i.e. having domain ${\cal S}\subset{\cal X}_{\Sigma,\pi}$. Such graph subsets can, for instance, be defined by forbidding a given set of patterns to occur, cf. the notion of graph subshifts \cite{ArrighiSubshifts}. Throughout the paper we will make it clear which results could be threatened by a particular choice of ${\cal S}$.}

On the face of it reversibility is stronger a condition than invertibility:
\begin{definition}[Reversible Causal Graph Dynamics]
A CGD $(F,R_{\bullet})$ is {\em reversible} if there exists $S_\bullet$ such that $(F^{-1} ,S_{\bullet})$ is a CGD.
\end{definition}
Fortunately, invertibility gets you reversibility:
\begin{theorem}[Invertible implies reversible]\cite{ArrighiRCGD}\label{th:rev}
If $(F,R_\bullet)$ is an invertible CGD, then $(F,R_\bullet)$ is reversible.
\end{theorem}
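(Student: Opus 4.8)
The goal is to exhibit a map $S_\bullet$ turning $(F^{-1},S_\bullet)$ into a CGD, i.e.\ satisfying the three clauses of Definition~\ref{def:causal}: continuity, shift-invariance and boundedness. My plan is to get continuity of $F^{-1}$ essentially for free from compactness, to define $S_\bullet$ as the almost-everywhere-defined inverse of $R_\bullet$ using Theorem~\ref{th:preserv}, to obtain shift-invariance and boundedness by transporting the corresponding properties of $(F,R_\bullet)$ across this inversion, and to concentrate the real effort on the continuity of $S_\bullet$.

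First, continuity of the function $F^{-1}$. The space $\mathcal{X}_{\Sigma,\Delta,\pi}$ is compact metric --- this being the whole reason for working with pointed graphs modulo --- and $F$ is a continuous bijection of it onto itself; a continuous bijection from a compact space onto a Hausdorff space is a homeomorphism, so $F^{-1}$ is continuous, which is exactly the first bullet of Definition~\ref{def:continuitymodulo} for $(F^{-1},S_\bullet)$ (it does not involve $S_\bullet$). By uniform continuity on the compact space, the associated modulus $n(m)$ is moreover graph-independent.

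Next I define $S_\bullet$. By Theorem~\ref{th:preserv} there is a bound $p$ with $R_X\colon V(X)\to V(F(X))$ bijective whenever $|V(X)|>p$; for $Y$ with $|V(F^{-1}(Y))|>p$ set $S_Y:=\big(R_{F^{-1}(Y)}\big)^{-1}$. Only finitely many pointed graphs modulo have at most $p$ vertices, so only finitely many $Y$ remain; on each, shift-invariance already forces $S_Y$ up to path-equivalence, since $F^{-1}(Y_{u'})$ is again a pointing of $\widetilde{F^{-1}(Y)}$ (as $F$, being shift-invariant, descends to anonymous graphs and is bijective) and must therefore be $F^{-1}(Y)$ with its pointer moved along some path, which we declare to be $S_Y(u')$. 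With this definition shift-invariance of $(F^{-1},S_\bullet)$ is immediate: writing $Y=F(X)$ and $u'=R_X(u)$ with $u=S_Y(u')$, shift-invariance of $(F,R_\bullet)$ gives $Y_{u'}=F(X)_{R_X(u)}=F(X_u)$, hence $F^{-1}(Y_{u'})=X_u=F^{-1}(Y)_{S_Y(u')}$, and the cocycle identity $S_Y(u'.v')=S_Y(u').S_{Y_{u'}}(v')$ follows by the same substitution from $R_X(u.v)=R_X(u).R_{X_u}(v)$ and composition of inverses. Boundedness is then almost free: for $Y$ with large preimage $S_Y$ is surjective, so every $w\in F^{-1}(Y)$ equals $w.\varepsilon$ with $w\in\textrm{Im}(S_Y)$ and $\varepsilon\in F^{-1}(Y)_w^0$; enlarging the bound $b$ to absorb the finitely many small graphs yields boundedness in general.

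The substantial step is the continuity of $S_\bullet$, i.e.\ the second bullet of Definition~\ref{def:continuitymodulo} for $(F^{-1},S_\bullet)$: for each $Y$ and $m$ some finite disk $Y^n$ must both contain $\dom\,S_Y^m$ and determine $S_Y^m$. The plan is, for $Y=F(X)$ with $|V(X)|>p$, to invert $R_X$ \emph{locally}. One first needs a bounded-expansion estimate for $(F,R_\bullet)$ --- a graph-independent $n_0$ with $R_X(V(X^m))\subseteq V(F(X)^{n_0(m)})$ --- which follows from boundedness, shift-invariance, continuity and compactness; this already gives $\dom\,S_Y^m\subseteq V(Y^{n_0(m)})$. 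One then takes $n$ slightly larger and uses continuity of $R_\bullet$ (the part of $R_X$ landing in $X^m$ is read off a bounded disk of $X$) together with continuity of $F^{-1}$ (that disk of $X=F^{-1}(Y)$ is itself read off a bounded disk of $Y$); since $R_X$ is a \emph{bijection}, its inverse on the relevant patch is determined by exactly the same data, so $S_Y^m=S_{Y^n}^m$. I expect the bookkeeping of these nested radii, and the verification that the finitely many exceptional small graphs do not break uniformity, to be the only delicate part; everything else is either the one-line compactness argument or routine substitution into the axioms of $(F,R_\bullet)$.
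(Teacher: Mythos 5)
Your overall strategy is sound and is essentially the one this line of work uses; note, though, that the paper does not actually prove Theorem~\ref{th:rev} here --- it imports it from \cite{ArrighiRC} --- and the only in-paper proof of this flavour is that of the IMCGD analogue, Theorem~\ref{th:imrev} in Appendix~\ref{subsec:IMCGDrev}. Your skeleton matches that proof: $S_\bullet:=R_{F^{-1}(\bullet)}^{-1}$, continuity of $F^{-1}$ from ``continuous bijection of a compact metric space is a homeomorphism'', shift-invariance and boundedness by substitution. Two remarks. First, you are in fact \emph{more} careful than the paper's Theorem~\ref{th:imrev} proof on the continuity of $S_\bullet$ (the second bullet of Definition~\ref{def:continuitymodulo}), which that proof silently subsumes under the continuity of $F^{-1}$; your plan via the bounded-scattering estimate (Lemma~\ref{BS lemma} and its corollary in the appendix) is the right way to discharge it. Second, the one place where you assert more than you prove is the treatment of the finitely many exceptional graphs: defining $S_Y(u')$ by ``$F^{-1}(Y_{u'})$ is a pointing of $\widetilde{F^{-1}(Y)}$'' presupposes that the induced map $\widetilde{F}$ on anonymous graphs is \emph{injective}, which does not follow from injectivity of $F$ alone when $u'\notin\mathrm{Im}(R_X)$. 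It does follow, but via an argument you omit: by Theorem~\ref{th:preserv}, for $|V(X)|>p$ the map $R_X$ is onto $V(F(X))$, so $F$ sends the full set of pointings of $\widetilde{X}$ \emph{onto} that of $\widetilde{F(X)}$, which (with injectivity of $F$) rules out a second anonymous class mapping to the same image; the remaining case of two small classes colliding is excluded because $\widetilde{F}$ restricts to a surjection of the finite set of small anonymous classes onto itself, hence a bijection. With that lemma supplied, and the cocycle identity checked separately on the exceptional graphs (where the substitution $u'=R_X(u)$ is unavailable), your proof goes through.
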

\changes{(Notice that this theorem was only proven for CGD over ${\cal X}_{\Sigma,\pi}$. We conjecture that it still holds for CGD over graph subshifts.)}

\section{The invisible matter solution}\label{sec:IMCGD}
\subsection{Definition}
Reversible CGD are vertex-preserving. Still, we could think of using them to simulate a non-vertex-preserving dynamics by distinguishing `visible' and `invisible matter', and making sure that every visible node is equipped with its own reservoir of `invisible' nodes---in which it can tap. For this scheme to iterate, and for the created nodes to be able to create nodes themselves, it is convenient to shape the reservoirs as everywhere infinite binary trees.

\begin{definition}[Invisible matter graphs]\label{def:IMGraphs}\ \\
Consider ${\cal X}={\cal X}_{\Sigma,\pi}$, ${\cal T}={\cal X}_{\{\invstate\},\{m,l,r\}}$ and ${\cal X}'={\cal X}_{\Sigma\cup\{\invstate\},\pi\cup\{m,l,r\}}$, assuming that $\{\invstate\}\cap \Sigma=\emptyset$ and $\{m,l,r\}\cap \pi=\emptyset$. Let $T\in{\cal T}$ be the infinite binary tree whose origin $\varepsilon$ has a copy of $T$ at vertex $lm$, and another at vertex $rm$. Every $X\in {\cal X}$ can be identified to an element of ${\cal X}'$ obtained by 1/ attaching an instance of $T$ at each vertex through path $mm$, and 2/ closing under shift-invariance, thereby allowing that the pointer be located within one such $T$. The hereby obtained graphs will be denoted by ${\cal Y}$ and referred to as {\em invisible matter graphs}.
\end{definition}
For an example of an invisible matter graph, see Fig. \ref{fig:IMScattering}.

We will consider some CGD restricted to ${\cal Y}$, which we will call Invisible Matter CGD. Beforehand let us state two properties we want them to abide.\\
First, we want them trivial as soon as we dive deep enough into the invisible matter: 
\begin{definition}[Invisible matter quiescence]\label{def:IMQ}
A dynamic $(F,R_{\bullet})$ over ${\cal Y}$ is said {\em invisible matter quiescent} if there exists a bound $b$ such that, for all $Y \in {\cal Y}$, and for all $s,t$ in $\{lm,rm\}^*$, we have $|s|\geq b \implies R_{Y_{mms}}(t) = t $ and $\sigma_{F(Y_mms)}(t)=invisible$. 
\end{definition}
Essentially, if the pointer is deep enough in the invisible matter, then any tree of invisible matter above it is preserved (vertices preserve their paths and states).\\
Second, since vertex creation and destruction can be done by exhibiting or burying invisible matter, we restrict ourselves to dynamics that preserve all vertices:
\begin{definition}[Vertex preservation]\label{def:matterpreservation}
We say that a dynamic $(F,R_{\bullet})$ on ${\cal Y}$ {\em preserves vertices} if for all $Y \in {\cal Y}$, $R_Y$ is bijective.
\end{definition}

Notice that this condition is similar to boundedness, as it prevents nodes from splitting infinitely.

\begin{definition}[Invisible Matter Causal Graph Dynamics]\label{def:IMcausal}\ \\
A CGD over ${\cal Y}$ is said to be an IMCGD if and only if it is vertex-preserving and invisible matter quiescent.
\end{definition}

Fortunately, we are indeed able to encode non-vertex-preserving dynamics in the visible sector of an invertible IMCGD.

In order to illustrate this, we will take the basic but fascinating HM example \cite{MeyerLGA}. Again, this example features particles propagating, just as in Ex. \ref{ex:genadv}. However, whenever two particles meet the graphs grows or shrink, depending on how exactly they cross, as prescribed by a local permutation of two patterns. Strangely enough, in the long run, it typically just grows, thereby breaking the symmetry between past and future \cite{TimeArrow}. We have suspected, for quite a while, that the explanation for this strange behaviour was that the HM example was not truly reversible, in any rigorous sense. The following leads us to discard this explanation: here is a fully reversible implementation of the HM example, in the IMCGD formalism.

\begin{example}
    Let us consider $\mathcal{Y}$ the set of invisible matter graphs such that $\Sigma = \{\emptyset,a,b,ab,invisible\}$ and $\pi=\{a,b\} \cup \{m,l,r\}$. This can be seen as the set of graphs of degree at most two, with two types of particle $a$ and $b$ present on their respective ports. We can simulate the HM collision step by an IMCGD where the visible vertices behave in the following way :
    \begin{itemize}
        \item If vertex $v$ is such that $\sigma(v)=ab$, meaning that it holds both particles, then it simulates a split by lifting a vertex from the invisible matter, i.e. $R(v.mm)=R(v).ab$, whilst maintaining connections left  $R(v).ba = R(v.ba)$ and right $R(v.mm).ab = R(v.ab)$.
        The particles also separate according to $\sigma(R(v))=b$ and $\sigma(R(v).ab)=a$.
        The invisible matter is then restored by $R(v).mmt = R(v.mmlmt)$ and $R(v).abmmt=R(v.mmrmt)$ for all $t \in \{lm,rm\}^*$.
        \item If there is a pair of vertices $u,v$ such that $\sigma(v) = b$, $\sigma(u) = a$ and $v.ab = u$, then the merger of these vertices is implemented by following the inverse steps as the one described above.
        \item Other vertices are left unchanged.
    \end{itemize}
     Fig. \ref{fig:IMScattering} provides a more intuitive representation of how this dynamics operates.
\end{example}

\begin{figure}[h]
\begin{center}
\includegraphics[scale=1,clip=true,trim=0.5cm 0.5cm 2cm 0.4cm]{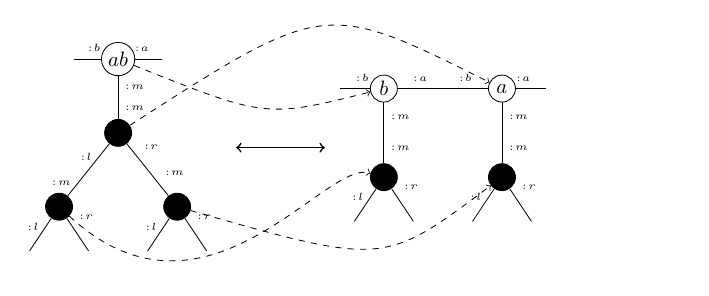}
\end{center}
\caption{\label{fig:IMScattering} {\em HM example's collision step with pointers and invisible matter.} Black vertices are `invisible'. The dotted lines show how vertices are mapped through this step. In particular, this tells where to place the pointer in the image.}
\end{figure}

\begin{example}[Invisible Matter HM]\label{ex:IMHM}
We alternate 1. the advection step of Example \ref{ex:genadv} with 2. the collision step shown in Fig. \ref{fig:IMScattering}. The composition of these two stages constitutes an invertible IMCGD.
\end{example}

\changes{
The following lemma is an indicator of the robustness of this definition: 
\begin{lemma}[Composability of IMCGD]{\label{lem:im-composability}}
  Let $(F,R_\bullet)$ and $(F',S_\bullet)$ be two IMCGD over $\mathcal{Y}$. Then $(F'\circ F,T_\bullet)$ with $T_Y=S_{F(Y)}\circ R_Y$ is also an IMCGD over $\mathcal{Y}$. 
\end{lemma}
\begin{proof}
In \cite{ArrighiCayleyNesme} it is proven that if $(F,R_\bullet)$ and $(F',S_\bullet)$ are CGD, then $(F'\circ F,T_\bullet)$ is also a CGD. Since both $S_\bullet$ and $R_\bullet$ are bijective, so is their composition and so $(F'\circ F,T_\bullet)$ is vertex-preserving.\\
It remains to prove that 
$(F'\circ F,T_\bullet)$, is invisible-matter quiescent. Let $b, b'$ be the bounds for $F$ and $F'$. We will show that for all $s'',t\in \{lm,rm\}^*$ with $|s''|\geq b''=b+b'$, we have  $T_{Ymms''}(t)=t$ and $\sigma_{F'(F(Y_{mms''}))}(t)=invisible$. We let $mms''=mm.s.s'$ with $|s|>b$ and $|s'|>b'$. As prior remarks we notice that $R_{Y_{mms''}}(t)=t$ and $R_{Y_{mms}}(s')=s'$ by the invisible matter quiescence of $F$. We also notice that $R_Y(mms)=mm\alpha$ with $\alpha\in \{lm,rm\}^*$, since by the invisible matter quiescence of $F$, $\sigma_{F(Y_{mms})}(\varepsilon)=invisible$ and $F(Y_{mms})\in \Y$. 
\begin{align*}
(S_{F(Y_{mms''})}\circ R_{Y_{mms''}})(t) &= S_{F(Y_{mms''})}(t) \quad\textrm{by the first remark}\\
&= S_{F(Y)_{R_Y(mms).R_{Y_{mms}}(s')}}(t) \quad\textrm{by shift-invariance}\\
&= S_{F(Y)_{mm\alpha.s'}}(t)\quad\textrm{by the above remarks}\\
&=t \quad\textrm{ by invisible-matter quiescence of $F'$}.
\end{align*}
\begin{align*}
    \sigma_{F'(F(Y_{mm.s.s'}))}(t)&=\sigma_{F'(F(Y)_{R_Y(mms).R_{Y_{mms}}(s')})}(t) \quad\textrm{by shift-invariance}\\
    &=\sigma_{F'(F(Y)_{mm\alpha.s'})}(t) \quad\textrm{by the above remarks}\\
    &=invisible \quad\textrm{ by invisible-matter quiescence of $F'$}.
\end{align*}
\end{proof}
}

\subsection{Compactness}

In order to prove that $(F^{-1},R_{F^{-1}(\bullet)}^{-1})$ is an IMCGD, we will need to prove that it is continuous, shift-invariant, vertex-preserving and invisible matter quiescent. Among these properties, only the continuity is difficult to prove. Intuitively this property over ${\cal Y}$ is inherited from that of CGD over ${\cal X}'$. Th. \ref{th:rev}, however, relies on the compactness of ${\cal X}$, and as a matter of fact ${\cal Y}$ is not compact. Still it admits a compact closure $\overline{{\cal Y}}$, over which IMCGD have a natural, continuous extension.

\begin{definition}[Closure]\label{def:closure}
The {\em compact closure} of ${\cal Y}$ in ${\cal X}'$, denoted $\overline{{\cal Y}}$, is the subset of elements $Y'$ of ${\cal X}'$ such that, for all $r$, there exists a $Y(r)$ in ${\cal Y}$ satisfying $Y(r)^r = Y'^r$.
\end{definition}
In order to better understand this closure we establish the following preliminary result.

\begin{proposition}[Closure of visible]\label{prop:closure_visible_point}
Consider $Y'$ in $\overline{{\cal Y}}$ with $\varepsilon$ visible. Then $Y'$ is in ${\cal Y}$.
\end{proposition}

\begin{proof}
Consider $v$ visible in $Y'$. By the first part of Lemma \ref{lem:vispaths} the shortest path from $\varepsilon$ to $v$ is of the form $u\bdot t$ with $u$ in $\Pi^*$ and $t$ in $\{lm,rm\}^*$. But this $t$ needs to be the empty word, otherwise $v$ would be invisible. Therefore visible nodes form a $\Pi^*$-connected component, call it $X$. By the second part of Lemma \ref{lem:vispaths} each vertex of $X$ has, in $Y'$, an invisible matter tree attached to it---and no other invisible matter due again to the first part of Lemma \ref{lem:vispaths}. Finally, there is no other invisible matter in $Y'$ altogether, because $Y'$ is connected. \hfill$\square$ 
\end{proof}

\begin{proposition}[Closure characterization]\label{compact closure characterization}
	$$\overline{\cal Y} = {\cal Y} \cup \{T_u : u \in \{lm,rm\}^{-\N} \}$$ 
\end{proposition}

\begin{proof}
This is an immediate consequence of Lemmas \ref{lem:invpaths}, \ref{lem:vispaths}, \ref{lem:finiteinvroot} and Proposition \ref{prop:closure_visible_point}. Because these have simple but lengthy proofs, they are postponed until the \ref{subsec:IMCGDcompactness}. \hfill$\square$
\end{proof}

\noindent Now that we know what the closure of ${\cal Y}$ looks like, we can try to extend IMCGD to it.

\begin{theorem}[Compact extension of an IMCGD]\label{th:IMextension}
Consider $(F,R_\bullet)$ a continuous and shift-invariant dynamics over ${\cal Y}$. We have that $(F,R_\bullet)$ is invisible matter quiescent if and only if $(F,R_\bullet)$ can be continuously extended to ${\cal \overline{Y}}$ by letting $F(T_u) = T_u$ and $R_{T_u} = Id$ for any $u$ in $\{lm,rm\}^{-\N}$.
\end{theorem}
\newcounter{counterpropim}{\value{counterthm}}

\begin{proof}
This proof is quite technical, therefore the details of it have been pushed to \ref{subsec:IMCGDcompactness} (Proposition \ref{prop:IMextension_app}). The left to right implication is simply achieved through the continuity and invisible matter quiescence of $F$. Indeed, whatever the graph, if we dive deep enough into the invisible matter, it cannot be differentiated locally from $T_u$. Locally it has the same image as a tree of invisible matter. Using the invisible matter quiescence, we obtain that $F(T_u) = T_u$ and $R_{T_u} = Id$.
The right to left implication uses the fact that since ${\cal \overline{Y}}$ is compact, $(F,R_\bullet)$ is uniformly continuous by the Heine-Cantor Theorem. Using a similar reasoning as above, any graph is similar to $T_u$ if we dive deep enough into the invisible matter. Combined with the uniform continuity, this gives us the invisible matter quiescence of $(F,R_\bullet)$. \hfill$\square$
\end{proof}

Note that throughout the rest of the paper, IMCGDs will continue to have domain $\Y$, unless we make explicit use the above proposition to extend to $\overline{\Y}$.

\subsection{Reversibility}\label{subsec:reversibility_imcgd}
We are finally able to prove the reversibility Theorem : 

\begin{theorem}[Invertible implies reversible]\label{th:imrev}
If $(F,R_\bullet)$ is an invertible IMCGD, then $(F^{-1},R_{F^{-1}(\bullet)}^{-1})$ is an IMCGD, 
where $R_{F^{-1}(\bullet)}^{-1}$ is a shorthand notation for the family of functions $S$ such that for all $X$, $S_X = R_{F^{-1}(X)}^{-1}$.
\end{theorem}
\begin{proof}
Let $(F,R_\bullet)$ be an invertible IMCGD. Extend it to $\overline{\cal Y}$ as in Th.~\ref{th:IMextension}, and notice that it is still invertible. Let $F^{-1}$ be the inverse of $F$, and $S_\bullet$ be short for $R_{F^{-1}(\bullet)}^{-1}$.\\
$[$Shift-invariance$]$ Let $Y$ be in $\cal Y$ and $u$ be in $Y$. We have, by shift-invariance of $F$: 
$$F(F^{-1}(Y)_{S_Y(u)}) = F(F^{-1}(Y))_{R_{F^{-1}(Y)}(S_Y(u))} = Y_{R_{F^{-1}(Y)} \circ R_{F^{-1}(Y)}^{-1}(u)} = Y_u$$
Applying $F^{-1}$, on both sides we get $F^{-1}(Y)_{S_Y(u)} = F^{-1}(Y_u)$.\\
Let $Y$ be in $\cal Y$ and $u, v$ be in $Y$. On the one hand, we have: $$R_{F^{-1}(Y)}(S_Y(uv)) = R_{F^{-1}(Y)}(R_{F^{-1}(Y)}^{-1}(uv)) = uv$$
On the other hand, using the shift-invariance of $F$:
\begin{align*}
R_{F^{-1}(Y)}(S_Y(u)S_{Y_u}(v)) &= R_{F^{-1}(Y)}(S_Y(u))R_{F^{-1}(Y)_{S_Y(u)}}(S_{Y_u}(v))\\
	&= R_{F^{-1}(Y)}\circ R_{F^{-1}(Y)}^{-1} (u)\, R_{F^{-1}(Y_u)}\circ R_{F^{-1}(Y_u)}^{-1} (v) = uv
\end{align*}
Thus, $R_{F^{-1}(Y)}(S_Y(uv)) = R_{F^{-1}(Y)}(S_Y(u)S_{Y_u}(v))$. Applying $R^{-1}_{F^{-1}(Y)}$, we get $S_Y(uv) = S_Y(u)S_{Y_u}(v).$
Therefore $(F^{-1}, S_\bullet)$ is shift-invariant.\\
$[$Vertex-preservation$]$ The bijectivity of $S_Y$ for any $Y$ follows form the bijectivity of $R_Y$ for any $Y$. Boundedness follows from vertex-preservation.\\
$[$Continuity$]$ $F$ is an invertible continuous function over a compact space, so its inverse $F^{-1}$ is also continuous.\\
$[$Invisible matter quiescence$]$ We have extended $F$ to ${\cal \overline{Y}}$,  by letting $F_{|\overline{\cal Y} \setminus {\cal Y}} = Id$ and $R_{|\overline{\cal Y} \setminus {\cal Y}} = Id$, therefore we have $F^{-1}_{|\overline{\cal Y} \setminus {\cal Y}} = Id$ and $S_{|\overline{\cal Y} \setminus {\cal Y}} = Id$. By applying the converse of Th.~\ref{th:IMextension}, $F_{|{\cal Y}}^{-1}$ is invisible matter quiescent.\\
Altogether, we have proved that the inverse of $(F,R_\bullet)$ is also an IMCGD.
\hfill$\square$ \end{proof}

\section{The anonymous solution} \label{sec:ACGD}

Having a pointer is essential in order to express causality, but cumbersome when it comes to reversibility. A direct way to get the best of both worlds is to only consider the dynamics that we obtain by projecting causal dynamics over $\mathcal{X}_{\Sigma,\pi}$, onto $\mathcal{\anonymous{X}}_{\Sigma,\pi}$.
\begin{definition}[Anonymous Causal Graph Dynamics]
Consider $\anonymous{F}$ a function over $\mathcal{\anonymous{X}}_{\Sigma,\pi}$. We say that $\anonymous{F}$ is an ACGD if and only if there exists a CGD $(F,R_{\bullet})$ such that $F$ over $\mathcal{X}_{\Sigma,\pi}$ induces $\anonymous{F}$ over $\mathcal{\anonymous{X}}_{\Sigma,\pi}$. More formally, $\anonymous{F}$ is the dynamics such that for all $X \in \mathcal{X}$, $\anonymous{F(X)} = \anonymous{F}(\anonymous{X})$.
\end{definition}
\begin{remark}
The shift-invariance property, namely $F(X_u) = F(X)_{R_X(u)}$, entails that if two pointed graphs modulo differ just by their pointer, then so do their images. Dropping the pointer therefore unambiguously induces $\anonymous{F}$.
\end{remark}
\begin{proposition}
	For any pair of CGDs $F$ and $G$, we have : 
	    $$\anonymous{(F \circ G)} = \anonymous{F} \circ \anonymous{G}$$
\end{proposition}

\begin{proof}
Consider $X\in \X$. By definition of induced dynamics, we have : 
$$\anonymous{(F \circ G)}(\anonymous{X})=\anonymous{(F(G(X)))}=\anonymous{F}(\anonymous{(G(X))})=(\anonymous{F} \circ \anonymous{G}) (\anonymous{X})$$\hfill$\square$ 
 \end{proof} 
Invertibility is again imposed in the most general and natural fashion :
\begin{definition}[Invertible]
An ACGD $\anonymous{F}$ is said to be {\em invertible} if $\anonymous{F}$ is bijective.
\end{definition}
\VC{Invertibility, then, just means that $\anonymous{F}$ is bijective. } Fortunately, this time the condition is not so limiting without a pointer, and we are able to implement non-vertex-preserving dynamics, as can be seen from this slight generalization of the HM example:

\begin{figure}
\begin{center}
\includegraphics[scale=1,clip=true]{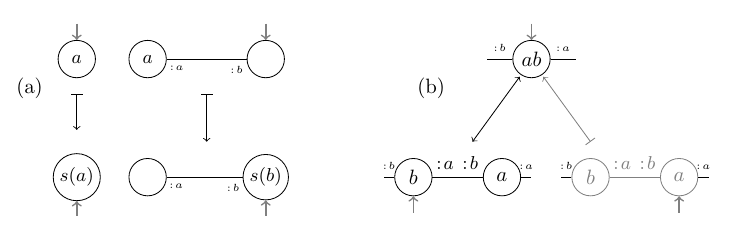}
\end{center}
\caption{\label{fig:AHM}{\em $(a)$ Advection step. $(b)$ Collision step.}}
\end{figure}

\begin{example}[Anonymous Hasslacher-Meyer]\label{ex:AHM} Once again we alternate between an advection step and a collision step, as can be seen in Fig. \ref{fig:AHM}. The anonymous dynamics is shown in black, whilst the underlying pointed dynamics are shown in grey. 
\end{example}

Notice that the original HM example features just particles moving on a circle in one direction or the other, stretching or contracting space according to how they meet. This corresponds to restricting to $\pi = \{a,b\}$ and $ab-$edges only, which is still a compact space \cite{ArrighiCGD}.

One can ask if such model leads to fundamentally different dynamics than IMCGD. We answer in the negative, and by doing so we prove that invertibility implies reversibility. However, this is a quite technical result, therefore the formal proofs have been kept in the next section.


\subsection{Simulation of the invisible by the anonymous}

First of all we need an operation which brings the pointer back into the visible matter. 
\begin{definition}[Inversion and projection of paths]
For any graph $Y\in \Y$, and $u\in V(Y)$, let $\projection{u} \in ((\pi\setminus{\{m,l,r\})^2})^*$ be the visible part of $u$ i.e. the only path such that there exists $s,t \in (\varepsilon|mm).\{lm,rm\}^*$ with $u = \reversepath{s}\projection{u}t$. 
We denote by $\projection{Y}$ the \emph{projection} of $Y$, i.e. the graph obtained by projecting the pointer into the visible matter. We denote by $\projection{\Y}$ the image of this function. Clearly, there is a bijection between $\projection{\Y}$ and $\X$, obtained by simply by discarding/reintroducing the invisible matter.

\end{definition}

\begin{remark}
For any $Y \in \Y$ and $u \in V(Y)$, $\projection{u}$ is in $V(\projection{Y})$. Moreover, $Y \in \projection{\Y}$ if and only if $Y=\projection{Y}$.
\end{remark}

\begin{lemma}\label{lem:projdistrib}
  Let $Y \in\Y$. For all $u,v \in (\pi \cap \{m,l,r\})^{*}$ such that $uv \in V(Y)$, we have that $\projection{(uv)}= \projection{u} \projection{v}$.
\end{lemma}

\begin{proof}
	Let $Y\in\Y$. For all $u,v \in (\pi \cap \{m,l,r\})^{*}$ such that $uv \in V(Y)$ let $s,r,r',t \in (\varepsilon|mm).\{lm,rm\}^*$ such that $u = \reversepath{s}\projection{u}r'$ and $v=\reversepath{r}\projection{v}t$. Since $uv$ is a valid path of $Y$ and the uniqueness of the projection, we have that $r=r'$. This allows us to conclude with the following equalities: 
	$$uv = \reversepath{s}\projection{u}r\reversepath{r}\projection{v}t = \reversepath{s}\projection{u}\projection{v}t = \reversepath{s}\projection{uv}t$$ \hfill$\square$
\end{proof}

\begin{definition}[Projection of a dynamics]\label{def:proj}
Let $(F,R_{\bullet})$ be a dynamics on ${\cal Y}$. We define the dynamics $(F^{\lrcorner},R^{\lrcorner}_{\bullet})$ on $\projection{\Y}\cong \X$ by :
	\begin{itemize}
		\item[$\bullet$] $F^{\lrcorner}(Y)=\projection{(F(Y))}$ 
		\item[$\bullet$] $R^{\lrcorner}_Y(u) = \projection{(R_Y(u))}$.
	\end{itemize}
for all $Y \in \projection{\Y}$ and for all $u \in \Pi^*$. We refer to $(F^{\lrcorner},R^{\lrcorner}_{\bullet})$ as the {\em projection} of $(F,R_{\bullet})$. 
\end{definition}

\begin{proposition}\label{proj_shift_inv}
	The projection of a translation invariant dynamics is translation invariant. 
\end{proposition}

\begin{proof}
	Let $(F,R_{\bullet})$ be a translation-invariant dynamic on ${\cal Y}$. Let $Y \in \projection{\Y}$ and $u$ be a vertex of $Y$. By definition of $F^{\lrcorner}(Y)$ and by translational invariance of $(F,R_{\bullet})$ we have : $$F^{\lrcorner}(Y_u)=\projection{F(Y_u)}=\projection{(F(Y)_{R_Y(u)})}= F^{\lrcorner}(Y)_{\projection{(R_Y(u))}}= F^{\lrcorner}(Y)_{R^{\lrcorner}_Y(u)}$$
 
	Since $(F,R_{\bullet})$ is invariant by translation, and $\projection{(uv)}= \projection{u} \projection{v}$ we have the following equalities:
	$$R^{\lrcorner}_Y(uv) = \projection{(R_Y(uv))}= (R_Y(u)R_{Y_u}(v))^{\lrcorner} = \projection{R_Y(u)} \projection{R_{Y_u}(v)}= R^{\lrcorner}_Y(u)R^{\lrcorner}_{Y_u}(v)$$
	$(F^{\lrcorner},R^{\lrcorner}_{\bullet})$ is therefore invariant by translation. \hfill$\square$
\end{proof}

Before proving the continuity of the projection, we need to prove that the pointer does not plunge too deeply into the invisible matter in a single time step. This will allow us to prove that the projected dynamics merges a bounded number of vertices, which will be useful for proving continuity.

\newcommand{\projectedY}{\projection{{\cal Y}}}

	\begin{lemma}\label{coquiescence}
		Let $(F,R_{\bullet})$ be an IMCGD. There exists a bound $b\in \setN$ such that, for all $Y,Z \in\projectedY$, for all $s\in\IM$, $F(Y) = Z_{mms}$ implies $|s| \leq b$.
	\end{lemma}
	
	\begin{proof}
		By contradiction, let $(F,R_{\bullet})$ be an IMCGD, and suppose that for all $k \in \projectedY$ there exists $Y_k, Z_k \in \projectedY$, and $s_k\in\IM$ such that $F(Y_k) = (Z_k)_{mms_k}$ with $|s_k| > k$.
		By extending $(F,R_{\bullet})$ to be over $\overline{\Y}$ by Th.~\ref{th:IMextension} and using compactness, $(Y_k)_{k\in \setN}$ admits a convergent sub-sequence in ${\overline{\cal Y}}$. Let $(Y'_k)_{k\in \setN}$ be one of these sub-sequences and $Y'$ its limit.
		For all $k$, $Y'_k$ has its pointer in the visible matter, and so $Y' \in \projectedY$ by Proposition~\ref{prop:closure_visible_point}. Let $Z \in \projection{{\cal Y}}$ and $s \in \{lm,rm\}^*$ such that $F(Y') = Z_{mms}$. For all $k > |s|$, $(F(Y'_k))^{|s|}$ has no visible matter because $|s_k|>|s|$, so this is necessarily the case for $(F(Y'))^{|s|}$, which contradicts $F(Y') = Z_{mms}$. \hfill$\square$
	\end{proof}

	\begin{proposition}\label{proj_continuous}
		The projection of an IMCGD is continuous.
	\end{proposition}
	
	\begin{proof}
	Let $(F,R_{\bullet})$ be an IMCGD. Let $b$ be the bound given by Lemma~\ref{coquiescence}.
	
	Let $Y \in \projectedY$ and $m \in \setN$. By continuity of $(F,R_{\bullet})$, there exists $n\geq0$ such that for all $Z \in {\cal Y}$, $Y^n = Z^n$ implies :
	\begin{itemize}
		\item[(a)] $(F(Y))^{m+2b} = (F(Z))^{m+2b}$.
		\item[(b)] $\dom\,{R_Y}^{m+2b}\subseteq V(Y^n)$, $\dom\, {R_Z}^{m+2b}\subseteq V(Z^n)$, and ${R_Y}^{m+2b} = {R_Z}^{m+2b}$.
	\end{itemize}
	
	In particular, for $Y \in \projectedY$, if we denote $F(Y) = F^{\lrcorner}(Y)_{.s}$ with $s\in\IM$, we obtain by translating by $s$ on both sides and applying $(a)$:
		$$(F^{\lrcorner}(Y))^{m+b} = (F^{\lrcorner}(Z))^{m+b}$$
	a fortiori
	$$(F^{\lrcorner}(Y))^m = (F^{\lrcorner}(Z))^m$$
	
	Let $v \in (F^{\lrcorner}(Y))^m$. Let $u$ be an antecedent of $v$ by $R^{\lrcorner}_Y$. There exists $s,t\in (\varepsilon|mm).\{lm,rm\}^*$ such that $R_Y(u) = svt \in (F(Y))^{m + 2b}$. By $(b)$ we have $u \in \dom\,{R_Y}^{m+2b} \subseteq V(Y^n)$ and
	$$\dom\,{R^{\lrcorner}_Y}^m\subseteq \dom\,{R_Y}^{m+2b} \subseteq V(Y^n)$$
	
	Similarly,
$$\dom\,{R^{\lrcorner}_Z}^m\subseteq \dom\,{R_Z}^{m+2b} \subseteq V(Z^n)$$
	
	Let $u$ belongs to $\dom\,{R^{\lrcorner}_Y}^m$. 
 $u$ also belongs to $\dom\,{R_Y}^{m+2b}$, and so
$$ \reversepath{s}.R^{\lrcorner}_Y(u).t = R_Y(u) = R_Y^{m+2b}(u) = R_Z^{m+2b}(u) = {\reversepath{s}}.R^{\lrcorner}_Z(u).t $$
	
	So finally we have $${R^{\lrcorner}_Y}^m = {R^{\lrcorner}_Z}^m $$ which concludes the proof of continuity of $(F^{\lrcorner}, R_{\bullet})$.  
	\hfill$\square$
	\end{proof}
	
	We will now concentrate on proving that the projection is bounded.
	
\begin{lemma}[Scattering bound]\label{BS lemma}
	For any CGD $(F,R_{\bullet})$, there exists a bound $c$ such that for any $u \in V(X)$ $|u| = 2 \implies |R_X(u)| \leq c$.
\end{lemma}

Intuitively, by continuity it is sufficient to look at the set of disks with radius $2$ and take $c$ to be the maximum radius produced by the dynamics on these disks.

\begin{proof}
	By contradiction, let $(F,R_{\bullet})$ be a dynamics and suppose that for all $c$, there exists a graph $X_c$ and a vertex $u_c \in V(X)$ such that $|u_c| = 2$ (adjacent vertex, at a distance of $2$ ports), and $R_X(u_c) > c$. Since the set $\pi$ of ports is finite and using the pigeonhole principle, we can extract a sequence $(X_c)$ in such a way as to keep $u$ constant. By compactness of ${\cal X}$, we can also require this sequence to be convergent and denote its limit by $X$ (a fortiori, $(X_c)_{c\in \setN}$ always satisfies $\forall c\geq 0, |R_{X_c}(u)| > c$). 
	
	Using the continuity of $(F,R_{\bullet})$, there exists an $n$ such that, for any graph $S$, $X^n = S^n$ implies $R_X^{|R_X(u)|} = R_S^{|R_X(u)|}$. Let $p\in\setN$ be such that $X_p^n = X^n$ and such that $|R_{X_p}(u)| > |R_X(u)|$ (i.e. $p >|R_X(u)|$). This leads to a contradiction because $R_X(u) = R_X^{|R_X(u)|}(u) = R_{X_p}^{|R_X(u)|}(u) = R_{X_p}(u)$. \hfill$\square$
\end{proof}

Since we used the compactness of $\X$, this lemma does not apply to all causal dynamics of partial graphs. However, it does apply to IMCGDs, via the application of Proposition \ref{compact closure characterization}.

\begin{corollary}\label{BS corollary}
	For any CGD $(F,R_{\bullet})$, there exists a bound $s$ such that $|R_X(u)| \leq s \times |u|$.
\end{corollary}

\begin{proposition}
 The projection of an IMCGD is bounded.
\end{proposition}

This bound is a direct consequence of the quiescence of invisible matter, since it imposes a bound on the entry of invisible matter into visible matter.

\begin{proof}
We must prove that there exists a bound $b$ such that for all $Y\in\projection{\Y}$ and for all $w'\in \projection{F}(Y)$, 
there exists 
$u'\in \textrm{Im}(\projection{R}_Y)$ 
and $v'\in \projection{F}(Y)_{u'}^b$ 
such that $w'=u'.v'$.
Let $(F,R_{\bullet})$ be an IMCGD. Let $Y \in \projection{\Y}$ and $y \in F^{\lrcorner}(Y)$ such that $y$ is not in the image of $R^\lrcorner_Y$.
$y$ also belongs to $V(F(Y))$, and by preservation of the vertices, it has an antecedent $us$ such that $R_Y(us)=y$. By quiescence of invisible matter, $s$ is bounded. And by Corollary \ref{BS corollary}, this implies that the distance between $R_Y(u)$ and $R_Y(us)=y$ is bounded. Since $R_Y(u) = t^{-1}R^{\lrcorner}_Y(u)t'$ for some $t$ and $t'$ bounded by Lemma~\ref{coquiescence}, the distance between $y$ and an element of $R^{\lrcorner}_Y(u)$ in $\textrm{Im}(R^{\lrcorner}_Y)$ is bounded. \hfill$\square$
\end{proof}

By combining the three previous properties, we finally obtain the simulation theorem :
\begin{theorem}\label{th:imcgd_to_cgd}
For any IMCGD $(F,R_{\bullet})$, its projection $(F^{\lrcorner},R^{\lrcorner}_{\bullet})$ is a CGD.
\end{theorem}

This CGD can easily be transformed into an anonymous dynamics by forgetting the pointer. 

\begin{theorem}\label{thm:ir_to_ar}
	Let $(F,R_{\bullet})$ be an IMCGD. $(F,R_{\bullet})$ is invertible if and only if $\anonymous{(F^\lrcorner)}$ is invertible. 
\end{theorem}

\begin{proof}
	Let $(F,R_{\bullet})$ be an IMCGD. Let $(F,R_{\bullet})$ be an invertible and therefore reversible IMCGD. For all $X \in \mathcal{X}$:
	
$$F^\lrcorner(X)\sim F(X)$$
	
	and so
	$$\anonymous{(F^\lrcorner)} = \anonymous{F}$$
	And finally
	$$\anonymous{F} \circ \anonymous{(F^{-1\lrcorner})} = \anonymous{F} \circ \anonymous{(F^{-1})} = \anonymous{(F \circ F^{-1})} = Id$$
	Similarly, we have $\anonymous{((F^{-1})^\lrcorner)} \circ \anonymous{F} = Id$, so $\anonymous{F}$ is reversible.
	
	Now suppose that $\anonymous{F}$ is invertible. We want to construct an IMCGD $(H,S_{\bullet})$ inverse of $(F,R_{\bullet})$. Let $Y\in\mathcal{Y}$ and $X\in \X$, such that $X$ is the graph obtain by removing the invisible matter from $\projection{Y}$. Let $u$ be a vertex of the anonymous graph $(\anonymous{F})^{-1}(X)$, consider the pointed graph $((\anonymous{F})^{-1}(X),u)$. We have :
	
$$F(((\anonymous{F})^{-1}(X),u)) \sim F^\lrcorner(((\anonymous{F})^{-1}(X),u)) \sim X$$
	$F(((\anonymous{F})^{-1}(X),u))$ is therefore equal to $X_{vt}$ for some v and a certain $t \in (\varepsilon|mm).\IM$. By bijectivity of $R_{(\anonymous{F})^{-1}(X)_u}$, there exists $w$ such that $F((\anonymous{F})^{-1}(X)_{uw}) = X$ (more precisely, $w$ is the antecedent of $vt$). We can define $H(Y)$ to be the function such that $(\anonymous{F})^{-1}(X)_{uw}$, and $S_Y$ such that $(R_{(\anonymous{F})^{-1}(X)_{uw}})^{-1}$, which proves that $(F,R_{\bullet})$ is invertible. 
    \hfill$\square$
    \end{proof}

\subsection{Simulation of the anonymous by the invisible}
\newcommand{\ports}{\pi}

\changes{(Notice that the results of this subsection may fail to hold for some partially-defined CGD, i.e. having domain over certain subsets ${\cal S}\subseteq{\cal X}_{\Sigma,\pi}$.)}

As stated in the introduction, vertex creation and destruction can be simulated by moving vertices to and from invisible matter. However, when simulating an ACGD, moving a vertex into invisible matter can break determinism or translation invariance, as shown in Example \ref{fig:turtle}. In this section, we will see that this can only happen a finite number of times. More precisely, it occurs when a graph has a translational symmetry that can be detected by the continuity radius.

\begin{figure}[h]
\begin{center}
\includegraphics[scale=1]{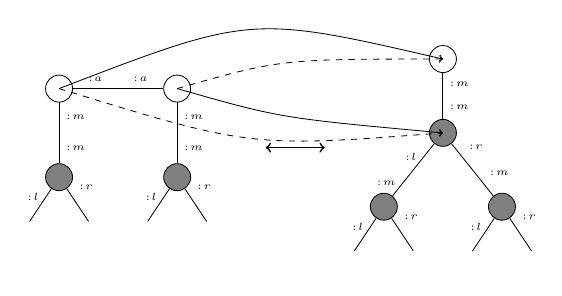}
\end{center}

\caption{\label{fig:turtle} {\em {The turtle and invisible-matter graph dynamics}. In the turtle dynamics, nothing happens except a permutation between the one-vertex graph and the two-vertex graph connected by an edge labeled $\{:a,:a\}$. 
this is a perfectly valid CGD, but it is not the projection of any IMGCD. Since the two vertices are symmetrical, we cannot decide which one should be sent into the invisible matter without breaking the translational invariance. The solid and dotted arrows show the two possible choices for vertex evolution.}}
\end{figure}

A symmetric graph is formally defined as follows:

\begin{definition}[Symmetric graph]
Given a graph $X\in{\cal X}_{\Sigma,\ports}$, we say that $X$ is symmetric if and only if there exists $u \in V(X)$ such that $u\neq \varepsilon$ and $X=X_u$. 
\end{definition}

We rely on this already known lemma:

\begin{lemma}[Structure of symmetric graphs]\cite{ArrighiRC}
Let $X\in{\cal X}_{\Sigma,\ports}$ be a symmetric graph. Then we have :
$X=\bigcup_{u\in T}u.G$ where
\begin{itemize}
\item $T$ is a vertex-transitive graph. 
\item $V\subseteq V(X)$ such that $\varepsilon\in V$ and $\{u.V\}_{u\in T}$ form a partition of $V(X)$.
\item $G=G(X)^0_{V}$ with $G(X)$ as defined in \ref{def:associatedgraph}. 
\item $w\sim w'$ if and only if $w=u.v$, $w'=u'.v$, $u,u'\in T$, and $v\in V(G)$.
\end{itemize}
\end{lemma}

From this result, we can prove the existence of a bound on the size of graphs performing symmetric fusion.

\begin{lemma}[Bound on symmetric merges]\label{lem:borned_symmetrical_fusion}
Let $(F,R_{\bullet})$ be a causal graph dynamics. There exists a bound $d$ such that, for all $X \in {\cal X}$, and $u \in V(X)$ such that $u \neq \varepsilon$, ($R_X(u) = R_X(\varepsilon) $ and $ X_u = X$) $\implies X = X^d$. 
\end{lemma}
The idea of this proof is that if two vertices are symmetric, then their neighbours share this symmetry. By merging them, by translation invariance and the fact that each port is used at most once, their neighbors must also merge. Reasoning by induction, this fusion propagates to the rest of the graph and, if the graph is large enough, outside the causal radius.  We can then modify the graph in such a way as to ensure that two very distant vertices merge, thus breaking the continuity.

\begin{example}[Symmetric ladders do not merge]
Let $\pi=\{a,b,c\}$ and $\Sigma=\varnothing$. Consider $X$ a very long ladder of made of two parallel $ab$-lines of vertices, connected by $cc$-rungs, and pointed towards the middle. Since $X_{cc}=X$, can $F(X)$ merge the ladder uprights into a single $ab$-line? Suppose that it can, and consider $Z$ with many top rungs missing, and with one of the upright made shorter. By locality and translation invariance, $F$ needs treat the bottom of $Z$ in the same way it treated $X$, i.e. by merging. However the top of the uprights of $Z$ are loose, arbitrarily far apart, and cannot be merged. Thus $F(Z)$ will be made of a single line at the bottom, and two loose lines on top. Consider the node at the fork. Each port is used only once, so one loose line needs to attach to port $c$, whilst the other is attached to $b$. But remember that one of the uprights was shorter, leading to a shorter image. Which should be attached to $c$, the shorter, or the longer one? Locally at the level of the fork, $Z$ still looks symmetric by $cc$ and so it is impossible for $F$ to decide. Thus the $X$ ladder cannot be merged in the first place.
\end{example}

\begin{proof}
 Let $(F,R_{\bullet})$ be a causal graph dynamics on ${\cal X}_{\Sigma,\ports}$. Let $n_0$ be the uniform continuity bound of $F$ for $m=0$, i.e. for all $X \in \setG$, we have $(F(X))^0=(F(X^{n_0}))^0$, $\dom\,R_{X}^0\subseteq V(X^{n_0})$ and $R_{X}^0=R_{X^{n_0}}^0$. By contradiction, suppose there exists a graph $X$ and a vertex $u\in V(X)$ such that $R_X(u) = R_X(\varepsilon)$, $X_u = X$ and $X^{2n_0} \neq X$. 
 
Since $X^{2n_0} \neq X$, there exists a vertex $v$ such that $v\notin V(X^{2n_0})$ and $v$ is a simple path (without loops) from $\varepsilon$. Let $v'$ be the shortest prefix of $v$ such that $v' \notin V(X^{n_0})$.
 We have $|v|> 2n_0$ and $|v'|=|n_0|$, so $|v'^{-1}v| > n_0$. There is a vertex $w \in V(X)$ which is a prefix $v$, such that $w \notin V(X^{n_0})$ and $u.w \notin V(X^{n_0})$.
 
Let $Z$ be the graph obtained by performing the following operations:

\begin{itemize}
    \item Cut all the edges $xy$ such that $x \notin V(X^{n_0})$ and $y \notin V(X^{n_0})$ except the ones on the paths $w$ and $uw$.
    \item Delete all vertices not belonging to the same connected component as the pointer.
    \item $w$ and $uw$ therefore have only one connected port, named $p_1$ here, so they have at least one free port $p_2$. We extend $w$ and $uw$ so that $w(p_2p_1)^{n_0} \in V(Z)$ and $uw(p_2p_1)^{n_0}\in V(Z)$.
\end{itemize}
We obtain a graph $Z$ that is locally similar to $X$, and has at least two arbitrarily removed paths. Since $R_X(u)=R_X(\epsilon) \in V(X^{n_0})$ and  $X^{n_0}=Z^{n_0}$, we obtain the following equalities: $R_Z(u) = R_Z^{n_0}(u) = R_X^{n_0}(u) = R_X^{n_0}(\varepsilon) =R_Z^{n_0}(\varepsilon)=R_Z(\varepsilon)$.  
Using translation invariance, we obtain that $R_Z(w(p_2p_1)^{n_0})=R_Z(uw(p_2p_1)^{n_0})\in V(Z)$ which contradicts the continuity of $(F,R_\bullet)$ as $w(p_2p_1)^{n_0}$ and $uw(p_2p_1)^{n_0}$ are at a distance greater than $n_0$. \hfill$\square$
 \end{proof}

 Once again, we'll use a construction from \cite{ArrighiRC} to break the symmetry of a graph.
 
\begin{definition}[Asymmetric Extension]\label{def:asymext}\cite{ArrighiRC}
	Given a finite symmetric graph $X\in{\cal X}_{\Sigma,\pi}$, we obtain an {\em asymmetric extension} $\assymext{X}$ by performing one of the following operations:
	\begin{itemize}
		\item Choose a vertex $w\in V(X)$ with a free port $p_1$ and connect a new vertex $w.p_1p_2$ to it, on its $p_2$ port.
		\item Choose $w\in V(X)$ which is part of a cycle, delete an edge $e$ from this cycle $w$ connecting $w$ and $w'$, then add two vertices $w.e$ and $w'.e^{-1}$.
	\end{itemize}
\end{definition}

\begin{lemma}[Asymmetry of the asymmetric extension]\cite{ArrighiRC}\label{lem:assymetricext}.
	Given a finite symmetric graph $X\in{\cal X}_{\Sigma,\pi}$, its asymmetric extension $\assymext{X}$ is asymmetric, and $|\assymext{X}| \leq |X| + 2$.
\end{lemma}

We have proved that only symmetric graphs with radii smaller than the causal radius can achieve symmetric fusion. This impossibility naturally extends to asymmetric graphs, when the asymmetry is not locally detectable.

\begin{lemma}[Local asymmetry]\label{lem:local_asymmetry2}
	Given a CGD $(F,R_{\bullet})$, we have that for almost any graph, vertices merging together by applying $F$ are asymmetric. More formally, there exists a bound $d$ , such that for any graph $X$ of radius greater than $d$ ($X \neq X^d$), and for any pair of vertices $v,v' \in V(X)$, $R_X(v) = R_X(v')$ and $v \neq v'$ implies $X_v \neq X_{v'}$.
\end{lemma}


\begin{proof}
	By contradiction, suppose that for all $d$, there exists $X\in \setG $ a graph of radius greater than $d$, and $v,v' \in X$ such that $R_X(v) = R_X(v')$, $v \neq v'$ and $X_v = X_{v'}$.
	
	Assume without loss of generality that $v = \varepsilon$. Let $n$ be the uniform continuity bound of $(F,R_{\bullet})$, for $m=0$. In particular we have that for all $Z\in{\cal X}$ and for all $u \in V(Z)$ such that $|u|\leq n$, $Z^n=X^n \implies R_Z(u) = \varepsilon$. Let $\assymext{X}$ be an asymmetric extension of $X$ obtained by considering one of the farthest vertices of $\varepsilon$. $\assymext{X}$ is therefore asymmetric and also verifies $(\assymext{X})^d = X^d$. As with $d \geq n$, we have can apply the uniform continuity and obtain that $R_{\assymext{X}}(v') = \varepsilon$. Due to the asymmetry of $\assymext{X}$, $\assymext{X} \neq \assymext{X}_u$ and the three previous assertions contradict Lem.~\ref{lem:borned_symmetrical_fusion}.\hfill$\square$
\end{proof}

Now that we have proved that CGDs, and therefore ACGDs, only perform a finite number of symmetric mergers we can finally state the simulation theorem :

\begin{theorem}[Extension of a CGD into an IMCGD]\label{th:cgd_to_imcgd}\ \\
For any CGD $(F,R_{\bullet})$, there exists an IMCGD $(F',R'_{\bullet})$ whose projection $(F'^{\lrcorner},R'^{\lrcorner}_{\bullet})$ is isomorphic to $(F,R_{\bullet})$ for any graph in $X\setminus S$, where $S$ is a finite set of symmetric graphs.
\end{theorem}


\begin{proof}
	Let a CGD $(F,R_{\bullet})$ over $\mathcal{X}$, and an isomorph CGD $(F'^{\lrcorner},R'^{\lrcorner}_{\bullet})$ over $\projection{\mathcal{Y}}$ obtained by adding invisible matter trees to very vertices. To extend it into an IMCGD, we need for any graph $Y \in {\cal Y}$ to make explicit its visible vertices, as well as the provenance of the visible vertices of $\projection{F'}(Y)$, so as to restore the injectivity and surjectivity of $\projection{R'}$ respectively. Of course, we can do this using invisible matter trees, but we must be careful not to break the other properties of $\projection{F'}$. \\
	
	By continuity of $\projection{R'}$, only a finite number of vertices can merge simultaneously at the same point ($dom((\projection{R'}_Y)^0) \subseteq V(Y^n)$, for some $n$). Since $\projection{R'}$ is bounded, we are guaranteed to find enough invisible matter for each of the vertices with no antecedent in $\projection{R'}_Y$. Combining these two properties, it is therefore possible to restore vertex preservation by IMCGD by moving only a bounded number of vertices from or to the invisible matter, i.e. without breaking the quiescence of the invisible matter.\\
	
	{\em Restoring injectivity : }
	First, we will restore injectivity by defining a function $S_{Y}$, such that for all $x_1,x_2 \in V(Y)$, $S_{Y}(x_1)$ and $S_{Y}(x_2)$ share the same root in invisible matter if and only if $\projection{R'}(x_1) = \projection{R'}(x_2)$. Let $y$ be a vertex of $Im(\projection{R'}_Y)$, i.e. a root in visible matter with an antecedent in $\projection{R'}_Y$. All we need to do is define a total, local, translation-invariant order over $(\projection{R'}_Y)^{-1}(y)$. We can then pose $(\projection{R'}_Y)^{-1}(y) = \{x_0, x_1, ...\}$, and using the continuity of $\projection{R'}$ show that $(\projection{R'}_Y)^{-1}(y)$ is finite. We can easily conclude by defining $S_{Y}(x_0) = y$, $S_{Y}(x_{1 + i}) = y.m.l^i$, $S_{X}(x_{1 + i}.m) = y.m.l^i.r$ and $S_{Y}(x_0.m) = y.m.l^{|(\projection{R'}_Y)^{-1}(y)| - 1}$. To order $(\projection{R'}_Y)^{-1}(y)$, one can define an order on $\{Y_x| \projection{R'}_Y(x) = y\}$, i.e. order them according to their translation symmetry class (in fact, nothing finer can be done without violating translation invariance). Lemma \ref{lem:borned_symmetrical_fusion} (local asymmetry) establishes exactly that this ordering is local. Lemma \ref{lem:local_asymmetry2} (second asymmetry) states that $ x \mapsto Y_x $ is injective, proving that an order on $\{Y_x| \projection{R'}_Y(x) = y\}$ actually provides an order on $(\projection{R'}_Y)^{-1}(y)$. Both lemmas apply only for $Y$ large enough.\\

{\em Restoring surjectivity : }
Let $u$ be a vertex of $V(\projection{F'}(Y))\backslash Im(\projection{R'}_Y)$. Let $u'$ be the vertex closest to $u$ in $Im(\projection{R'}_Y)$, taking the lexicographic order of the paths to break the equalities. This choice is relative to $u$, and therefore independent of where $Y$ is pointed, or to put it briefly, translation invariant. It is necessary for $u$ to come from the invisible matter of the antecedent of $u'$ by $S_{Y}$. For $x$ in $S_{Y}^{-1}(Im(\projection{R'}_Y))$, we define $C_x = \{u | u' = \projection{R'}_Y(x)\}$ as the set of new vertices that want to come from the invisible matter of $x$. It is possible to order $C_x$ by distance from $\projection{R'}_Y(x)$, once again breaking equalities via the lexicographic order. This order is invariant by translation, and as $(\projection{F'},\projection{R'}_{\bullet})$ is bounded, $C_x$ is finite. By writing $C_x = \{u_0, u_1, ...\}$, we can give the origin of the vertices that make it up by fixing, for $i < |C_x|$: $T_{X}(x. (lm)^i) = u_i$ and providing them with an invisible matter tree: $T_{Y}(x.(lm)^irm) = u_imm$ and giving $x$ an intact tree with : $T_{Y}(x.(lm)^{|C_x|}) = \projection{S}_Y(x)mm$. 

We have therefore defined $T_{Y}(x.(lm)^{|C_x|}) = \projection{S}_Y(x)mm$. To complete the definition of $T_{Y}$, we pose $T_{Y}(u.v) = T_{Y}(u).v$ if $u$ is the longest prefix of $u.v$ such that $T_{Y}(u)$ was defined previously.\hfill$\square$
\end{proof}

Unfortunately, it turns out that this simulation does not preserve reversibility, as it cannot simulate a finite set of symmetric graphs. There are therefore a finite number of orbits in $(F,R_{\bullet})$, which are not preserved by the simulation. At first sight, one might hope to restore reversibility by changing the image of a finite number of small graphs with finite invisible matter. Such simple repairs are unfortunately impossible due to the following example:

\newcommand{\symetry}{\text{sym}}

\begin{example}[Contraction-expansion]
Let ${\anonymous{\mathcal X}_{\Sigma,\pi}}$ be a space of graphs and $F$ be a reversible dynamics on $\cal X$ such that there exists $X$ finite with : $$\lim_{n\to\infty} |V(F^n(X))| = \lim_{n\to\infty} |V(F^{-n}(X))| = +\infty$$

The existence of such a dynamics is non-trivial and a central point of \cite{TimeArrow}.

Let $s\notin \pi$, and $\symetry : \anonymous{\mathcal X}_{\Sigma,\pi} \to \anonymous{\mathcal X}_{\Sigma,\pi \cup \{s\}}$ be the operation consisting in duplicating each vertex and connecting each pair of duplicate vertices thus obtained by an edge $\{u:s, u':s\}$. \changes{Let $X$ be an arbitrary graph in $\anonymous{\mathcal X}$.}
 
Let $F'$ be the dynamics on $\anonymous{\mathcal X}\cup \symetry(\anonymous{\mathcal X})$ such that for all $Z$ :
\begin{itemize}
\item $F'(Z)=F(Z)$ and $\symetry(F'(Z))=F(\symetry(Z))$ for all $Z \in \anonymous{\mathcal X}\setminus \{X\}$.
\item $F'(X) = \symetry(F(X))$ and $F'(\symetry(X)) = F(X)$.
\end{itemize}

Such a dynamic is reversible, but it admits at least one disjoint orbits on which it does not preserve the symmetry/asymmetry of the graph. A projection simulation à la Th.~\ref{th:cgd_to_imcgd} of an IMCGD would fail to simulate the dynamics on graphs $X$ and $\symetry(X)$.
\end{example}

We conjecture, however, that it is possible to repair such a simulation by encoding the symmetry of the graphs in the states of the vertices of the associated IMCGD. Thus for a finite number of small anonymous graphs, each symmetric merge/divide of vertices will not be realized in the invisible matter graph, but will be encoded in the state of each vertex.  We call such simulation a “symmetrized projection”. However, formally defining such a simulation requires more work.

\begin{conjecture}
	For any CGD $(F,R_{\bullet})$, there exists an IMCGD $(F',S_{\bullet})$ whose \emph{symmetrized} projection $(F'^{\lrcorner},S^{\lrcorner}_{\bullet})$ is equal to $(F,R_{\bullet})$ for any graph in ${\mathcal X}\cong \projection{\Y}$. Moreover, $(F',S_{\bullet})$ is reversible if and only if $(F,R_{\bullet})$ is reversible.
\end{conjecture}

Such a simulation would allow us to obtain the reversibility of anonymous graph dynamics:

\begin{conjecture}\label{conj:arev}
If an ACGD is invertible, then its inverse function is an ACGD.
\end{conjecture}

\section{Graph dynamics with name algebra}\label{section:ncgd}
\subsection{Definition}

So far we have been working with (pointed) graphs modulo isomorphism, however named graphs turn out to be easier to handle during an implementation and even necessary in the quantum regime \cite{ArrighiQNT}.
In this context, being able to create a new vertex without breaking the locality implies being able to generate a new name from those available locally. For example, if a dynamic $F$ splits a vertex $u$ into two new vertices, the natural choice is to call these vertices $u.l$ and $u.r$. Now, suppose we apply a renaming $R$ transforming $u.l$ into $v$ and $u.r$ into $w$, then applying $F^{-1}$. This time the vertices $v$ and $w$ have merged into one; and in order to maintain invertibility, this one should be named $(v\vee w)$ where $\vee$ is a name-merging operation. However, by choosing trivial $R$, the resulting vertex would have been $(u.l\vee u.r)$, and $F^{-1}\circ F=Id$ requires this vertex to be named $u$. This leads us to a name algebra such as $u=(u.l\vee u.r)$.

\begin{definition}[Name algebra]
Let ${\cal N}$ be a countable set. Consider the terms produced by the grammar ${\cal V} := {\cal N}\,|\,V.\{l , r\}^*\,|\,{\cal V}\vee {\cal V}$ with the equivalence induced by the following rewritings:
\begin{itemize}
\item $u.\varepsilon \longrightarrow u \quad (N)$
\item $(u \vee v).l \longrightarrow u \quad$ and $\quad (u \vee v).r \longrightarrow v\quad(S)$
\item $(u.l \vee u.r)\longrightarrow u\quad(M)$
\end{itemize}

That is, two names $u$ and $v$ are equal if and only if their normal forms $u\!\downarrow_{S\cup M}$ and $v\!\downarrow_{S\cup M}$ are equal.
\end{definition}

This is the algebra corresponding to infinite binary trees. In fact, each element $x$ of ${\cal N}$ can be thought of as a variable representing an infinite binary tree.  The projection operations $.l$ (resp. $.r$) are used to recover the left (resp. right) subtree. The join operation $\vee$ creates a root and then associates the left and right subtrees with it --- this operation is therefore neither commutative nor associative. This infinite binary tree structure is reminiscent of IMCGD, and we will see later that this algebra of names can be seen as an abstraction of invisible matter.
\changesII{
Due to this analogy, we chose to overload the dot notation when working with named graphs. Be careful, however, to distinguish
\begin{itemize}
    \item $v.t$ where $t\in{r,l}^*$, which refers to a fragment of the name $v$.
    \item $v.\alpha$ where $\alpha\in\Pi^*$, which refers to the node reached from $v$, taking path $\alpha$.
\end{itemize}
}
In order to distinguish vertices uniquely, a graph must not have two different vertices with the same name. We must also prohibit the existence of a vertex $x$ and two others named $x.r$ and $x.l$, as these may later merge and cause a name collision.

\begin{definition}[Intersecting]
Two terms $v,v'$ in $\cal V$ are said to be {\em intersecting} if and only if there exists $t,t'$ in $\{l , r\}^*$ such that $v.t=v'.t'$. We use $v\intersectn v'$ as a notation for $\{v.t\,|\,t\in \bint \}\cap \{v'.t'\,|\,t'\in \bint\}$.
We also write $v \intersectn V$ for $\bigcup\limits_{v' \in V} v \intersectn v'$.
\end{definition}

\begin{definition}[Well-named graphs]
We say that a graph $G$ is {\em well named} if and only if for any $v,v'$ in $V(G) \subseteq \cal{V}$, $v \intersectn v'\neq \emptyset$ implies $v=v'$.  We denote by ${\cal W}_{\Sigma,\pi}$ or ${\cal W}$ the subset of well-named graphs.
\end{definition}

We now formulate locality conditions for the dynamics over named graphs.
\begin{definition}[Uniform continuity]
A function $F$ on ${\cal W}$ is said to be \changes{{\em uniformly continuous}} if and only if for all $m\geq 0$, there exists $n\geq 0$, such that for all \changes{$G \in \mathcal{W}$, for all $v \in V(G)$ and $v' \in V(F(G))$, $v \intersectn v' \neq \emptyset$} implies $(F(G))^m_{v'}=(F(G^n_{v}))^m_{v'}$.
\end{definition}
\changes{Notice that here we have expressed locality as uniform continuity, meaning that the radius $n$ is not allowed to depend on $G$. For CGD we had used continuity instead, seemingly allowing for $n$ to depend on $X$ (Def. \ref{def:continuitymodulo}). But in that case the space was compact, so uniform continuity and continuity were in fact equivalent (Heine–Cantor theorem). Continuity was preferred mainly in order to bring us one step closer to reversibility results.\\}

\acom{
By contradiction, let us assume that a translation invariant, continuous, bounded and name preserving $F$ over \mathcal{W}, is not uniformly continuous.
Therefore, $\exists n \geq 0$ such that $\forall m\geq 0, \exists G_m\in \mathcal{W}, \exists v_m\in V(G)$ and $\exists v_m' \in V(F(G))$ such that $v_m \ \intersectn v'_m \neq \emptyset$ and $(F(G_m))^n_{v_m'}\neq(F(G^m_{m.v_m}))^n_{v_m'}$. 

The sequence of graph modulo isomorphism $(\anonymous{G_m})_{m\in \N}$ over $\mathcal{X}$ admit a converging subsequence $(\anonymous{G_{m_i}})_{i\in \N}$ as $\mathcal{X}$ is compact. Let $X$ be the limit of that subsequence and let $G \in \mathcal{W}$ such that $\anonymous{(G,v)}=X$. 
We have that for all $i \in \N$, there exist a renaming $R_i$ such that $(G_{m_i.v_i})^{m_i}= R_i(G_{R_i^{-1}(v)}^{m_i})$. this contradict our hypothesis as for all $i \geq m$ we have :
\begin{align*}
     F(G^m_v) 
     &= F((G^{m_i}_v)^m_v)\\
     &= F((R_iR_i^{-1}G_v^{m_i})^m_v)\\ 
     &= F((R_iG_{R_i^{-1}v}^{m_i})^m_v)\\
     &= F(((G_{m_i.v_i})^{m_i})^m_v) \\
     &\neq (F(G^m_{m_i.v_m}))^n_{v_{m_i}'}\\
     &= (F(G^m_{m_i.v_m}))^n_{v_{m_i}'}
\end{align*}

\begin{align*}
     F(G)_{v'}^n
     &=  F(G)_{v'}^n
     
\end{align*}

Let $v' \in V(F(G))$ such that $v \intersectn v' \neq \emptyset$, by continuity of $F$ we have that there exist $n$ $F(G^m_v) = F(G^m_v)_{v'}^n$ but
}

We now formulate the equivalent of the expansivity bound for dynamics over named graphs. \changesII{The idea is that nodes should not be broken any further than some depth $b$. We formalize this by stating that any fragment of depth further than $b$ of a node of the original graph has to be found ``as is'' in the image graph.}

\begin{definition}[Bounded dynamics]\label{def:boundednessfornamed}
A function $F$ on ${\cal W}$ is said to be {\em bounded} if and only if there exists a bound $b$ such that for all $G$, for all $v \in V(G)$, for all $s \in \{l,r\}^*$ such that $|s|\geq b$, $\exists v' \in V(F(G))$ and $s' \in \{l,r\}^*$ such that $v.s = v'.s'$.
\end{definition}

Once again, we want our dynamic to be independent of the naming, so we ask that it commute with renamings that do not create an intersection between the names :

\begin{definition}[Renamings]\label{def:graph renaming}
Let $R$ be an injective function from ${\cal N}$ to ${\cal V}$ such that for any $x\neq y\in {\cal N}$, $R(x)$ and $R(y)$ do not intersect. The natural extension of $R$ to the set ${\cal V}$, according to $$R(u.l)=R(u).l\qquad R(u.r)=R(u).r\qquad R(u\vee  v)=R(u)\vee R(v)$$ is called a "renaming".
\end{definition}

\begin{definition}[Invariance by translation]
		A function $F$ on ${\cal W}$ is said to be {\em invariant by translation} if and only if for all $G\in {\cal W}$ and for all renaming $R$, $F(RG)=RF(G)$.
\end{definition}

Finally, we want our dynamics to preserve the name space so that we can track the evolution of vertices.
\begin{definition}[Preservation of names]
Let $F$ be a function on ${\cal W}$. We say that the function $F$ {\em preserves names} if and only if for all $u$ in ${\cal V}$ and $G$ in ${\cal W}$ we have that $u\intersectn V(G) = u\intersectn V(F(G))$.
\end{definition}

Finally, we can state what named causal graph dynamics are.

\begin{definition}[Named Causal Graph Dynamics (NCGD)]\label{def:NCGD}
A function $F$ on ${\cal W}$ is said to be a {\em Named Causal Graph Dynamics (NCGD)} if and only if it is translation invariant, \changes{uniformly} continuous, bounded and name-preserving.
\end{definition}

\changes{
\begin{conjecture}[Continuity same as uniform continuity for NCGD]
In Def. \ref{def:NCGD}, we may equivalently replace uniform continuity by continuity, i.e. the property that for all $G \in \mathcal{W}$ and $n\geq 0$, there exists $m\geq 0$, such that for all $v \in V(G)$ and $v' \in V(F(G))$, $v \intersectn v' \neq \emptyset$ implies $(F(G))^n_{v'}=(F(G^m_{v}))^n_{v'}$.  
\end{conjecture}
Indeed, albeit names jeopardize the compactness of the graph space, forbidding us to apply Heine-Cantor out-of-the-shelf, renaming-invariance make these names innocuous to some extent. A proof of a slightly more restrictive version of this conjecture has been given in \cite{ArrighiCGD} as Theorem 3.
}
\begin{conjecture}\label{conj:nrev}
If an NCGD is invertible, then its inverse function is an NCGD.
 \end{conjecture}
\changes{A proof of a slightly more restrictive version of this conjecture has been given in \cite{ArrighiCGD} as Theorem 4. 
}

Once again, we make sure that the creation and destruction of vertices is compatible with invertibility (we'll see later that it corresponds to reversibility):

\begin{figure}
\begin{center}
\includegraphics[scale=1,clip=true,trim=0cm 0.2cm 0cm 0.2cm]{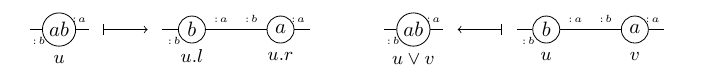}
\end{center}
\caption{\label{fig:GnAScattering} {\em The scattering step of Hasslacher-Meyer as implemented by an NCGD.}}
\end{figure}

\begin{example}[Named Hasslacher-Meyer]\label{ex:NHM}
Let {\cal W} be with ports and states as in the anonymous Example \ref{ex:genadv}. Once again, we alternate the advection step with the collision step explained in figure \ref{fig:GnAScattering}.
\end{example}

\changes{
\begin{theorem}[Composability of NCGD]
    For all $F,F'$ NCGD over $\cal W$, the function $F' \circ F$ is a also an NCGD over $\cal W$.
\end{theorem}

\begin{proof}
Let $F$ and $F'$ two NCGD over $\cal W$. First, notice that $F' \circ F$ is well defined as $F(\cal W)\subseteq\cal W$.
We can also see that $F' \circ F$ is name preserving, as for $u \in \cal V$ and $G \in \cal W$, $u \intersectn V(G) = u \intersectn V(F(G)) = u \intersectn V(F'(F(G)))$.

Similarly we have that $F'\circ F$ is translation-invariant, as for all renaming $R$, we have that $F' \circ F(RG) = F'(RF(G)) = R(F' \circ F(G))$.

Now let us prove that $F'\circ F$ is bounded for the bound $b + b'$ where $b$ and $b'$ are the $F$ and $F'$ bounds respectively. Let $v \in V(G)$ and $r\in  \{l,r\}^*$ with $|r|\geq b$. Let $s,t\in \{l,r\}^*$ such that $r=st$, $|s|\geq b$ and $|t|\geq b'$. As $F$ is bounded, there is $v'\in V(F((G))$ and $s' \in \{l,r\}^*$  such that $v.s = v'.s'$.
Let $r'=s't$ and notice that $|r'|\geq|t|\geq b'$, thus there exists $v'' \in V(F'\circ F(G))$ and $r''\in \{l,r\}^*$ such that :
$$v''.r''=v'.r'= v'.s't=v.st= v.r$$

Finally, let us prove that $F'\circ F$ is uniformly continuous. Let $n \geq 0$, and $m'$ the uniform continuity of $F'$ for such $n$. Now let $m$ the uniform continuity radius of $F$ for $m'$. Let $v \in V(G)$ and $v''\in F'(F(G))$ such that $v \intersectn v'' \neq \emptyset$. As $F$ and $F'$ are name preserving, there exists $v' \in V(F(G))$ such that $v' \intersectn (v \intersectn v'') \neq \emptyset$ and therefore $v \intersectn v'\neq \emptyset$ and $v' \intersectn v''\neq \emptyset$. Applying both uniform continuities, we obtain the following equalities:

\begin{align}
    F'(F(G))^{n}_{v''} &= F'(F(G)_{v'}^{m'})^{n}_{v''} \\
    &= F'(F(G_v^m)_{v'}^{m'})^{n}_{v''} \\
    &= F'(F(G_v^m))^{n}_{v''}
\end{align}
where the equalities $(1)$ and $(3)$ are obtained from the uniform continuity of $F'$ and $(2)$ is obtained from the uniform continuity of $F$.
\end{proof}
}

\subsection{Simulation of IMCGDs by NCGDs}
\newcommand{\nameinduced}[1]{{{#1}^\#}}
\newcommand{\renaming}{S}
\changes{In order to define an NCGD $\nameinduced{F}(G)$ that simulates an IMCGD $F(X)$, the idea is as follows. First we name the nodes of $X$, starting from its the pointer, in order to obtain $G$. Then, by tracing the names of $G$ through $R_\bullet$, we name the nodes of $F(X)$ and describe $\nameinduced{F}(G)$.}

\begin{definition}[Naming function, compatibility]\label{def:namingfunction}
Consider a function $N:\Pi^*\rightarrow \mathcal{V}$. It is called a {\em naming function} if and only if for all $u,v \in \Pi^*$, $N(u)\intersectn N(v)\neq \emptyset$ implies $N(u)=N(v)$. For any $X \in \mathcal{X}$, we say that $N$ is {\em compatible with} $X$ if we have that for all $u,v \in V(X)$, $N(u) = N(v)$ if and only if $u$ and $v$ designate the same vertex in $X$. We denote by $G=N(X)$ the well-named graph obtained by applying $N$ to every vertex of $X$.
\end{definition}

\begin{lemma}{\label{lem:renaming_naming}}
    Consider a naming function $N:\Pi^*\rightarrow \mathcal{V}$ that is compatible with $X \in \mathcal{X}$, and a renaming $S : \mathcal{V}\rightarrow \mathcal{V}$. Then $M = S \circ N$ is a naming function that is compatible with $X$. 
\end{lemma}
\begin{proof}
$[$Naming function$]$ Consider $u,v \in \Pi^*$ such that $S(N(u))\intersectn S(N(v))\neq \emptyset$. There exists $t,t' \in \{l,r\}^*$ such that $S(N(u)).t = S(N(v)).t'$ and thus $S(N(u).t) = S(N(v).t')$, from which we have $N(u).t=N(v).t'$, i.e. $N(u)\intersectn N(v)\neq \emptyset$. Since $N$ is a compatible naming function, $u=v$, and so $S\circ N$ is a naming function.\\
$[$Compatibility$]$ Let $u,v \in V(X)$. If $u$ and $v$ designate the same vertex, then $N(u) = N(v)$, and thus $S(N(u)) = S(N(v))$.
If $u$ and $v$ do not designate the same vertex, then we have $N(u) \neq N(v)$. Since $S$ is injective, this implies that $S(N(u)) \neq S(N(v))$.
\end{proof}

\begin{definition}[De-anonymization]{\label{def:deanonymization}}
  Let $(F,R_\bullet)$ be an IMCGD over $\mathcal{Y}$. Let $\mathcal{W}'\subseteq \mathcal{W}$ be the set of well-named graphs such that $G \in \mathcal{W}'$ if and only if there exist $Y \in \mathcal{Y}$ and a compatible naming function $N$ such that $G=N(Y)$. 
  
  The dynamics induced on the named graphs $\nameinduced{F}$ is defined such that for all $G \in \mathcal{W}'$, for all $Y \in \mathcal{Y}$ and naming function $N$ such that $G=N(Y)$, we have
  $$\nameinduced{F}(G) = M(F(Y))$$ 
  with $M$ the naming function $N\circ R_Y^{-1}$. 
\end{definition}
\begin{remark}
    Let us prove that $\nameinduced{F}$ is well defined, i.e. that it does not depend on the choices of $Y$ and $N$ that produce $G=N(Y)$. Let $G \in \mathcal{W}, Y,Y' \in \mathcal{Y}$ and two compatible naming function $N,N'$ such that $G=N(Y)=N'(Y')$. By Def.~\ref{def:namingfunction} the graphs $Y, Y'$ must be isomorphic, else the two compatible namings would not be equal everywhere.
    Since $Y$ and $Y'$ are isomorphic to each other, there exists $u\in V(Y)$ such that $Y'=Y_u$. Combining this with the fact that the same $G$ is produced, we have that for any $v \in Y$, $N'(\reversepath{u}v) = N(v)$.\\
    Let $M=N\circ R_Y^{-1}$ and $M'=N'\circ R_{Y'}^{-1}$, we need to prove that $M(F(Y))=M'(F(Y'))$. It suffices to prove that for any $w \in F(Y)$ and letting $w'=\reversepath{R_Y(u)}.w\in F(Y')$, we have $M(w)=M'(w')$. Equivalently, $N(R_Y^{-1}(w)) = N'(R^{-1}_{Y_u}(\reversepath{R_Y(u)}.w))$.\\
    For all $w\in V(F(Y))$, consider $v=R_Y^{-1}(w)\in V(Y)$ and express the shift invariance of $R_\bullet$: 
    \begin{align*}
        R_Y(u).R_{Y_u}(\reversepath{u}v)&=R_Y(v)\\
        R_{Y_u}(\reversepath{u}v) &= \reversepath{R_Y(u)}R_Y(v)\\
        \reversepath{u}v &= R_{Y_u}^{-1}(\reversepath{R_Y(u)}R_Y(v))\\
        N'(\reversepath{u}v) &= N'(R_{Y_u}^{-1}(\reversepath{R_Y(u)}R_Y(v)))\\
        N(v) &= N'(R_{Y_u}^{-1}(\reversepath{R_Y(u)}R_Y(v)))\\
    \end{align*}
    Substituting back $v=R_Y^{-1}(w)$, concludes our proof.
\end{remark}

    %


  

Intuitively, an IMCGD $(F,R_\bullet)$ is therefore simulated by naming the whole graph in such a way as to obtain a well-named graph, and keeping the naming consistent with $R$. Since $R$ is a bijection, it is not even necessary to create/destroy new nodes.

\changes{
This notion of dynamics induced on the named graphs composes well:
\begin{lemma}[Composability of induced NCGD]{\label{lem:inducedcomposability}}
  Let $(F,R_\bullet)$ and $(F',S_\bullet)$ be two IMCGD over $\mathcal{Y}$. Let $\nameinduced{F}$, $\nameinduced{F'}$ be the corresponding dynamics induced on the named graphs. Then $\nameinduced{F'}\circ \nameinduced{F}$ is also the dynamics induced on the named graphs of the IMCGD $(F'\circ F,T_\bullet)$ with $T_Y=S_{F(Y)}\circ R_Y$. 
\end{lemma}
\begin{proof}
In Lemma \ref{lem:im-composability} we proved that $(F'\circ F,T_\bullet)$, is indeed an IMCGD. 
We need to prove that for all $G \in \mathcal{W}'$ for all $Y \in \mathcal{Y}$ and naming function $N$ compatible with $Y$ such that $G=N(Y)$, we have $(\nameinduced{F'}\circ \nameinduced{F})(G) = M(F(Y))$ with $M$ the naming function $N\circ T_Y^{-1}$. Indeed,
\begin{align*}
\nameinduced{F'}(\nameinduced{F}(G)) &=\nameinduced{F'}((N\circ R_Y^{-1})(F(Y)))\\
&=(N\circ R_Y^{-1}\circ S_{F(Y)}^{-1})(F'(F(Y)))\\
&=(N\circ (S_{F(Y)} \circ R_Y)^{-1})(F'(F(Y)))  
\end{align*}
\end{proof}
}

\begin{theorem}\label{th:imcgd_to_ncgd}
The dynamics induced by an IMCGD on named graphs is an NCGD.
\end{theorem}

\begin{proof}
We need to prove that the dynamics induced on the named graphs is name-preserving, bounded, shift-invariant, and uniformly continuous.\\
Name-preservation is easily deduced from the fact that an IMCGD preserves names in an even stricter way: no name is created nor destroyed. For the same reason, boundedness is trivial as names are left unchanged.\\
Now let us focus on shift-invariance. Let $G \in \mathcal{G}$. Let $Y \in \mathcal{Y}$ and a naming function $N$ compatible with $Y$ such that $N(Y)=G$. Let $S$ be a renaming. As $N' = S \circ N$ is also a naming function compatible with $Y$ by Lem. \ref{lem:renaming_naming}, we have:
$$\nameinduced{F}(SG) = S \circ N \circ R_Y^{-1}(Y) = S(\nameinduced{F}(G)).$$
Finally, let us consider uniform continuity. 
\changes{First of all notice that if $v\in V(G)$ and $v'\in V(\nameinduced{F}(G))$ are such that $v \intersectn v'\neq \emptyset$, then, because $G$ is well-named and $\nameinduced{F}$ is name-preserving, $v=v'$. Thus, uniform continuity reduces to for all $m$, there exists $n$ such that for all $v$, $\nameinduced{F}(G)^m_v=\nameinduced{F}(G^n_v)^m_v$: this is what we have to prove.} Consider a naming function $N$ that is compatible with $Y\in \mathcal{Y}$ and such that $N(\varepsilon)=v$ and let $G:=N(Y)$. Without loss of generality by Def.~\ref{def:deanonymization}, we have that $\nameinduced{F}(G)=M(F(Y))$ with $M=N\circ R_Y^{-1}$. But notice that naming functions essentially commute with the disk operation, so that that we have $\nameinduced{F}(G)^m_v=M(F(Y)^m)$.\\
Similarly, we have $\nameinduced{F}(G^n_v)=M'(F(Y^n))$ with $M'=N\circ R_{Y^n}^{-1}$, and 
$\nameinduced{F}(G^n_v)^n_v=M'(F(Y^n)^m)$.\\
Now, by corollary \ref{cor:IMQ_compactness}, $F$ is uniformly continuous, and thus we have that $F(Y^n)^m=F(Y)^m$. So, one the one hand 
$\nameinduced{F}(G)^m_v=M(F(Y)^m)$
and on the other hand 
$\nameinduced{F}(G^n_v)^m_v=M'(F(Y)^m)$.
Thus, we will have shown our result if $M$ and $M'$ coincide over any $u\in V(F(Y)^m)=V(F(Y^n)^m)$. To prove this we start from the uniform continuity of $R_\bullet$. It says that when restricted to the co-domain $V(F(Y)^m)=V(F(Y^n)^m)$ we have that 
\begin{align*}
R_Y&=R_{Y^n}\\ 
R_Y^{-1}&=R_{Y^n}^{-1}\\
N\circ R_Y^{-1}&=N\circ R_{Y^n}^{-1}\\
M&=M'
\end{align*}
We thus have $\nameinduced{F}(G)^m_v=\nameinduced{F}(G^n_v)^m_v$.

\end{proof}


 

\begin{theorem}
Consider $(F,R_\bullet)$ an IMCGD and $\nameinduced{F}$ its dynamics induced on the named graphs. $(F,R_\bullet)$ is invertible if and only if $\nameinduced{F}$ is invertible.
\end{theorem}

\begin{proof}
$[\implies]$ First, suppose $(F,R_\bullet)$ is invertible, and let $({F}^{-1},S_\bullet)$ be its inverse dynamic.

Let us consider $\nameinduced{(F^{-1})}$ as a potential inverse for $\nameinduced{F}$. Let $G\in \mathcal{W'}, Y \in \mathcal{Y}$ and a naming function $N$ such that $G=N(Y)$. Recall that $S_{F(Y)} \circ R_Y=Id$. We have: 

$$\nameinduced{F} \circ \nameinduced{(F^{-1})}(G) = ((N \circ R^{-1}_Y) \circ S^{-1}_{F(Y)})(Y) = N(Y) = G$$

Similarly we have $R_Y\circ S_{F(Y)}=Id$ and thus
$$\nameinduced{(F^{-1})}\circ \nameinduced{F} (G) = ((N \circ S^{-1}_{F(Y)}) \circ R^{-1}_Y) (Y) = N(Y) = G$$

Therefore $\nameinduced{F}$ is invertible and its inverse is $\nameinduced{(F^{-1})}$.

$[\impliedby]$ Now suppose that $\nameinduced{F}$ is invertible. For all $Y \in \mathcal{Y}$, consider a compatible naming function $M$, such that $M(\varepsilon) = 0$ where $0$ is some distinguished element of $\mathcal{V}$. Let $F^*$ be the function such that 
$$F^*(Y)= \anonymous{((\nameinduced{F})^{-1}(M(Y)),0)}.$$

Let us prove that for all $Y\in {\cal Y}$, $F^*(F(Y))=Y$. To this end consider the naming function $N:=M\circ R_Y$, so that $M=N\circ R^{-1}_Y$, and observe that $N(\varepsilon)=M(R_Y((\varepsilon)))=0$. Let $G=N(Y)$ and notice immediately that $\anonymous{(G,0)}=Y$. By definition of $\nameinduced{F}$, we have that $\nameinduced{F}(G)=M(F(Y))$. Since $\nameinduced{F}$ is invertible, $G=\nameinduced{F}^{-1}(M(F(Y)))$. Consequently, $Y=\anonymous{(\nameinduced{F}^{-1}(M(F(Y))),0)}$. By definition of $F^*$, this is indeed $F^*(F(Y))=Y$.\\
Let us prove that for all $Y\in {\cal Y}$, $F(F^*(Y))=Y$. Notice that for any graph $Y' \in Y$ and any compatible naming function $M$, such that $M(\varepsilon) = 0$, we have that $F(Y')=\anonymous{\nameinduced{F}(M(Y')),0)}$. Note that for such an $M$, $M(Y) = (\anonymous{(M(Y),0))}$ and applying it to the definition of $F^*$ we obtain that 
$M(F^*(Y))={\nameinduced{F}}^{-1}(M(Y))$.

\begin{align*}
(F \circ F^*)(Y) &= (\anonymous{\nameinduced{F}(M(F^*(Y)),0)}\\ 
&= (\anonymous{(\nameinduced{F}\circ \nameinduced{F}^{-1})(M(Y)),0)}\\
&= \anonymous{(M(Y),0)}\\ 
&= Y    
\end{align*}





\end{proof}

\subsection{Simulation of NCGDs by IMCGDs}

We will now show that NCGDs can be simulated by IMCGDs. The goal is to induce, from an NCGD, a CGD on graphs with invisible matter such that $R_\bullet$ is a bijection. The main idea is therefore to use names to establish $R_\bullet$. The question then arises of how to deal with invisible matter, given that this is not present in named graphs. The first step is therefore to attach a tree of invisible matter to each vertex of $G$, then to name the new vertices according to the root. Note that the resulting graph $G^\vartriangle$ is not well-named, but this is only an intermediate step necessary for the construction of $R_\bullet$. Once this construction has been obtained, all we have to do is "forget" the names by considering graphs modulo isomorphisms.

\begin{definition}[Named invisible matter graphs]
Let $G \in {\cal W}$ be a well named graph. Construct its {\em associated graph} $G^\vartriangle$ by attaching a tree of invisible matter to each vertex, and naming each vertex in the invisible matter according to the convention in figure \ref{im:naming_map.}. More precisely, if $u$ is a visible vertex and $t \in mm.\{lm,rm\}^*$ in the invisible matter, the vertex obtained by traversing $t$ from $u$ is named $u.\eta(t)$, where $\eta : mm\IM \rightarrow \{l,r\}^*$ is the function such that:
 \begin{itemize}
 \item $\eta(mms)=r^{n+1}$ if $s=(rm)^n$,
 \item $\eta(mms)=s'$, where $s'$ is the word obtained by removing ports $m$ from $s$, otherwise.
\end{itemize}
\end{definition}

\begin{figure}[h]
\begin{center}
 \includegraphics[scale=1.2]{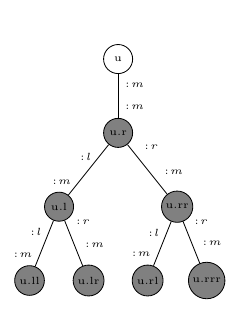}\\
\end{center}
 \caption{\label{im:naming_map.} Invisible matter's naming scheme.}
\end{figure}

The behaviour of an NCGD $F^\vartriangle$ on a named invisible matter graph naturally induces an IMCGD $(F,R_\bullet)$ in the special case of $V(G)=V(F^\vartriangle(G))$. On the other hand, the question becomes unclear when $F^\vartriangle$ performs mergers and separations. In order to track down names through separations and merges, we will use the following functions:

\begin{definition}[Separation of names]
Consider $u \in {\cal V}$ hold. We define $\nu_u : {\cal V} \rightarrow {\cal V}$ the function such that :
\begin{itemize}
 \item $\nu_u(u)= u.l$
 \item $\nu_u(u.lr^n)=u.lr^{n+1}$
 \item $\nu_u(v.t)=v.t$, otherwise.
\end{itemize}
If $A$ is a set of names, we write $\nu_u(A)=\{\nu_{u}(a)\,|\,a\in A\}$. Figure \ref{fig:sigma} shows how this function allows us to track down names when splitting an invisible matter tree in two.
\end{definition}

\begin{figure}[h]
\begin{center}
 \includegraphics[scale=1.2]{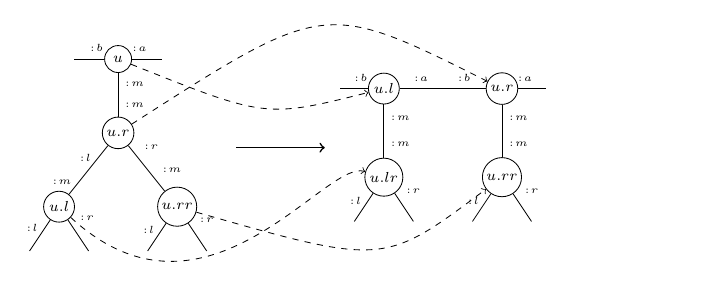}\\
\end{center}
 \caption{\label{fig:sigma} The splitting of a $u.\bint$ tree by $\nu_u$.}
\end{figure}

This operation can be interpreted as changing the name $u$ into $u.l$ and, in order to preserve injectivity, translating the branch $lr^n$. Typically this happens when a root tree $u$ is split into two root trees $u.l$ and $u.r$, as in Fig. \ref{fig:sigma}. The following lemma shows that the names of $G$ and $F(G)$ can always be aligned using such functions $\nu_\bdot$.

\newcommand{\iminduced}[1]{#1^{\vartriangle}}

\begin{lemma}\label{lem:sigmas}
Let $U \subseteq V(G)$ and $U' \subseteq V(\iminduced F(G))$. There exists \changes{$b$, $m\leq b$, $n\leq b$,} $S= \nu_{u_1} \circ ... \circ \nu_{u_n}$ and $S'= \nu_{u'_1} \circ ... \circ \nu_{u'_m}$, such that for all $v \in S(U), v' \in S'(U')$, $v \intersectn v' = \emptyset$ or $v=v'$.
\end{lemma}
\begin{proof}
We decompose the names of $U$ and $U'$ using $\nu$ until they become equal, using the following algorithm:
\begin{algorithmic}
 \State{$S:=\mathbf{I}$}
 \State{$S':=\mathbf{I}$}
  \While{$\exists u \in S(U)$, $\exists {u'}\in S'(U')$, $t,t'=\bint$ 
  such that $u.t=u'.t'$ is the longest common suffix of $t$ and $t'$ is $\varepsilon$}.
  \If{$|t|>|t'|$}
    \State{$S:=\nu_{u}\circ S$}
  \Else{}
    \State{$S':=\nu_{{u'}}\circ S'$}
   \EndIf
  \EndWhile
\end{algorithmic}
\changes{
Let $b'$ be the bound for $\iminduced F(G)$ as given by Def. \ref{def:boundednessfornamed}.
For $n=0$, let $m$ be its radius of uniform continuity. Let $b={|\pi|}^{m+1}2^{b'}.$ Let us prove that $m\leq b$ and $n\leq b$. 
By the choice of $b'$, we know that for all $u\in V(G)$, for all $s\in\{l,r\}^*$ such that \changesII{$|s|\geq b'$}, there exists $u'\in \iminduced F(G)$, $s'\in \{l,r\}^*$ such that $u'.s'=u.s$. Then $b'$ is a bound for $n$. Since $\iminduced F(G)$ is name-preserving, it follows that for all $u'\in V(\iminduced F(G))$, $u'$ is generated by the grammar
$$w,w'::= u_i.s_i \mid w \vee w'$$
with $u_i\in V(G)$ and $s_i\in\{l,r\}^*$ with $|s_i|\leq b'$. Let $u_0.s_0$ be one such fragment of $u'$, by uniform continuity, for all $i$, $u_i\in V(G_{u_0}^r)$. The number of such $u_i$ can be bounded by $|\pi|^{r+1}$. The number of $u_i.s_i$ fragments can be bounded by $|\pi|^{r+1}2^{b'}$. The dept of $u'$, before a $u_i.s_i$ leaf is encountered, can therefore be bounded by $b$. This bounds $m$ by $b'$.}
\hfill$\square$
\end{proof}

So we can now deduce a 'successor' notion for a vertex of a named invisible matter graph evolving through an NCGD.

\begin{definition}[Induced naming]
Let $\iminduced{F}$ be an NCGD. For all $G$, we define its {\em induced naming } $\iminduced{R}_{G}$ as follows. For all $u.t \in V(G).\bint$, for all $u'.t' \in V(\iminduced{F}(G)).\bint$, we have $\iminduced{R}_{G}(u.t)=u'.t'$ if and only if $S(u.t)=S'(u'.t')$, where $S$ and $S'$ are the results of applying the previous Lemma \ref{lem:sigmas} to $\{u\}$ and $\{u'\}$.
\end{definition}

\begin{remark}
This definition is well-founded, i.e. for all $u \in V(G)$ and $t \in \bint$, there is at most one $v$ and $s$ such that $S(u.t)=S'(v.s)$ where $S$ and $S'$ are the results of applying Lemma \ref{lem:sigmas}, see Lemma \ref{lem:soundness_name_map} for the formal proof.
\end{remark}

Now that the NCGD has been extended to act on trees of invisible matter, and it is possible to track the evolution of each of the vertices, all we have to do is forget the names to obtain an IMCGD.

\begin{definition}[Induced dynamics]\label{def:induced_dyn}
Let $\iminduced{F}$ be an NCGD. Its induced dynamics on the invisible matter graphs $(F,R_\bullet)$ is such that for all $G$ and $u.t \in V(G).\bint$ :
 \begin{itemize}
  \item $F(\anonymous{(G^\vartriangle,u.t)})=\anonymous{(\iminduced{F}(G)^\vartriangle,\iminduced{R}_G(u.t) )}$.
  \item $R_{\anonymous{(G^\vartriangle,u.t)}}(p)$ is the path between $\iminduced{R}_G(u.t)$ and $\iminduced{R}_G(v.s)$ in $\iminduced{F}(G)^\vartriangle$, where $v.s$ is obtained by following the path $p$ from $u.t$ in $G^\vartriangle$.  
 \end{itemize}
\end{definition}

\newcommand{\Feta}[2]{\anonymous{(#1^\vartriangle,#2)}}
\begin{theorem}\label{th:ncgd_to_imcgd}
 The dynamics induced by an NCGD is an IMCGD.
\end{theorem}

\begin{proof}
We need to prove that the induced dynamics is translation invariant, continuous, quiescent on invisible matter and vertex-preserving.

Let us first concentrate on translation invariance. Let $Y\in \Y$ be an invisible matter graph, $G\in \mathcal W$ and $a \in V(G).\bint$ be such that $Y=\Feta{G}{a}$. Let $u$ be the path from $a$, and $b$ the vertex obtained by following $u$ from $a$. By definition of $(F,R_\bullet)$ we have the following equalities:
 $$F(Y_u)= F(\Feta{G}{b})=\Feta{\iminduced F(G)}{\iminduced R(b)}=\Feta{\iminduced F(G)}{\iminduced R(a)}_{R_Y(u)}=F(Y)_{R_Y(u)}$$
The equivalence between the paths gives us $R_Y(u.v)=R_Y(u)R_{Y_u}(v)$, which concludes the proof of the translation invariance of $(F,R_\bullet)$.

Let us now consider the preservation of vertices, i.e. the bijectivity of $\iminduced{R}$. The graphs $G$ and $\iminduced F(G)$ both have a symmetric role in the construction of $\iminduced{R}$. Since $\iminduced{R}$ has been proved deterministic in Lemma \ref{lem:soundness_name_map}, $\iminduced R$ is also injective.
 Let $u'.t'\in V(\iminduced {F} (G))$ hold. Since $\iminduced{F}$ preserves the name $u'.t'$, there exists $u \in V(G)$ such that $u \intersectn u'.t' \neq \emptyset$. We can also note that $S'(u'.t') \in u'.t'\bint$. Therefore, there exists $t \in \bint$ such that $u.t=S'(u'.t')$. But as we noted earlier, $u.t$ is a prefix of $S(u.t)$ and therefore $u \intersectn u'.t' \neq \emptyset$. By Lemma \ref{lem:sigmas} we have that $S'(u'.t')$ belongs to $S(u.\bint)$ which concludes the proof of the surjectivity of $\iminduced{R}$.
 
 The quiescence of the invisible matter comes from the fact that for all $u$, the invisible matter vertices $u.l$ and $u.r$ can be moved only via the application of $\nu_u$. Since $F$ is continuous and translation invariant, we can prove that $S$ and $S'$ are composed of a bounded number of $\nu$, which proves the quiescence of invisible matter. See Lemma \ref{lem:im_quiescent} for the formal proof.

Continuity is relatively technical to prove, so the formal proof has been left to the appendix, see Lemma \ref{lemma:induced_dynamic_continuity}. Intuitively, the quiescence on the invisible matter of $\iminduced{F}$ as well as the bound on the number of $\nu$ in $S$ and $S'$ assures us that the vertices are not moved too far to or from the invisible matter, which allows us to concentrate on the visible matter. For the latter, the continuity of $F$ is a direct consequence of the continuity of $\iminduced{F}$, since the name decomposition algorithm can be performed locally. \hfill$\square$
\end{proof}

\begin{theorem}
 An NCGD is invertible if and only if its induced invisible matter dynamics is invertible.
\end{theorem}

\begin{proof}
 First, let us prove that the induced dynamics of an invertible NCGD is invertible.
Let $\iminduced{F}$ be an NCGD. As proved in Theorem \ref{th:ncgd_to_imcgd}, $\iminduced R$ is bijective.
Let ${\iminduced F}^{-1}$ be the dynamics such that $F^{-1}(\Feta{G}{a})=\anonymous{({\iminduced{F}}^{-1}(G)^{\vartriangle},{\iminduced{R}}^{-1}(a))}$.
 We have the equalities :
 \begin{align*}
 F(F^{-1}(\Feta{G}{a}))=&F(\anonymous{({\iminduced{F}}^{-1}(G)^{\vartriangle},{\iminduced{R}}^{-1}(a))})\\=&
 \anonymous{(\iminduced{F}\circ{\iminduced{F}}^{-1}(G)^{\vartriangle},\iminduced{R}\circ {\iminduced{R}}^{-1}(a))}\\=&\Feta{G}{a}
\end{align*}

Similarly, we have :
 \begin{align*}
 F^{-1}(F(\Feta{G}{a}))=&F^{-1}(\anonymous{({\iminduced{F}}(G)^{\vartriangle},{\iminduced{R}}(a))})\\=&
 \anonymous{({\iminduced{F}}^{-1}\circ\iminduced{F}(G)^{\vartriangle},{\iminduced{R}}^{-1}\circ \iminduced{R}(a))}\\=&\Feta{G}{a}
\end{align*}
And so $F$ is bijective and $(F,R_\bullet)$ invertible.

 Assume now that $F$ is bijective.
 Let $G\neq H$ be such that $\iminduced{F}(G)=\iminduced{F}(H)$. For all $y \in V(G)$ we have the following equalities:
 $$F(\Feta{G}{{\iminduced R_G}^{-1}(y)})=\Feta{\iminduced{F}(G)}{y}=\Feta{\iminduced{F}(H)}{y}=F(\Feta{H}{{\iminduced R_H}^{-1}(y)}$$
 By injectivity of $F$ we have $\Feta{G}{{\iminduced R_G}^{-1}(y)}=\Feta{H}{{\iminduced R_H}^{-1}(y)}$, so there is a renaming $\renaming$ such that $G = \renaming H$.
 Then, using translation invariance we obtain $\iminduced{F}(H)=\iminduced{F}(G)= \iminduced{F}(\renaming H)= \renaming \iminduced{F}(H)$, and so for all $u \in V(F(H))=V(H)$, $\renaming(u)=u$ and $G= \renaming H = H$, which conclude the proof of the injectivity of $\iminduced{F}$.
 
Let $H \in \mathcal{W}$, and $a \in V(H)$ hold. Since $F$ is surjective, there exist $G \in \mathcal{W}$ and $b \in V(G)$ such that:
 $$\Feta{\iminduced F(G)}{{\iminduced R_G}(b)}=F(\Feta{G}{b})= \Feta{H}{a}$$
Therefore, there exists a renaming $\renaming$ such that $\renaming \iminduced F(G) = H$. By invariance by translation, we have $\iminduced F(\renaming G) = H$ and the surjectivity of $\iminduced{F}$.
This concludes the  proof of the bijectivity of $\iminduced{F}$. \hfill$\square$
 \end{proof}


\section{Conclusion}\label{sec:conclusion}

\noindent {\em Summary of contributions.} Previous works had raised the question whether parallel reversible computation allows for the local creation/destruction of nodes. Three different negative answers had been given in \cite{ArrighiCGD,ArrighiIC,ArrighiRCGD}. But we just described three relaxed settings in which this is possible: Causal Graph Dynamics over fully-anonymized graphs (ACGD); over pointer graphs modulo with invisible matter reservoirs (IMCGD); and finally CGD over graphs whose vertex names are in the algebra of `everywhere infinite binary trees' (NCGD). For each of these formalisms, we proved non-preservation of vertices by implementing the Hasslacher-Meyer example \cite{MeyerLGA}---see Examples \ref{ex:AHM}, \ref{ex:IMHM}, \ref{ex:NHM}. We also proved that we still had the classic Cellular Automata (CA) result that invertibility (i.e. mere bijectivity of the dynamics) implies reversibility (i.e. the inverse is itself a CGD)---via compactness---see Theorem \ref{th:imrev} in the case of IMCGDs, and we conjecture this is also the case for ACGDs and NCGDs Conjectures in \ref{conj:nrev}, \ref{conj:arev}. The answer to the question of reversibility versus local creation/destruction is thus formalism-dependent to some extent. We proceeded to examine the extent in which this is the case, and were able to show that (Reversible) ACGD, IMCGD and NCGD directly simulate each other---see Theorems \ref{th:imcgd_to_cgd}, \ref{th:cgd_to_imcgd}, \ref{th:ncgd_to_imcgd}, \ref{th:imcgd_to_ncgd}. They are but three presentations, in different levels of details, of a single robust setting in which reversibility and local creation/destruction are reconciled. 
\changes{Notice that the simulation of the ACGDs by the IMCGDs may fail to hold for some partially-defined IMCGD, i.e. having domain over certain subsets ${\cal S}\subseteq{\cal X}_{\Sigma,\pi}$. We believe this happens whenever ${\cal S}$ is such that witnessing a local symmetry $X^r=X^r_u$, allows us to conclude that the entire graph is symmetric, i.e. $X=X_u$. We leave this as a conjecture.}

\noindent {\em Perspectives.} 
\changes{Generally speaking, the relationship between CGDs and graph subshifts \cite{ArrighiSubshifts} needs to investigated, for instance expansive graph subshifts ought to be spacetime diagrams of partially-defined CGDs, and partially defined CGDs may help formalise simulations between different graph subshifts.}
Now that we proved that reversibility and local creation/ destruction of nodes are indeed compatible, one can ask whether such dynamics can make a graph grow indefinitely. We have answered this question positively in \cite{TimeArrow} for a variant of the HM example---based upon the formalism of the present paper. Besides the intriguing physical interpretations of this result as `toy model provably featuring a time arrow without past hypothesis', this potentially opens a number of intriguing mathematical questions: {\em What growth rate are achievable by Reversible CGD? Can that the growth happen homogeneously across the graph? Are indefinitely growing dynamics typical, amongst Reversible CDG?}

Just like Reversible CA were precursors to Quantum CA \cite{SchumacherWerner,ArrighiUCAUSAL}, Reversible CGD have paved the way for Quantum CGD \cite{ArrighiQCGD}. Toy models where time-varying topologies are reconciled with quantum theory, are of central interest to the foundations of theoretical physics \cite{QuantumGraphity1,QuantumGraphity2}---as it struggles to have general relativity and quantum mechanics coexist and interact. The `models of computation approach' brings the clarity and rigor of theoretical CS to the table, whereas the `natural and quantum computing approach' provides promising new abstractions based upon `information' rather than `matter'. Quantum CGD \cite{ArrighiQCGD}, however, lacked the ability to locally create/destroy nodes---which is necessary in order to model physically relevant scenarios. We have fixed this in \cite{ArrighiQNT} based on the findings of the present paper.

\section*{Acknowledgements} 
The authors acknowledge Gilles Dowek and Simon Martiel for many enlightening discussions. \changes{The manuscript has also been considerably improved thanks to the considerable time and effort dedicated to us by two anonymous referees.} This project/publication was made possible through the support of the ID\# 62312 grant from the John Templeton Foundation, as part of the \href{https://www.templeton.org/grant/the-quantum-information-structure-of-spacetime-qiss-second-phase}{‘The Quantum Information Structure of Spacetime’ Project (QISS)}. The opinions expressed in this project/publication are those of the author(s) and do not necessarily reflect the views of the John Templeton Foundation.

\bibliography{biblio}

\appendix

\section{Formalism}\label{app:formalism}

This appendix provides formal definitions of the kinds of graphs we are using, together with the operations we perform upon them. None of this is specific to the reversible case; it can all be found in \cite{ArrighiCayleyNesme} and is reproduced here only for convenience.

\subsection{Graphs}\label{app:graphs}

\noindent Let $\pi$ be a finite set, $\Pi=\pi^2$, and ${\cal V}$ some universe of names.

\begin{definition}[Graph non-modulo]\label{def:graphs}
A {\em graph non-modulo} $G$ is given by 
\begin{itemize}
\item[$\bullet$] An at most countable subset $V(G)$ of ${\cal V}$, whose elements are called {\em vertices}.
\item[$\bullet$] A finite set $\pi$, whose elements are called {\em ports}.
\item[$\bullet$] A set $E(G)$ of non-intersecting two element subsets of $V(G):\pi$, whose elements are called edges. In other words an edge $e$ is of the form $\{u : a, v : b\}$, and $\forall e,e'\in E(G), e\intersectn e'\neq \emptyset \Rightarrow e=e'$. 
\item[$\bullet$] A partial function $\sigma$ from $V(G)$ to a finite set $\Sigma$;
\end{itemize}
The graph is assumed to be connected: for any two $u,v\in V(G)$, there exists $v_0,\ldots , v_{n}\in V(G)$, $a_0,b_0\ldots , a_{n-1},b_{n-1}\in \pi$ such that for all $i\in\{0\ldots n-1\}$, one has $\{v_i: a_i,v_{i+1}: b_i\}\in E(G)$ with $v_0=u$ and $v_n=v$.\\
The {\em set of graphs} with states in $\Sigma$ and ports $\pi$ is denoted by ${\cal G}_{\Sigma,\pi}$.
\end{definition}

We single out a vertex as the origin:
\begin{definition}[Pointed graph non-modulo]\label{def:pointedgraph}
A {\em pointed graph} is a pair $(G,p)$ with $p\in V(G)$. 
The {\em set of pointed graphs} with states in $\Sigma$ and ports $\pi$ is denoted by ${\cal P}_{\Sigma,\pi}$.
\end{definition}

The following notion expresses the idea that two graphs differ only up to the names of the vertices:
\begin{definition}[Isomorphism]\label{def:isomorphism}
An {\em isomorphism} $R$ is a function from ${\cal G}_{\pi}$ to ${\cal G}_{\pi}$ which is specified by a bijection $R(.)$ from ${\cal V}$ to ${\cal V}$. 
The image of a graph $G$ under the isomorphism $R$ is a graph $RG$ whose set of vertices is $R(V(G))$, and whose set of edges is $\{\{R(u):a,R(v):b\} \;|\; \{u:a,v:b\}\in E(G) \}$. 
Similarly, the image of a pointed graph $P=(G,p)$ is the pointed graph $RP=(RG,R(p))$. 
When $P$ and $Q$ are isomorphic we write $P\approx Q$, defining an equivalence relation on the set of pointed graphs. The definition extends to pointed labelled graphs.
\end{definition}
Pointed graph isomorphism rename the pointer in the same way as it renames the vertex upon which it points; which effectively means that the pointer does not move.

\begin{definition}[Pointed graphs modulo]\label{def:pointedmodulo}
Let $P$ be a pointed (labelled) graph $(G,p)$. The {\em pointed graph modulo} $X(P)$ is the equivalence class of $P$ with respect to the equivalence relation $\approx$. The {\em set of pointed graphs modulo} with ports $\pi$ is denoted by ${\cal X}_{\pi}$. The {\em set of labelled pointed Graphs modulo} with states $\Sigma$ and ports $\pi$ is denoted by ${\cal X}_{\Sigma,\pi}$.
\end{definition}

\subsection{Paths and vertices}\label{app:paths}

Vertices of pointed graphs modulo isomorphism can be designated by a sequence of ports in $\Pi^*$ that leads, from the origin, to this vertex.

\begin{definition}[Path]\label{def:path}
Given a pointed graph modulo $X$, we say that $\alpha\in\Pi^*$ is a path of $X$ if and only if there is a finite sequence $\alpha=(a_i b_i)_{i\in\{0,...,n-1\}}$ of ports such that, starting from the pointer, it is possible to travel in the graph according to this sequence. More formally, $\alpha$ is a path if and only if there exists $(G,p)\in X$ and there also exists $v_0,\ldots , v_{n}\in V(G)$ such that for all $i\in\{0\ldots n-1\}$, one has $\{v_i: a_i,v_{i+1}: b_i\}\in E(G)$, with $v_0=p$ and $\alpha_i=a_ib_i$. Notice that the existence of a path does not depend on the choice of $(G,p)\in X$. The set of paths of $X$ is denoted by $V(X)$. 
\end{definition}
Paths can be seen as words on the alphabet $\Pi$ and thus come with a natural operation `$.$' of concatenation, a unit $\varepsilon$ denoting the empty path, and a notion of inverse path $\reversepath{\alpha}$ which stands for the path $\alpha$ read backwards, i.e for $\alpha=(a_ib_i)_{i\in\{0,...,n-1\}}$ we have $\alpha^{-1}=(b_ia_i)_{i\in\{n-1,...,0\}}$. Two paths are equivalent if they lead to same vertex:
\begin{definition}[Equivalence of paths]\label{def:equivpaths}
Given a pointed graph modulo $X$, we define the {\em equivalence of paths} relation $\equiv_{X}$ on $V(X)$ such that for all paths $\alpha,\alpha'\in V(X)$, $\alpha\equiv_{X} \alpha'$ if and only if, starting from the pointer, $\alpha$ and $\alpha'$ lead to the same vertex of $X$. 
More formally, $\alpha\equiv_{X}\alpha'$ if and only if there exists $(G,p)\in X$ and $v_1,\ldots , v_{n},v'_1,\ldots , v'_{n'}\in V(G)$ such that for all $i\in\{0\ldots n-1\}$, $i'\in\{0\ldots n'-1\}$, one has $\{v_i: a_i,v_{i+1}: b_i\}\in E(G)$, $\{v'_{i'}: a'_{i'},v'_{i'+1}: b'_{i'}\}\in E(G)$, with  $v_0=p$, $v'_0=p$, $\alpha=(a_ib_i)_{i\in\{0,...,n-1\}}$, $\alpha'=(a'_{i'}b'_{i'})_{i\in\{0,...,n'-1\}}$ and $v_{n}=v_{n'}$.
We write $\hat{\alpha}$ for the equivalence class of $\alpha$ with respect to $\equiv_X$.
\end{definition}

It is useful to undo the modulo, i.e. to obtain a canonical instance $(G(X),\varepsilon)$ of the equivalence class $X$.
\begin{definition}[Associated graph]\label{def:associatedgraph}
Let $X$ be a pointed graph modulo. Let $G(X)$ be the graph such that:
\begin{itemize}
\item[$\bullet$] The set of vertices $V(G(X))$ is the set of equivalence classes of $V(X)$;
\item[$\bullet$] The edge $\{\hat{\alpha}: a,\hat{\beta}: b\}$ is in $E(G(X))$ if and only if $\alpha.ab \in V(X)$ and $\alpha.ab\equiv_X \beta$, for all $\alpha\in \hat{\alpha}$ and $\beta\in \hat{\beta}$.
\end{itemize}
We define the {\em associated graph} to be $G(X)$.
\end{definition}

\noindent {\em Notations.} The following are three presentations of the same mathematical object:
\begin{itemize}
\item[$\bullet$] a graph modulo $X$,
\item[$\bullet$] its associated graph $G(X)$
\item[$\bullet$] the algebraic structure $\langle V(X),\equiv_X\rangle$
\end{itemize}
Each vertex of this mathematical object can thus be designated by
\begin{itemize}
\item[$\bullet$]  $\hat{\alpha}$ an equivalence class of $V(X)$, i.e. the set of all paths leading to this vertex starting from $\hat{\varepsilon}$,
\item[$\bullet$] or more directly by $\alpha$ an element of an equivalence class $\hat{\alpha}$ of $X$, i.e. a particular path leading to this vertex starting from $\varepsilon$.
\end{itemize}
These two remarks lead to the following mathematical conventions, which we adopt for convenience:
\begin{itemize}
\item[$\bullet$] $\hat{\alpha}$ and $\alpha$ are no longer distinguished unless otherwise specified. The latter notation is given the meaning of the former. We speak of a ``vertex'' $\alpha$ in $V(X)$.
\item[$\bullet$] It follows that `$\equiv_X$' and `$=$' are no longer distinguished unless otherwise specified. The latter notation is given the meaning of the former. I.e. we speak of ``equality of vertices'' $\alpha=\beta$ (when strictly speaking we just have $\hat{\alpha}=\hat{\beta}$).
\end{itemize}

\subsection{Operations over pointed Graphs modulo}\label{app:operationsmodulo}

\noindent {\em Sub-disks}. For a pointed graph $(G,p)$ non-modulo:
\begin{itemize}
\item[$\bullet$] the neighbours of radius $r$ are just those vertices which can be reached in $r$ steps starting from the pointer $p$;
\item[$\bullet$] the disk of radius $r$, written $G^r_p$, is the subgraph induced by the neighbours of radius $r+1$, with labellings restricted to the neighbours of radius $r$, and pointed at $p$.
\item[$\bullet$] this is readily extended to a set $V\subseteq V(G)$ with $p\in V$, i.e. $G_V^r$ is the subgraph induced by the neighbours of radius $r+1$ of the vertices in $V$, with labellings restricted to the neighbours of radius $r$, and pointed at $p$.
\end{itemize}
For a graph modulo, on the other hand, the analogous operation is:
\begin{definition}[Disk]
Let $X\in {\cal X}_{\Sigma,\pi}$ be a pointed graph modulo and $G$ its associated graph. 
Let $X^r$ be $X(G^r_\varepsilon)$.
The graph modulo $X^r\in {\cal X}_{\Sigma,\pi}$ is referred to as the {\em disk of radius $r$} of $X$. The {\em set of disks of radius $r$} with states $\Sigma$ and ports $\pi$ is denoted by ${\cal X}^r_{\Sigma,\pi}$.
\end{definition}

\begin{definition}[Size]\label{def:size}
Let $X\in {\cal X}_{\Sigma,\pi}$ be a pointed graph modulo. We say that a vertex $u\in V(X)$ has size less or equal to $r+1$, and write $|u|\leq r+1$, if and only if $u\in V(X^r)$.  
\end{definition}

\noindent {\em Shifts} just move the pointer vertex:
\begin{definition}[Shift]\label{def:shift}
Let $X\in {\cal X}_{\Sigma,\pi}$ be a pointed graph modulo and $G$ its associated graph. 
Consider $u\in V(X)$ or $X^r$ for some $r$, and consider the pointed graph $(G,u)$, which is the same as $(G,\varepsilon)$ but with a different pointer. Let $X_u$ be $X(G,u)$. The pointed graph modulo $X_u$ is referred to as {\em $X$ shifted by $u$}.\\ 
\end{definition}

\subsection{Compactness of the Invisible Extension}\label{subsec:IMCGDcompactness}

The main result of this subsection is that, although ${\cal Y}$ is not a compact subset of ${\cal X}'$ by itself, IMCGD can be extended continuously over the  compact closure of ${\cal Y}$ in ${\cal X}'$.\\
Indeed ${\cal Y}$ is not a compact subset of ${\cal X}'$, for instance the sequence $(Y_{\bdot(lm)^r})_{r \in \N}$, pointing ever further into the invisible matter, has no convergent subsequence in ${\cal Y}$ but has one in $\mathcal{X}'$.

\begin{proposition}[Closure of invisible]\label{closure no visible point}
$Y'$ is in $\overline{{\cal Y}}$ and has no visible matter if and only if there exists $(u_n)$ a sequence of path in $\{lm,rm\}^n$ such that $u_n$ is a suffix of $u_{n+1}$, $Y'^n = T_{u_n}^n$, and 
$$Y' = \bigcup \limits_{n = 1}^\infty \nearrow T_{u_n}$$ i.e. $Y'$ is the non-decreasing union of the $(T_{u_n})$. 
\end{proposition}

\begin{proof}

First notice that $T_v$ is a sub-graph of $T_{uv}$. Indeed, by definition of $T$, the vertex $u$ in $T$ is the root of a copy of $T$, thus $T$ is a subgraph of $T_u$. Shifting this statement by $v$, $T_v$ is a sub-graph of $T_{uv}$. Thus it makes sense to speak about their non-decreasing union.\\
Next, for any $Y$, $Y = \bigcup \limits_{n = 1}^\infty \nearrow Y^n$. So if $Y'$ is a graph of $\overline{{\cal Y}}$ with no point in the visible matter, then
$$ Y' = \bigcup \limits_{n = 1}^\infty \nearrow Y'^n = \bigcup \limits_{n = 1}^\infty \nearrow T_{u_n}^{n}.$$
Conversely, any such non-decreasing union is equal to  $\lim_n (Z_{\bdot u_n})$, for any graph $Z$ of ${\cal X}$ completed into an element of ${\cal Y}$, thus it belongs to ${\cal \overline{Y}}$.
\hfill$\square$ \end{proof}

\begin{lemma}[Visible starting paths]\label{lem:vispaths}
Consider $Y'$ in $\overline{{\cal Y}}$ with $\varepsilon$ visible, and $v$ in $V(Y')$. Then $v$ can be decomposed as $u\bdot t$, with $u$ in $\Pi^*$, and $t$ in $\{lm,rm\}^*$ if and only if $v$ is invisible. Moreover, for any $s,t$ in $\{lm,rm\}^*$, we have that $\bdot s,\bdot t$ are in $Y'$, and $\bdot s\equiv_{Y'} \bdot t$ if and only if $s=t$.
\end{lemma}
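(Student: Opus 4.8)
The plan is to unfold the definition of the closure $\overline{{\cal Y}}$ and use the fact that membership in $\overline{{\cal Y}}$ is witnessed, disk by disk, by genuine invisible-matter graphs $Y(r)\in{\cal Y}$, whose structure around the origin is completely understood from Definition \ref{def:IMGraphs}. First I would fix $Y'\in\overline{{\cal Y}}$ with $\varepsilon$ visible and let $v\in Y'$ be given by some path $\alpha\in\Pi^*$ over the extended port set $\pi\cup\{m,l,r\}$. Taking $r=|\alpha|$ (or slightly larger to also see the endpoint's label), there is $Y(r)\in{\cal Y}$ with $Y(r)^r=Y'^r$, so the path $\alpha$ and all incidences along it are the same in $Y(r)$. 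Now in $Y(r)$, which is an honest element of ${\cal Y}$, the only edges using ports in $\{m,l,r\}$ are those internal to the attached binary trees or the $mm$-edges linking a visible vertex to the root of its tree; in particular, once a path leaves the visible part via an $mm$ step it can never come back (the tree has no port $m$ at its root pointing "up" other than the one already used, and the visible graph uses only ports in $\pi$). This forces the announced decomposition: $\alpha$ is equivalent in $Y'$ to a word $u[\bdot t]$ with $u\in\Pi^*$ a path staying in the visible sector and $t\in\{lm,rm\}^*$ a path descending into the tree attached at the visible vertex $u$; and whenever $[\bdot t]$ is non-empty, $v$ lies strictly inside an invisible tree, hence is invisible.

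Next I would establish that every $\bdot s$ with $s\in\{lm,rm\}^*$ is indeed a vertex of $Y'$ and that $\bdot s\equiv_{Y'}\bdot t\iff s=t$. For existence, the same closure argument applied with $r$ large enough shows $Y(r)^r$ contains the full initial segment of the tree attached at $\varepsilon$, so every such path is realized; since this holds for all $r$, the whole tree sits inside $Y'$. For the injectivity statement, I would argue again inside a witnessing $Y(r)\in{\cal Y}$ with $r\ge\max(|s|,|t|)$: there, the attached tree at $\varepsilon$ is literally the infinite binary tree $T$ of Definition \ref{def:IMGraphs}, in which distinct finite $\{lm,rm\}$-words plainly lead to distinct vertices (it is a tree, so it has no cycles, and the labels $lm$ versus $rm$ on the two descending edges distinguish the branches). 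Since $Y(r)^r=Y'^r$ and $r$ dominates both path lengths, the equivalence (or non-equivalence) of $\bdot s$ and $\bdot t$ is the same in $Y'$ as in $T$, giving $s=t$.

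The main obstacle I anticipate is not any single deep step but the bookkeeping needed to make "disk equality $Y(r)^r=Y'^r$ transfers path structure" fully rigorous: one must be careful that the disk of radius $r$ really does determine all edges and labels along a path of length $\le r$ (this is where the precise convention of Definition~\ref{def:associatedgraph} and the disk definition matter — the disk of radius $r$ is the subgraph induced by neighbours of radius $r+1$, so a path of length $r$ and its endpoint's incidences are captured, and one should take the witness $Y(r')$ for a slightly larger $r'$ to be safe), and that the "no return from invisible matter" property is genuinely a local property visible within a bounded disk, so that it transfers from $Y(r)$ to $Y'$. Once that transfer principle is cleanly stated and proved, all three assertions of the lemma follow by the routine case analysis sketched above. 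I would therefore begin by isolating and proving a small auxiliary claim: \emph{if $Y'\in\overline{{\cal Y}}$ and $r\ge |\alpha|+1$, then $\alpha$ is a path of $Y'$ iff it is a path of $Y(r)$, and likewise for path equivalence of two paths of length $\le r$ and for the labels of their endpoints} — and then invoke it three times.
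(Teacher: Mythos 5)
Your proposal is correct and follows essentially the same route as the paper: first read off the structure on an honest witness $Y(r)\in{\cal Y}$ (where the decomposition $u[\bdot t]$, the existence of $\bdot s$, and the tree structure are immediate from Definition~\ref{def:IMGraphs}), then transfer to $Y'$ via a disk-equality witness of radius dominating the path lengths, exactly as the paper does with $n=\max(|v|,|s|+1,|t|)$. The only nitpick is that ``once a path leaves the visible part via an $mm$ step it can never come back'' is literally false (edges can be retraversed backwards); what you need, and what the paper states, is that the \emph{minimal} path to $v$ has the form $u[\bdot t]$ — your conclusion phrased as ``$\alpha$ is equivalent in $Y'$ to a word $u[\bdot t]$'' is the right one.
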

\begin{proof} First consider $Y\in {\cal Y}$ with $\varepsilon$ visible. Clearly $\bdot s$ is in $Y$ and is the minimal path to $\bdot s$, as the invisible matter is tree. If $v$ is invisible then $v$ has only one node $u$ connecting its invisible matter tree to the visible matter, $v$ can be minimally decomposed into $u\bdot t$ with $t$ in $\{lm,rm\}^*$.\\
The same holds in the closure. Indeed consider $Y'$ in $\overline{{\cal Y}}$ with $\varepsilon$ visible. Let $n=\max(|v|,|s|+1,|t|)$ and pick $Y$ in ${\cal Y}$ such that $Y'^{n} = Y^{n}$. By definition of $|v|$ we have that $p$ is a shortest path from $\varepsilon$ to $v$ in $Y$ if and only if it is one in $Y'$. Therefore the form of the decomposition of $v$, and its invisibility when $t$ is not empty, carry through to $Y'$. So does the existence of $\bdot s$. Finally, we have that $\bdot s\equiv_{Y'} \bdot t$ implies $\bdot s\equiv_{Y} \bdot t$, which implies $s=t$, due to the tree structure of the invisible matter in $Y$. \hfill$\square$
\end{proof}

\begin{lemma}[Invisible starting paths]\label{lem:invpaths}
Consider $Y'$ in $\overline{{\cal Y}}$ with $\varepsilon$ invisible, and $v$ in $Y'$. Then $v$ can be decomposed as $\reversepath{s}\bdot u$, in which case $v$ is visible, or as $\reversepath{s}\bdot u\bdot t$, in which case $v$ is invisible---with $u$ in $\Pi^*$, and $s,t$ in $\{lm,rm\}^*$. Moreover, any $s,t$ in $\{lm,rm\}^*$, are also in $Y'$, and we have that $s\equiv_{Y'}t$ if and only if $s=t$.
\end{lemma}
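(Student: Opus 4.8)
The plan is to mirror the structure of the proof of Lemma \ref{lem:vispaths}, transporting everything through the compact closure by a finite-radius approximation argument, but with the roles of ``visible'' and ``invisible'' origins swapped. First I would settle the statement for an honest invisible-matter graph $Y \in {\cal Y}$ with $\varepsilon$ invisible. By Definition \ref{def:IMGraphs}, every invisible vertex sits inside some attached binary tree $T$, which is hung on a unique visible vertex through the path $mm$; hence from an invisible origin $\varepsilon$ there is a unique shortest way ``up'' the tree to its visible root, which is of the form $\overline{s}$ with $s \in \{lm,rm\}^*$ (the reverse of the descent $s$ that would reach $\varepsilon$ from the root). Concatenating with a visible path $u \in \Pi^*$ reaches visible vertices, giving the decomposition $\overline{s}\bdot u$; continuing back down into some attached tree with a word $t \in \{lm,rm\}^*$ reaches invisible vertices, giving $\overline{s}[\bdot u\bdot]t$. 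The bracket $[\bdot u\bdot]$ accounts for the case where $v$ lies in the very same tree as $\varepsilon$ (then $u$ is empty and no visible vertex is traversed), versus a tree attached to some other visible vertex. One checks by the tree structure that this decomposition, taken with $u$ the shortest visible connector and $s,t$ of minimal length, is the minimal path, and that it determines the visibility of $v$ exactly as stated. Finally, each pure tree-word $s \in \{lm,rm\}^*$ is itself a legitimate path from $\varepsilon$ (descending into $\varepsilon$'s own attached tree), and since distinct words address distinct nodes of a binary tree, $s \equiv_Y t \iff s=t$.

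Next I would lift this to $Y' \in \overline{{\cal Y}}$ with $\varepsilon$ invisible. The key observation, exactly as in Lemma \ref{lem:vispaths}, is that all the relevant data — the existence of the decomposition, which piece $v$ falls into, whether $v$ is visible, and the equivalence $s \equiv t$ — are determined by a bounded-radius ball around $\varepsilon$. So, given $v$ and given $s,t$ one wishes to compare, set $n = \max(|v|, |s|, |t|)$ (with a $+1$ where needed, as in the earlier proof, to see one extra layer of edges), and pick $Y \in {\cal Y}$ with $Y'^n = Y^n$, which exists by Definition \ref{def:closure}. Shortest paths from $\varepsilon$ of length $\le n$ coincide in $Y$ and $Y'$, so the minimal decomposition of $v$ computed in $Y$ is valid in $Y'$, its shape and the visibility verdict transfer, and the presence of the tree-words $s,t$ as paths transfers. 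For the equivalence, $\bdot s \equiv_{Y'} \bdot t$ implies $s \equiv_Y t$ (the witnessing walks have length $\le n$), hence $s = t$ by the tree-rigidity in $Y$; the converse is trivial.

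The main obstacle I anticipate is bookkeeping rather than conceptual: getting the bracket notation $[\bdot u\bdot]$ and the exact placement of the separator $\bdot$ right, so that the one formula uniformly covers (i) $v$ in $\varepsilon$'s own tree, (ii) $v$ a visible vertex, and (iii) $v$ in a different tree, and so that uniqueness of the minimal decomposition genuinely holds (one must rule out, e.g., spurious detours that go up a tree, across some visible edges, and back down to a vertex reachable more directly). This is handled by invoking that $mm$ is the unique attachment path and that within each tree the port discipline forces the ``shape'' of a geodesic to be ``up then across then down''. Once the geometry of a single invisible-matter graph is pinned down, the closure argument is the same finite-radius approximation already used for Lemma \ref{lem:vispaths}, so I would state it briefly and refer back.
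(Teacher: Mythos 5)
Your proposal is correct and follows essentially the same route as the paper, whose entire proof of this lemma is the single line ``Same proof scheme as in Lemma~\ref{lem:vispaths}'': establish the decomposition and tree-rigidity for an honest $Y\in{\cal Y}$ with invisible origin, then transfer to $Y'\in\overline{{\cal Y}}$ by choosing $n$ large enough and a $Y$ with $Y'^n=Y^n$. Your write-up just makes explicit the details the paper leaves implicit.
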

\begin{proof}
Same proof scheme as in Lemma \ref{lem:vispaths}.  
\hfill$\square$ \end{proof}

\begin{lemma}[Finite invisible root]\label{lem:finiteinvroot}
Consider $Y'$ in  $\overline{{\cal Y}}$. If $Y'$ has no visible matter, then, for all $n$, there exists a unique word $u_n$ in $\{lm,rm\}^n$ such that $Y^n = T_{u_n}^n$. As a consequence, $\reversepath{u}_n$ is the unique word in $\{ml,mr\}^n$ such that $\reversepath{u}_n$ is in $Y'$. Moreover, if $p \leq n$, then $u_p$ is a suffix of $u_n$.
\end{lemma}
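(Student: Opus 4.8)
The plan is to exploit the very definition of the closure: $Y'$ belongs to $\overline{{\cal Y}}$ iff for every $r$ there is some $Y(r)\in{\cal Y}$ with $Y(r)^r=Y'^r$. First I would fix $n$ and pick $Y=Y(n+1)\in{\cal Y}$ with $Y^{n+1}=Y'^{n+1}$. By Definition~\ref{def:IMGraphs}, $Y$ is an element $X\in{\cal X}$ with an instance of the infinite binary tree $T$ attached through path $mm$ at each visible vertex; its invisible part is thus, near $\varepsilon$, a disjoint union of copies of $T$ rooted at the visible vertices. Since $Y'$ has no visible matter, the hypothesis $Y^{n+1}=Y'^{n+1}$ forces the ball of radius $n+1$ around $\varepsilon$ in $Y$ to contain no visible vertex either. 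The key structural observation is then: in $T$ (equivalently, in the invisible forest of $Y$), any vertex at distance $\le n$ from a chosen root lies on a \emph{unique} shortest path back to the nearest visible vertex, and that path, read from the visible vertex, is a word in $\{lm,rm\}^*$ of length equal to the graph distance; conversely reading it backwards from $\varepsilon$ gives a word in $\{ml,mr\}^*$. Applying this to the copy of $T$ containing $\varepsilon$, there is a unique visible vertex $w$ within distance $n+1$ of $\varepsilon$ — or none, in which case one works directly with the ambient $T$ pointed deep inside — and a unique word $u_n\in\{lm,rm\}^n$ with $\varepsilon = w_{mm\,u_n}$, so that $T_{u_n}^n = Y^n = Y'^n$. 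Uniqueness of $u_n$ is exactly the statement (proved already as the second part of Lemma~\ref{lem:invpaths}) that distinct words in $\{lm,rm\}^*$ are non-equivalent paths in the tree; I would cite that rather than redo the tree combinatorics.

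The second assertion, that $\overline{u}_n$ is the unique word in $\{ml,mr\}^n$ lying in $Y'$, is the "read backwards" reformulation: a length-$n$ word over $\{ml,mr\}$ is, reversed, a length-$n$ word over $\{lm,rm\}$, and such a word is a path of $Y'$ from $\varepsilon$ precisely when the corresponding geodesic exists in $T_{u_n}^n=Y'^n$, which pins it down as $\overline{u}_n$; here I would again invoke Lemma~\ref{lem:invpaths} for existence and non-collapse. For the compatibility clause — if $p\le n$ then $u_p$ is a suffix of $u_n$ — I would argue that $T_{u_n}^p = (Y'^n)^p = Y'^p = T_{u_p}^p$ (disks of a disk are disks), so both $u_p$ and the length-$p$ suffix $v$ of $u_n$ satisfy $T_v^p=Y'^p$ with $v\in\{lm,rm\}^p$; by the uniqueness already established, $v=u_p$. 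Equivalently, one can phrase this via the natural inclusion $Y'^p\hookrightarrow Y'^n$ and the fact that a geodesic to a vertex of size $\le p$ computed in $Y'^n$ agrees with the one computed in $Y'^p$.

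The main obstacle I anticipate is purely bookkeeping: being careful that "$Y'$ has no visible matter" is used correctly to rule out the branch of Lemma~\ref{lem:invpaths} where a decomposition $\overline{s}\bdot u$ with $u\in\Pi^*$ nonempty would produce a visible vertex, and handling the boundary case where the radius-$(n+1)$ ball of the witness $Y$ genuinely contains a visible vertex versus where it does not (in the latter case $\varepsilon$ sits in an ambient infinite tree rather than in a rooted reservoir, but the same uniqueness of geodesics in $T$ applies). Everything else is an appeal to Lemma~\ref{lem:invpaths}, Definition~\ref{def:closure}, and the elementary fact that geodesics in a tree are unique, so no serious calculation is required.
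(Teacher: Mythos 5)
Your proposal follows essentially the same route as the paper: take a witness $Y\in{\cal Y}$ from the definition of the closure, observe that the absence of visible matter in $Y'^n$ forces the pointer of $Y$ to sit at depth greater than $n$ inside an invisible-matter tree, read off $u_n$ as the length-$n$ suffix of the descent path so that $Y'^n=T_{u_n}^n$, and deduce uniqueness and the suffix-compatibility from the rigidity of the pointed tree disks. Two cosmetic points only: the equation $\varepsilon=w_{mm\,u_n}$ should read $\varepsilon=w_{mm\,t\,u_n}$ for some prefix $t$ (otherwise a visible vertex would lie within distance $n+1$, contradicting what you just established), and the uniqueness of $u_n$ rests on the upward path in $\{ml,mr\}^n$ from the pointer being unique rather than on the downward-path clause of Lemma~\ref{lem:invpaths} that you cite --- both trivially repaired.
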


\begin{proof}
Consider $Y'$ in  $\overline{{\cal Y}}$. Pick $Y$ in ${\cal Y}$ such that $Y^n = Y'^n$. Since $Y'$ has no visible vertex, $Y' = Y_{\bdot s}$ with $|s| > n$, and we can take $u$ to be the suffix of length $n$ of $s$, and $t$ the complementary prefix, such that $s = tu$. $Y'^n$ is included in the invisible matter tree rooted in $Y_{\bdot t}$, hence $Y^n = Y'^n = (Y_{\bdot t})_{u}^n = T_{u}^n$.\\ For uniqueness, notice that for any two words $u,v$ of length $n$, $T_{u}^n = T_{v}^n$ implies $u = v$.\\
Since $\reversepath{u}_n$ is the only word of length $n$ in $\{ml,mr\}^n$ to represent a valid path of $Y'$, its prefix of length $p$ is the only word of length $p$ in $\{ml,mr\}^p$ to represent a valid path of $Y'$, which we know is $\overline{u}_p$.
\hfill$\square$ \end{proof}

\begin{proposition}[Invisible matter extension]\label{prop:IMextension_app}
Consider $(F,R_\bullet)$ a continuous and shift-invariant dynamics over ${\cal Y}$. We have that $(F,R_\bullet)$ is invisible matter quiescent if and only if $(F,R_\bullet)$ can be continuously extended to ${\cal \overline{Y}}$ by letting $F(T_u) = T_u$ and $R_{T_u} = Id$ for any $u$ in $\{l,r\}^{-\N}$.
\end{proposition}

\begin{proof}
Notice how, for all $Y$ and $u_{k+1}\in \{lm,rm\}^{k+1} \cap V(Y)$, we have that $Y_{u_{k+1}}^k = T_u^k$.\\
$[\Rightarrow]$. Let $(F,R_{\bullet})$ be a continuous, shift-invariant and invisible matter quiescent dynamics. Take  $u$ a  left-infinite word in $\{lm,rm\}^{- \N}$. 
Continuity of $F$ over ${\cal Y}$ states that for all $m$ there is an $n\geq m$ such that $(F(Y))^m=(F(Y^n))^m$. By invisible matter quiescence there is a $b$ such that for all $Z, p\in \{lm,rm\}^b, v \in \{lm,rm\}^*$, $R_Z(\bdot p v)=R_Z(\bdot p)R_{Z_{\bdot p}}(v)=R_Z(\bdot p )v$. Combining these, $(F(T_u^n))^m=(F(Z_{\bdot u_{b+n+1}}^n))^{m}=(F(Z_{\bdot u_{b+n+1}}))^{m}=(F(Z)_{R_Z({\bdot u_b})u_{n+1}})^{m}=T_u^m$. Hence, if we extend $F$ to ${\cal \overline{Y}}$ by $F(T_u)=T_u$, we get $(F(T_u))^m=(F(T_u^n))^m$, and so $F$ remains continuous. Similarly, continuity of $R_\bullet$ over ${\cal Y}$ states that for all $m$ there is an $n\geq m$ such that $R_Z^m=R_{Z^n}^m$. Again combining it with invisible matter quiescence, $R_{T_u^n}^m=R_{Z_{\bdot u_{b+n+1}}^n}^m=Id$. Hence, if we extend $R$ to ${\cal \overline{Y}}$ by $R_{T_u}=Id$, we get $R_{T_u}^m=R_{T_u^n}^m$, and so $R_\bullet$ remains continuous.\\
We can thus continuously extend $(F,R_{\bullet})$ by setting $F(T_u) = T_u$ and $R_{T_u} = Id$.\\
$[\Leftarrow]$. Conversely, no longer assume invisible matter quiescence, and suppose instead that $(F,R_{\bullet})$, when extended by $F(T_u) = T_u$ and $R_{T_u} = Id$, is continuous over $\overline{{\cal Y}}$. Since $\overline{{\cal Y}}$ is compact, $(F,R_{\bullet})$ is uniformly continuous (by the Heine–Cantor Theorem). Take $c$ such that for all $Y$, $a$ in $Y$, and $|a|$=1, we have $|R_Y(a)|\leq c$. Such a $c$ exists by Lemma 3 of \cite{ArrighiCayleyNesme}. Take $b$ such that, for all $Y$, we have
$$\bullet\ (F(Y))^c = (F(Y^b))^c\quad \bullet\ \dom\,{R_Y}^c\subseteq V(Y^b)\quad\bullet\ {R_Y}^c = {R_{Y^b}}^c $$
We prove, by induction, that $b+1$ is the bound for invisible matter quiescence. Indeed, our inductive hypothesis is that for all $p$ in $\{lm,rm\}^{b+1}$ and $w$ in $\{lm,rm\}^n$, we have $R_{Y_{\bdot p}}(w) = w$. The hypothesis holds for $n=0$, because a consequence of shift-invariance is that $R_Y(\varepsilon)=\varepsilon$ for any $Y$. Suppose it holds for some $n$. Take $a$ in $\{lm,rm\}$. We have $R_{Y_{\bdot p}}(wa) = R_{Y_{\bdot p}}(w) R_{Y_{\bdot pw}}(a)$. Since $|p|=b+1$, we have $Y_{\bdot pw}^b=T_{upw}^b$ for any left-infinite $u$ in $\{lm,rm\}^{- \N}$. By the choice of $c$ and $b$, $R_{Y_{\bdot pw}}(a) = R_{Y_{\bdot pw}}^c(a) = R_{Y_{\bdot pw}^b}^c(a) = R_{T_{upw}^b}^c(a) = R_{T_{upw}}^c(a) = Id^c(a) = a $. Putting things together, we have $R_{Y_{\bdot p}}(wa) = wa$.
\hfill$\square$ \end{proof}

\begin{theorem}
An IMCGD can be extended into a vertex-preserving invisible matter quiescent CGD over $\overline{\cal Y}$.
\end{theorem}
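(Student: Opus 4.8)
The plan is to derive the statement from Proposition~\ref{prop:IMextension} and then check, by a short case analysis at the newly added points of the closure, that the resulting extension still satisfies all the CGD axioms together with vertex-preservation and invisible matter quiescence. Let $(F,R_\bullet)$ be an IMCGD. By Definition~\ref{def:IMcausal} it is a CGD over ${\cal Y}$, hence in particular continuous and shift-invariant over ${\cal Y}$, and it is invisible matter quiescent; so Proposition~\ref{prop:IMextension} applies and provides a continuous extension of $(F,R_\bullet)$ to $\overline{\cal Y}$, given by $F(T_u)=T_u$ and $R_{T_u}=Id$ for every left-infinite word $u\in\{lm,rm\}^{-\N}$. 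By Theorem~\ref{compact closure characterization}, $\overline{\cal Y}={\cal Y}\cup\{T_u\}$, so this genuinely defines the dynamics on all of $\overline{\cal Y}$.

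It then remains to check that this extension is shift-invariant, bounded and vertex-preserving, and that it is invisible matter quiescent. Vertex-preservation and invisible matter quiescence are immediate: $R_Y$ is bijective for $Y\in{\cal Y}$ by hypothesis and $R_{T_u}=Id$ is bijective; and invisible matter quiescence is a condition quantified only over the graphs $X_{\bdot s}$, all of which lie in ${\cal Y}$, where the extension coincides with $(F,R_\bullet)$, so the old bound still works. Boundedness is almost as easy: if $b$ witnesses boundedness over ${\cal Y}$, then at a point $T_u$ we have $\textrm{Im}(R_{T_u})=V(T_u)$, so any $w'\in F(T_u)$ decomposes as $w'.\varepsilon$ with $w'\in\textrm{Im}(R_{T_u})$ and $\varepsilon\in F(T_u)_{w'}^{b}$; hence $b$ remains a valid bound over all of $\overline{\cal Y}$. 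For shift-invariance I split into two cases. On $Y\in{\cal Y}$ the relevant shifts $Y_w$ again lie in ${\cal Y}$ (a ${\cal Y}$-graph always carries its visible core, and shifting only moves the pointer), so the two shift-invariance identities $F(Y_w)=F(Y)_{R_Y(w)}$ and $R_Y(w.v)=R_Y(w).R_{Y_w}(v)$ are precisely those of the original IMCGD. On $Y=T_u$, every shift $(T_u)_w$ is once more an everywhere-infinite binary tree pointed at one of its vertices, hence some $T_{u'}\in\overline{\cal Y}\setminus{\cal Y}$; here one uses that $\overline{\cal Y}$ is stable under finite shifts (which follows by approximating the disk $(Y'_w)^r$ by $(Y(r+|w|)_w)^r$ with $Y(r+|w|)\in{\cal Y}$) together with the characterization of Theorem~\ref{compact closure characterization}. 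Since $F$ and $R_\bullet$ act trivially on all such graphs, both identities then collapse to $(T_u)_w=(T_u)_w$. Assembling these checks, the extension is a shift-invariant, continuous, bounded, vertex-preserving, invisible matter quiescent dynamics over $\overline{\cal Y}$, as claimed.

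I expect no serious obstacle: the substance was already isolated in Proposition~\ref{prop:IMextension} (which in turn rests on the compactness of $\overline{\cal Y}$ and the Heine--Cantor theorem), and what is left is bookkeeping. The one step that deserves a little care is the shift-invariance verification at the points $T_u$, specifically the observation that a shift of such a graph is again one of the $T_{u'}$ exhausting $\overline{\cal Y}\setminus{\cal Y}$ --- intuitively obvious from the tree structure, but the place where a careless argument could slip. As an aside, since $\overline{\cal Y}$ is the closure of ${\cal Y}$ and the restriction of $F$ to ${\cal Y}$ is continuous into a Hausdorff space, this continuous extension is in fact the unique one, so nothing is really being chosen.
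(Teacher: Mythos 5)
Your proposal is correct and follows essentially the same route as the paper: extend by $F(T_u)=T_u$, $R_{T_u}=Id$ on the limit points, invoke Proposition~\ref{prop:IMextension} for continuity, and verify the remaining properties directly (your treatment of shift-invariance and boundedness is just a more explicit spelling-out of the paper's one-line checks).
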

\begin{proof}
Consider $(F,R_\bullet)$ an IMCGD. Extend it to $\overline{\cal Y}$ by setting $F(T_u) = T_u$ and $R_{T_u} = Id$. By Th.~\ref{th:IMextension} the extension is still continuous. It is still vertex-preserving since $R_{T_u}$ is bijective. Therefore it is still bounded. It is still shift-invariant since $R_{T_{u}}(vw)=R_{T_{u}}(v)R_{T_{uv}}(w)=vw$. 
\hfill$\square$ \end{proof}

\begin{corollary} \label{cor:IMQ_compactness}
An IMCGD $(F,R_{\bullet})$ is {\em uniformly continuous}. I.e.  for all $m$, there exists $n$, such that for any $Y$, 
$$\bullet\ (F(Y))^m=(F(Y^n))^m\qquad \bullet\ \dom\,R_{Y}^m\subseteq V(Y^n)\textrm{ and }R_{Y}^m=R_{Y^n}^m$$
where $R_{Y}^m$ denotes the partial map obtained as the restriction of $R_Y$ to the co-domain $(F(Y))^m$, using the natural inclusion of $(F(Y))^m$ into $F(Y)$.
\end{corollary}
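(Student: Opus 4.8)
The plan is to derive the corollary as a direct consequence of Theorem~\ref{compact closure characterization}, Proposition~\ref{prop:IMextension} and the classical Heine--Cantor theorem, exactly as in the $[\Leftarrow]$ direction of the proof of Proposition~\ref{prop:IMextension}. The point is that uniform continuity is the genuine content of ``bounded speed of propagation'': ordinary continuity of an IMCGD only gives, for each $X$ and each $m$, some $n$ depending a priori on $X$; we must upgrade this to a single $n$ valid for all $X$ simultaneously.

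First I would invoke the preceding theorem, which extends any IMCGD $(F,R_\bullet)$ to a continuous, vertex-preserving, invisible-matter-quiescent CGD over $\overline{\cal Y}$, by setting $F(T_u)=T_u$ and $R_{T_u}=\mathrm{Id}$ for every left-infinite word $u\in\{lm,rm\}^{-\N}$. Second, I would recall that $\overline{\cal Y}$ is compact: this is the content of Definition~\ref{def:closure} together with Theorem~\ref{compact closure characterization}, the compactness being inherited from that of ${\cal X}'$ since $\overline{\cal Y}$ is by construction a closed subset of the compact space ${\cal X}'$. Third, since the extended $(F,R_\bullet)$ is continuous on the compact metric space $\overline{\cal Y}$, the Heine--Cantor theorem gives that it is \emph{uniformly} continuous there: for every $m$ there is a single $n$ such that for all $Y\in\overline{\cal Y}$ one has $F(Y)^m=F(Y^n)^m$, and likewise $\dom R_Y^m\subseteq V(Y^n)$ and $R_Y^m=R_{Y^n}^m$ (here one spells out uniform continuity with respect to the metric on $\overline{\cal Y}$, whose balls around a graph are precisely its disks, so that ``close in the metric'' means ``agree on a large disk''; the joint statement for $F$ and $R_\bullet$ follows since the metric on dynamics bundles both components). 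Finally, I would restrict this uniform bound $n$ back to ${\cal Y}\subseteq\overline{\cal Y}$, and then further to the elements of ${\cal Y}$ that are identifications of genuine $X\in{\cal X}$ as in Definition~\ref{def:IMGraphs}, which is exactly the quantifier ``for any $X$'' in the statement; the three displayed equalities are literally the ones already appearing in Definition~\ref{def:continuitymodulo}, now with $n$ independent of $X$.

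I do not expect any real obstacle here: the work has all been done in establishing that $\overline{\cal Y}$ is a compact closure and that IMCGD extend continuously to it. The only point requiring a line of care is the translation between the topological formulation of (uniform) continuity in the metric space $\overline{\cal Y}$ and the combinatorial ``disk agreement'' formulation in which the corollary is phrased --- i.e.\ checking that the modulus of uniform continuity supplied by Heine--Cantor can be taken of the form ``agreement on disks of radius $n$ forces agreement on disks of radius $m$,'' for both $F$ and $R_\bullet$ simultaneously. This is routine given the definition of the metric on pointed graphs modulo used in \cite{ArrighiCayleyNesme}, and is already implicitly used in the proof of Proposition~\ref{prop:IMextension}.
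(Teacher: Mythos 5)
Your proposal is correct and follows exactly the paper's own argument: extend the IMCGD to the compact closure $\overline{\cal Y}$ via the preceding theorem, then apply the Heine--Cantor theorem to upgrade continuity to uniform continuity, and restrict back to ${\cal Y}$. The extra care you take in translating the metric modulus of uniform continuity into the disk-agreement formulation is a sound elaboration of what the paper leaves implicit.
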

\begin{proof}
Extend the IMCGD to the compact metric space $\overline{\cal Y}$ and apply Heine's Theorem to find that continuity implies uniform continuity.
\hfill$\square$ \end{proof}

\section{IMCGDs by NCGDs}

\begin{lemma}\label{lem:soundness_name_map}
For all $u.t \in V(G).\bint$, there is at most one $v$ and $s$ such that $S(u.t)=S'(v.s)$ where $S$ and $S'$ results from the application of Lemma \ref{lem:sigmas}.
\end{lemma}

\begin{proof}
 Let us suppose there exists $v_1.s_1$ and $v_2.t_2$ such that $S_1(u.t)=S_1'(v_1.s_1)$ and $S_2(u.t)=S'_2(v_2.s_2)$. By construction of $S_1$ and $S_2$, we have that there exists $t'\in \bint$ such that $S_1(u.t)t'=S_2(u.t)$ or $S_1(u.t)=S_2(u.t)t'$. Without loss of generality suppose that $S_1(u.t)t'=S_2(u.t)$, therefore we have the following equalities: 
 $$S'_1(v_1.s_1)t'=S_1(u.t)t'=S_2(u.t)=S'_2(v_2.s_2)$$
 Then, remark that for all $u$, for all $S$, $u.t$ is a prefix of $S(u.t)$, so we can rewrite the precedent equality as $v_1.s_1s_1't'= v_2.s_2s_2'$ with some $s_1',s_2' \in \bint$. As $F(G)$ is a well-named graph, this implies $v_1=v_2$. $S_1=S_2$ and $S_1'=S_2'$ which gives us that $S_1'(v_1.s_1)=S_1'(v_1.s_2)$. For all $u$, $\nu_u$ is injective, so $S_1$ is injective and we have $v_1.s_1=v_1.s_2$. Again, $F(G)$ is a well-named graph, so $s_1=s_2$.
\hfill$\square$ \end{proof}
\begin{lemma}\label{lem:im_quiescent}
 The induced dynamics of an NCGD is invisible matter quiescent.
\end{lemma}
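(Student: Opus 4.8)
\emph{Proof plan.} We must produce a uniform bound $b$ such that, once the pointer of an invisible matter graph sits at depth at least $b$ inside an invisible matter tree, the successor map $R$ acts trivially on every further path of $\IM$. Unfolding Definition \ref{def:induced_dyn}, a configuration $X_{\bdot s}$ is the pointed graph modulo $\Feta{G}{u_0.\eta(mm\cdot s)}$ for some $G$ in ${\cal W}$, some visible vertex $u_0$ and some $s$ in $\IM$; following a path $t$ in $\IM$ from there reaches the vertex named $u_0.\eta(mm\cdot s\cdot t)$, and $R_{X_{\bdot s}}(t)$ is by definition the path in $\overline{F}(G)^\vartriangle$ joining $\overline{R}_G(u_0.\eta(mm\cdot s))$ to $\overline{R}_G(u_0.\eta(mm\cdot s\cdot t))$. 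As $|\eta(mm\cdot s)|\geq|s|$, it thus suffices to show: there is a uniform $b$ such that, for every $G$ and visible $u_0$, the restriction of $\overline{R}_G$ to the part of $u_0$'s invisible matter tree hanging below depth $b$ is an isomorphism of rooted trees onto a subtree of $\overline{F}(G)^\vartriangle$, i.e. it commutes with $lm/rm$ descent. Granting this, the path joining the two images equals the path $t$ joining the two originals and stays within invisible matter, so $R_{X_{\bdot s}}(t)=t$.

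To get the bound I would first confine the splits and mergers performed by $\overline{F}$ to bounded depth: by continuity of $\overline{F}$ (uniform by compactness), name-preservation, and an argument as in Lemma \ref{lem:nearby_intersectants}, there is a uniform $k$ so that every intersection $u.a=u'.a'$ between a vertex $u$ of $G$ and a vertex $u'$ of $\overline{F}(G)$ has $|a|,|a'|\leq k$, and each vertex of one graph intersects only boundedly many of the other. Feeding this into the algorithm of Lemma \ref{lem:sigmas} shows its loop halts after a uniformly bounded number $N$ of steps, and that every $\sigma_z$ capable of influencing $\overline{R}_G(u_0.w)$ has $z$ at depth at most $k+N$ inside $u_0$'s name tree (one $\sigma$ changes a name's depth by at most one). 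Writing $K=k+N$, the maps $S,S'$ attached to any pair $(u_0,u_0')$ then act as the identity on every name with a long suffix, except along the finitely many shallow spines $c.l.r^*$ rooted at depth $\leq K$, on which they act as a bounded, spine-uniform shift.

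It remains to check that a bounded spine-uniform shift, composed with boundedly many others rooted at depth $\leq K$, is --- below depth $K+1$ --- an isomorphism of the underlying (un-named) rooted trees, hence commutes with $\IM$-navigation. This needs care because, read on names, a single $\sigma_z$ plainly fails to commute with appending $l$ or $r$ along its spine $z.l.r^*$; what rescues it is that in $\overline{F}(G)^\vartriangle$ the shifted name $z.l.r^{n+d}$, once re-read through the naming map $\eta$ of whichever vertex of $\overline{F}(G)$ now hosts it, sits at exactly the graph position the navigation expects --- so the correspondence is an isomorphism of the un-named trees, which is all that $\IM$-paths can detect. Choosing $b=K+2$ then forces every invisible matter node of depth at least $b$ to lie strictly inside one such clean subtree, giving $R_{X_{\bdot s}}(t)=t$ for all $t$: invisible matter quiescence holds with bound $b$.

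The step I expect to be the real obstacle is this last one: a single $\sigma_z$ slides names arbitrarily far down its spine, so one must argue the slide is nonetheless rigid at the graph level, and that composing boundedly many such slides --- whose roots may be nested --- cannot destroy rigidity below a controlled depth, all while tracking the one-place index shift $\eta$ introduces by squeezing the extra root node of each invisible matter tree onto its right spine. The earlier ingredients --- the uniform continuity radius and the intersection-depth bound $k$, which rest on the same compactness powering Lemma \ref{lem:nearby_intersectants} --- I regard as routine.
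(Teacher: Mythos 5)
Your overall strategy is the same as the paper's: first use uniformity (finitely many disks of a given radius up to renaming, plus renaming-invariance of the construction of $S,S'$) to bound, uniformly in $G$, the number and depth of the $\sigma$'s composing the maps of Lemma~\ref{lem:sigmas}; then argue that beyond that depth $\overline{R}_G$ acts rigidly on the invisible-matter trees. Where you differ is in what ``rigidly'' must mean, and you are right to flag this as the crux. The paper's own write-up reduces quiescence to the name-level identity $\overline{R}(v.st)=\overline{R}(v.s)t$ for $|s|$ large, but that identity is false along spines: already for the elementary split $S=\sigma_u$, $S'=\mathrm{Id}$ one has $\overline{R}(u.l.r^{n})=u.l.r^{n+1}$ while $\overline{R}(u.l.r^{n}.l)=u.l.r^{n}.l\neq\overline{R}(u.l.r^{n}).l$, for every $n$. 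What is true---and what you correctly identify as the statement actually needed---is the graph-level claim: the name $u.l.r^{n}$ sits at position $mm.lm.(rm)^{n}$ under $u$ in $G^\vartriangle$, its image $(u.l).r^{n+1}$ sits at position $mm.(rm)^{n}$ under $u.l$ in $\overline{F}(G)^\vartriangle$, and the unit shift produced by $\sigma_u$ along its spine is exactly cancelled by the unit shift that $\eta$ introduces on the right spine of the freshly rooted tree; off the spine both names and positions are untouched. Hence the $\{lm,rm\}$-path between images equals the path between sources, which is quiescence.

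So the ``real obstacle'' you point to is not an artifact of your approach: it is the actual mathematical content of the lemma, and the published argument glosses over it. Your plan for closing it---check that each $\sigma_z$, read through the two naming maps $\eta$, is an isomorphism of the underlying un-named rooted trees, and that a composition of boundedly many such $\sigma$'s with roots at uniformly bounded depth acts as the identity on $\{lm,rm\}$-navigation below a computable depth $b$---is the right one and reduces to a finite verification once the bound on $S,S'$ is in place; it should be carried out explicitly, including the case where the image of the pointer lands on the right spine of a new tree. Two smaller remarks: the uniformity in your first half should not be attributed to compactness of ${\cal W}$ (no metric is ever placed on ${\cal W}$) but to the renaming-invariance argument above; and the roots of the $\sigma$'s lie at depth at most the number of iterations below $V(G)$, so your bound $K$ can be taken as $N$ rather than $k+N$. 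Neither affects the structure of the argument.
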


\begin{proof}
We want to prove there exists a bound $b$ such that, for all $Y_{\bdot s} \in {\mathcal Y}$, and for all $t$ in $\{lm,rm\}^*$, we have $|s|\geq b \implies R_{Y_{\bdot s}}(t) = t $. For $R$ induced by $\overline R$ this is implied by: for all $G$, for all $v\in V(G)$ and $s,t$ in $\bint$ then $|s|\geq b$ implies $\overline{R}(v.st)= \overline R(v.s)t$.

First, let us prove there exists a bound $b \in \N$ such that for all $v \in V(G)$ and $v' \in V(F(G))$, $v \intersectn v' \neq \emptyset$ implies there exists $r,r' \in \bint$ such that $|r|\leq b$, $|r'| \leq b$ and $S(v.r)=S'(v'.r')$. 

This comes from the fact that $\overline{R}$ is $S'^{-1}\circ S$ with $S$ and $S'$ finite compositions of sigmas, whose overall length can be bounded by $b$ as proven in Lem. \ref{lem:sigmas}. Indeed, $S$ and $S'$ are computed from disks of radius $r$, and we have proven in the soundness of Def. \ref{def:induced_dyn} that the length of $S$ and $S'$ is invariant under renamings. Because there is a bounded number of disks of radius $r$, we take $b$ to be the maximum of these lengths. 

By definition of $S$ and $S'$, we also have that for all $t\in \bint$, $S(v.rt)=S'(v'.r't)$ and so $\overline R(v.rt)=v'.r't$.

Let $v \in V(G)$, and $s \in \bint$ such that $|s|\geq b$. As $F$ is name-preserving, there exists $v'$ such that $v.s \intersectn v' \neq \emptyset$. As stated above, there exists $r,r'\in \bint$ such that $s=rs_2$, $|r| \leq b$, $|r'| \leq b$, $S(v.r)=S'(v'.r')$ and so $\overline R(v.r)=v'.r'$. But we also have that $S(v.rs_2t)=S'(v'.r's_2t)$, therefore $\overline{R}(v.st)= \overline{R}(v.rs_2t)=v'.r's_2t=\overline{R}(v.r)s_2t$. For $t=\varepsilon$, we have that $\overline{R}(v.s)=\overline{R}(v.r)s_2$, which gives us to $\overline{R}(v.st)=R(v.r)s_2t=\overline{R}(v.s)t$.
\hfill$\square$ \end{proof}

\begin{lemma}\label{lemma:induced_dynamic_continuity}
 The induced dynamics of an NCGD is continuous.
\end{lemma}

\begin{proof}
First we prove that for all $Y \in \mathcal{Y}$, for all $m \in \N$ there exists $n \in \N$ such that $(F(Y))^m = (F(Y^n))^m$.
Let $Y \in \mathcal{Y}$, $G \in \mathcal W$ and $v \in V(G)$ such that $Y=\Feta{G}{v}$. Notice the following equalities: $$ \anonymous{(G^\vartriangle,v)^m}= \anonymous{((G^\vartriangle)^m_v,v)} = \anonymous{((G^{m\vartriangle}_v)^m_v,v)}$$
We have by definition of $F$ that $(F(Y))^m=F\anonymous{(G^\vartriangle,v)^m}=\anonymous{(F(G)^\vartriangle,R(v))^m}$. 
Using the precedent remark, we have that $(F(Y))^m=\anonymous{((\overline (F(G))^{m\vartriangle}_v)^m_v,v)}$. By continuity of $\overline F$, for $v'=v$ we have that there exists $n\in \N$ such that: $$(F(Y))^m=\anonymous{((\overline{F}(G^n_v))^{m\vartriangle}_v)^m_v,\overline R_G(v))}=\anonymous{(\overline F(G^n_v)^\vartriangle),\overline R_G(v))^m}=(F(Y^n))^m$$

Now let us focus on proving that for all $m \in \N$ there exists $n \in \N$, such that $\text{dom } \overline R^m_{\Feta{G}{v}} \subseteq V(\anonymous{(G^\vartriangle,v)^n})$ and $\overline R^m_{\Feta{G}{v}}=\overline R^m_{\anonymous{(G^\vartriangle,v)^n}}$.

Let $u.s \in V(G).\bint$ and $u'.s' \in V(G)$ such that $R(u.s) = u'.s'$. $S(u.s)=S'(u.s')$ so there exists $t,t' \in \bint$ such that $u.st=u'.s't'$. As $u \intersectn u' \neq \emptyset$ and by continuity of $\overline F$ we have that for all $m\in \N$ there exists $n \in \N$ such that if $u' \in (\overline F(G))^m_{v'}$ then $u \in V(G^n_v)$.

Let $b$ the invisible matter quiescence bound. As stated in Lemma \ref{lem:im_quiescent}, $S$ and $S'$ are a bounded composition of sigmas, therefore for all $b \in \N$ there exists $b'\in \N$ such that $|s|\leq b$ implies $|s'|\leq b'$. If $|s|>b$, then there exists $s_1,s_2 \in \bint$ such that $s=s_1s_2$, $|s_1|\leq b$ and $R(u.s_1s_2)=R(u.s_1)s_2$. Let $s_1'$ such that $u.s_1 = u'.s_1'$, then $s_1'\leq b'$. We have :
$$u's'=R(u.s)=R(u.s_1s_2)=R(u.s_1)s_2=u'.s_1's_2$$ $|s_1'|\leq b'$ and $s_2 \leq |s|$ therefore $|s'|\leq b'+|s|$. Summarizing, we have that for all $d$, there exists a $d'$ such that $|s| \leq d$ implies $|s'| \leq d'$. But the construction of $R$ is symmetrical, therefore we also have that for all $d'$, there exists $d \in \N$ such that $|s'| \leq d'$ implies $|s| \leq d$. 

Let $u.s \in V(G)$ and $u'.s' \in V(G)$ such that $R(u.s) = u'.s'$. If $u.s \in V(G^n_v)$ then $u \in V(G^n_v)$ as $v$ is in the visible matter and $|s|\leq n$. As stated above, there exists a bound $m$ such that $u \in V{ \anonymous{(G^\vartriangle,v)^m}}$ and there is a bound $d$ such that $|s| \leq d$ therefore $u.s \in V(\anonymous{(G^\vartriangle,v)^{m+d}})$.
Because $u \in V(\anonymous{(G^\vartriangle,v)^{m+d}})$, and $\overline R(u.s)$ is only computed from $u$ we also have that $\overline R_{\Feta{G}{v}}(u.s)=\overline R_{\anonymous{(G^\vartriangle,v)^n}}(u.s)$. 

\hfill$\square$ 
\end{proof}

\end{document}